\newtheorem{theorem}{Theorem}[section]
\newtheorem{proposition}[theorem]{Proposition}
\newcommand{\dgm}{\operatorname{dgm}}
\newcommand{\R}{\mathbb{R}}
\newcommand{\calI}{\mathcal{I}}
\newcommand{\e}{\varepsilon}
\renewcommand{\epsilon}{\varepsilon}
\title{\bf Robust Zero-crossings Detection in Noisy Signals using Topological Signal Processing}
\author{Sunia Tanweer\thanks{tanweer1@msu.edu} \and Firas A.~Khasawneh\thanks{khasawn3@msu.edu}  \and  Elizabeth Munch\thanks{ muncheli@msu.edu} }
\begin{document}
\maketitle

\par\noindent\rule{\textwidth}{0.4pt}
\begin{center}
\section*{Abstract}
\label{sec:abstract}    
\end{center}

In this article, we explore a novel application of zero-dimensional persistent homology from Topological Data Analysis (TDA) for bracketing zero-crossings of both one-dimensional continuous functions, and uniformly sampled time series. We present an algorithm and show its robustness in the presence of noise for a range of sampling frequencies. In comparison to state-of-the-art software-based methods for finding zeros of a time series, our method generally converges faster, provides higher accuracy, and is capable of finding all the roots in a given interval instead of converging only to one of them. We also present and compare options for automatically setting the persistence threshold parameter that influences the accurate bracketing of the roots.

\vspace{5mm}

\section{Introduction}
\label{sec:intro}

Zero--crossings are the locations where a continuous function changes its sign. Determining these zero-crossings is a classical but significant problem in many fields such as engineering, medicine, and physical sciences. Specifically, zero-crossings have been used for signals' frequency determination \cite{Duric2005, Djuric2008, Friedman1994}, estimation of muscle fatigue \cite{Masuda1982}, distinction between neutron and gamma \cite{Bayat2012}, detection of short circuit faults in induction motors \cite{Ukil2011}, DC motor speed control \cite{Pindoriya2016}, speed measurement of land vehicles \cite{Misans2012}, strain estimation for elastography \cite{Srinivasan2003}, and even for recognition of hand-written characters \cite{Raju2006}. 

Although the zero-crossings problem of finding where the function $f(x)=0$ is significant and has an extensive history, there have been a number of zero-bracketing methods proposed in literature each with one's own set of drawbacks. These methods are classified into open bracketing schemes---such as Newton-Raphson, fixed point, and secant methods---and closed bracketing schemes, such as bisection and regula-falsi \cite{Antia2002}. The most fundamental closed method of bracketing a zero is the bisection method which, albeit slow, promises convergence for a continuous function $f(x)$ given an initial interval $[a, b]$ such that $f(a)f(b) < 0$. Another popular bracketing method is the Regula Falsi \cite{Antia2002} which shows a faster convergence than bisection method, except in the fatal cases of a function with a flat or steep slope. Various improvements have been made using these algorithms as the foundation. One such modification is seen in Suhadolnik's~\cite{Suhadolnik2012} combined method of switching between Bisection and Regula Falsi for bracketing roots of nonlinear equations. In that method, Suhadolnik uses quadratic interpolation to fit the known two-points of the function and the estimate from Bisection/Regula Falsi on a parabola. Other such methods of root bracketing have been presented by Alojz~\cite{Suhadolnik2013}, Razbani~\cite{Razbani2015}, Kavvadias~\cite{Kavvadias2005}, Kodnyanko~\cite{Kodnyanko2021}, Badr et al.~\cite{Badr2022}, Hussein et al.~\cite{Hussein2022}, and Daponte~\cite{daponte1995}. Somewhat more exotic algorithms for zero-detection in non-linear systems have been devised by Sadrpour et al.~\cite{Sadrpour2013}, Fried~\cite{Fried2013}, Li et al.~\cite{Li2020} and Kim et al.~\cite{Kim2020}. While some of these traditional methods require the ability to determine the function's derivative, others demand an intelligent initial guess of the root for a reasonable convergence rate.  Regardless of their particular pros and cons, all of these algorithms require the expression of the function, and cannot provide a root in case of a discrete time series sampled from an unknown function. 

Multiple successful ventures in engineering have been made for capturing the zero-crossings from time series by leveraging hardware elements such as diodes, comparators and filters \cite{Wall2003, Sreenivasan1983}. However, it is not always practical to build an electronic circuit for finding zero-crossings. In contrast, Molinaro and Sergeyev~\cite{Molinaro2001} developed an algorithmic, zero-finding approach---similar to Daponte et al.~\cite{daponte1995, Daponte1996}---based on estimating the Lipchitz constants of the signal. The algorithm, although fast in computation, only provides an estimate for the first zero-crossing in the domain, missing all the remaining zeros of the function. Furthermore, the algorithm requires the left boundary of the domain to hold a positive function value. 

Our work bypasses these grave shortcomings by presenting a novel approach for bracketing the zeros of a time series using 0-dimensional persistence, a tool from applied topology. The algorithm is fast, provides higher accuracy than comparable methods, and is capable of bracketing all the zeros of well-behaved signals. 

\section{Persistent Homology}
\label{sec:PH}

In this work, we will utilize the ideas of persistent homology, although it will be in an exceptionally simple case: namely 0-dimensional persistence for a point cloud in $\R$.
For this reason, we will focus only on this case and leave the interested reader to explore generalizations to higher dimensions \cite{Oudot2017,Dey2021,Munch2017}. 

The main idea of persistent homology is to encode the changing structure of a changing topological space.
In our case, the data will be a collection of points $P \subset \R$.
We can think of expanding intervals centered at each point $(a-\e/2,a+\e/2)$ for $a \in P$, and watching how the coverage of the real line changes as $\e$ is increased.
In particular, we are interested in the values of $\e$ when these intervals merge together to decrease the number of connected components.
The simplicity of $\R$ means that if we sort the values of $P = \{ a_1 < a_2 < \cdots < a_n\}$, connected components will merge at the values $a_{i+1} - a_{i}$.
Thus, we use the set $\dgm(P) = \{ a_{i+1} - a_{i} \mid i = 1,\cdots, n-1 \}$ to represent the changing connected components of the set of points.
We apologize to the informed reader since we are calling something a persistence diagram that does not quite fit with the literature. 
However, we feel we can be absolved as this is the collection of death times of the 0-dimensional persistence diagram in the true sense, and in this setting, all birth times are 0.

\begin{figure}[!t]
\centering
\includegraphics[width = .75\textwidth]{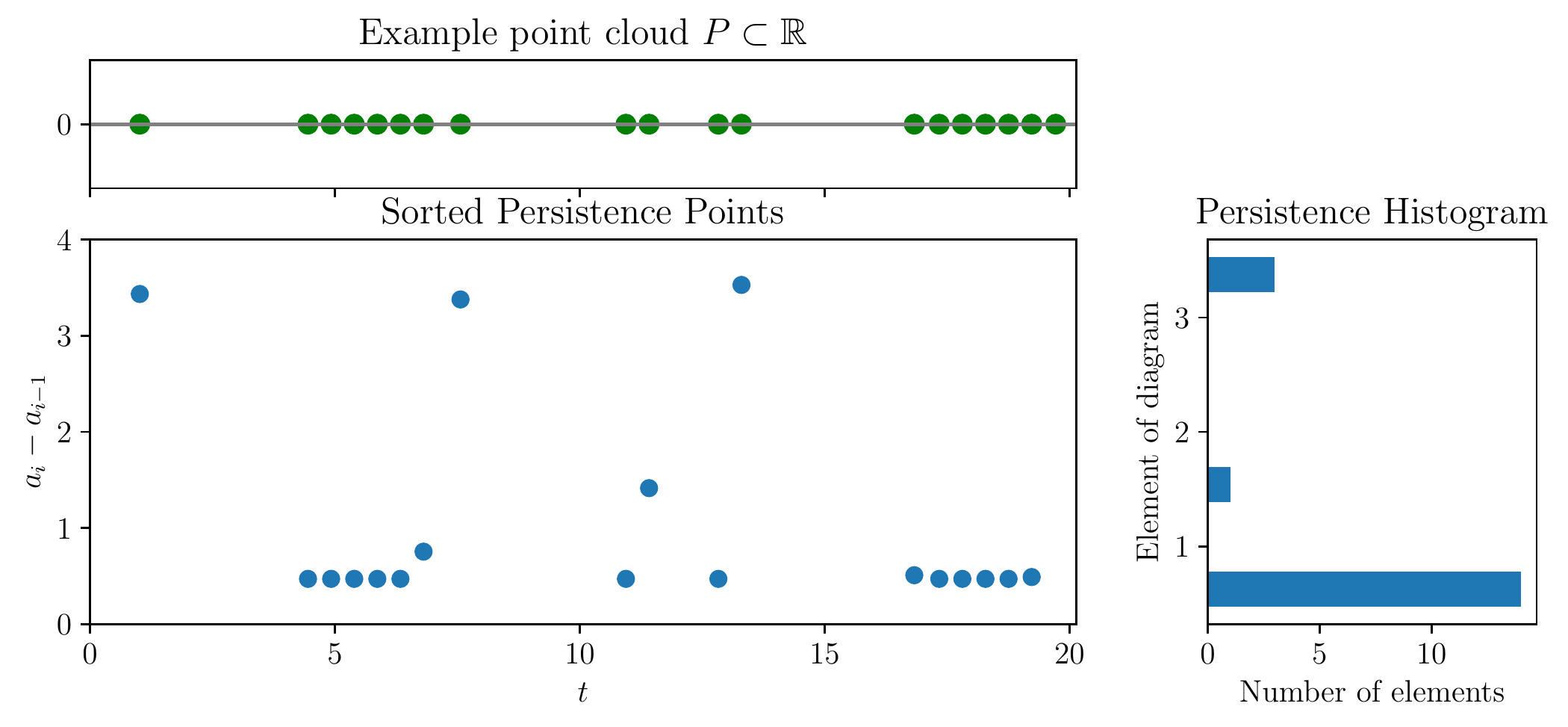}
\caption{An example point cloud in $\R$ is given at the top of the figure. The persistence diagram points $(a_i-a_{i-1})$ are given at the bottom left, drawn at the location of the $a_{i-1}$ value.  Note that high value points in this diagram occur at splits in the point cloud.
Then at right we show the histogram of (death-time) points in the persistence diagram.}
\label{fig:persistence1D}
\end{figure}
Consider the example of Fig.~\ref{fig:persistence1D}. 
An example point cloud $P = \{ a_1,\cdots,a_n\}$ is given in the top row. 
We can compute $\dgm(P)$, and visualize it one of two ways. 
First, because $\R$ has a natural total order, we can sort the points by the coordinate used to create it, thus giving the sorted points in the bottom left figure. 
If the sorted information is not necessary, we can also visualize this information in a persistence histogram, seen on the right panel.
We will often be interested in high persistence points; that is, the outlier values in $\dgm(P)$ above some threshold $\mu$. 
High persistence points in the sorted diagram correspond to locations in the point cloud where there is a wide space between adjacent points in the original point cloud.
We can use the histogram to determine $\mu$, looking for a large jump. \\
In the example of Fig.~\ref{fig:persistence1D},  a choice of threshold anywhere around $\mu = 2$ results in a labeling of three points as high persistence.
Three spaces between locations (generally, $k$ points in the diagram) in the point cloud corresponds to four (generally, $k+1$) clusters in the original data.
In the next section, we will use this idea of a persistence diagram on two 1-dimensional point clouds extracted from a signal to bound locations where we expect a zero crossing.

\section{Methods}
\label{sec:methods}
This section describes the computation method and provides an algorithm based on zero-dimensional persistent homology for detecting zero-crossings in a  noisy signal.
Assume we are given a discrete time series $x(t_0), \ x\left(t_{1}\right), \ \cdots,  x\left(t_{N}\right)$ with times $\left\{t_{0}<t_{1}<\cdots<t_{N}\right\}$.
We are interested in two point clouds in $\R$: the times when $x(t)$ is positive and negative respectively.
For technical reasons, we add the endpoints to each set.
Denote these
$P=\{t_i \mid f(t_i) > 0 \} \cup \{ t_0, t_N\} = \left\{p_{1}  < \cdots < p_{\ell}\right\}$
and
$Q=\{t_i \mid f(t_i) < 0 \} \cup \{ t_0, t_N\} = \left\{q_{1} < \cdots < q_{m}\right\}$.
We then compute the two diagrams as described in Sec.~\ref{sec:PH},
$\dgm(P)=\left\{p_{i}-p_{i-1} \mid i=1, \cdots, \ell-1\right\}$ and
$\dgm(Q)=\left\{q_{i}-q_{i-1} \mid i=1, \cdots, m-1\right\}$.
See Fig.~\ref{fig:pulse_with_pq} for a visualization of the notation.

\begin{figure}%
\centering
\includegraphics[width=3.0in]{./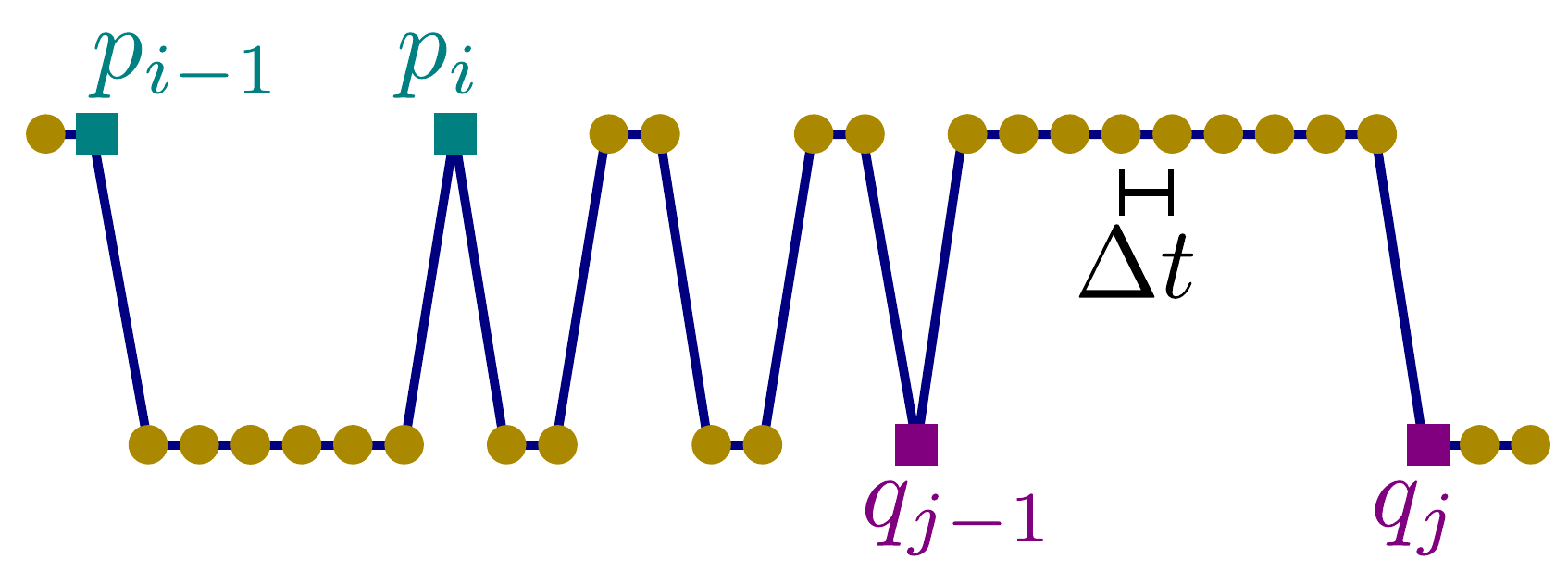}
\caption{A pulse signal showing the $p$ and $q$ notation. The $P$ points are those on the top row; the $Q$ on the bottom. For a threshold $\mu = 3\Delta t$, we see entries $(p_{i-1},p_i)$ from $\dgm P$ and $(q_{j-1},q_j)$ from $\dgm Q$.  
}
\label{fig:pulse_with_pq}
\end{figure}

Fix a threshold $\mu$.
Based on this threshold, we can extract the values in $\dgm(P)$ and $\dgm(Q)$ which are above $\mu$.
Namely, let $\dgm(P)_\mu = \{ a \in \dgm(P) \geq \mu\}$ and define $\dgm(Q)_\mu$ similarly.
With this setup, we are interested in the region of $\R$ remaining after we remove the intervals
\begin{equation*}
  \calI_P ={}
  \{ (p_{i-1},p_i) \mid  p_{i}-p_{i-1} \in \dgm(P)_\mu\}
\end{equation*}
and
\begin{equation*}
    \calI_Q =
  \{ (q_{i-1},q_i) \mid  q_{i}-q_{i-1} \in \dgm(Q)_\mu\}.
\end{equation*}
In particular, after sorting and interleaving these intervals, the remaining portion of the line can be written as
\begin{equation*}
  \R \setminus \bigcup_{I \in \calI_P \cup \calI_Q} I = \bigcup_{j} [r_{2j-1},r_{2j}]
\end{equation*}
where each $r_{2j-1}$ (resp.~$r_{2j}$) is the right (resp.~left) endpoint of an interval in either $\calI_P$ or $\calI_Q$.
We can further assume that these endpoints are sorted, so denote the endpoint set as $R = \{ r_1 < \cdots < r_{2i}<r_{2i+1}<\cdots < r_{2K} \}$.

The regions in $\calI_P$ and $\calI_Q$ are the portions of $\R$ where we are highly certain the function stays with the same sign, thus the remaining portion of the line is the locations where we expect to find our zero-crossings. 
For this reason, our algorithm returns the intervals $[r_{2i-1},r_{2i}]$ as the potential locations for crossings. 
Further, if the two $r$ points come one from $P$ and one from $Q$, we are quite certain there is a zero crossing between thanks to the intermediate value theorem. 

\begin{proposition}
  If the values are sampled from a continuous function $f$, then there is at least one zero-crossing in any interval with different endpoint types.
\end{proposition}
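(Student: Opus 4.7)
The plan is to reduce the claim directly to the intermediate value theorem. I would begin by unwinding what "different endpoint types" means in terms of the construction. An interval $[r_{2i-1}, r_{2i}]$ in the complement $\R \setminus \bigcup_{I \in \calI_P \cup \calI_Q} I$ has its left endpoint $r_{2i-1}$ as the right endpoint of some interval in $\calI_P \cup \calI_Q$, and its right endpoint $r_{2i}$ as the left endpoint of the next such interval. "Different endpoint types" then means one of these came from $\calI_P$ and the other from $\calI_Q$.

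Next, I would trace each endpoint back to a sample time whose sign is known. If $r_{2i-1}$ is the right endpoint of an interval $(p_{k-1}, p_k) \in \calI_P$, then $r_{2i-1} = p_k$, and by the definition of $P$ we have $f(r_{2i-1}) = f(p_k) > 0$. Symmetrically, if $r_{2i}$ is the left endpoint of an interval $(q_{j-1}, q_j) \in \calI_Q$, then $r_{2i} = q_{j-1}$, and $f(r_{2i}) = f(q_{j-1}) < 0$. The other case (left endpoint from $\calI_Q$, right from $\calI_P$) is completely analogous.

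With sample values of opposite sign at the two endpoints in hand, I would simply invoke the intermediate value theorem on the continuous function $f$ restricted to $[r_{2i-1}, r_{2i}]$ to conclude that there is at least one $t^\star \in [r_{2i-1}, r_{2i}]$ with $f(t^\star) = 0$.

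The one point I would pause on, and likely mention as a remark rather than a genuine obstacle, is the role of the artificially added endpoints $t_0$ and $t_N$: these are inserted into both $P$ and $Q$ for technical reasons, so they carry no sign information from $f$. For the IVT argument to go through cleanly, I need the endpoint $r_{2i-1}$ or $r_{2i}$ to correspond to an actual sample $p_k$ or $q_{j-1}$ whose sign comes from the condition $f(t_i) > 0$ or $f(t_i) < 0$, not from the padding. Verifying that the intervals with mixed endpoint types always use "genuine" (not padded) endpoints — or arguing that when the endpoint is $t_0$ or $t_N$ the interval is degenerate or a boundary case — is the only subtlety; everything else is an immediate application of IVT.
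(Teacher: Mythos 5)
Your proposal is correct and follows essentially the same route as the paper: identify the interval's endpoints as sample times where $f$ is known to be positive (from $P$) and negative (from $Q$), then apply the intermediate value theorem. Your extra remark about the padded endpoints $t_0,t_N$ is a subtlety the paper's proof silently skips (it is handled in the algorithm by flagging boundary intervals as uncertain), but it does not change the argument.
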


\begin{proof}
If the interval in question is $[a,b]$, without loss of generality we can assume that $a = f(t_a)$ is the higher value of a $P$-interval, and $b = f(t_b)$ is the lower bound of a $Q$-interval.
By definition of the $P$ and $Q$ points, this means that $a>0$ and $b<0$.
So, by the IVT, there is a $t \in [t_a,t_b]$ for which $f(a) = 0$.
\end{proof}

Less can be said in the case where the endpoints are of the same type.
In fact, the likely scenarios in simple cases are 0 or 2 zero-crossings, but any even number is possible.
For this reason, our algorithm (\Cref{alg:zero-cross}) returns a list of intervals potentially containing zero crossings, with an uncertainty flag for intervals either bounded on either side by the same type of endpoint, or coming from the boundary.

Once these intervals are found, one can choose any number of methods for coming up with a potential value for the crossing, such as bisecting each of the intervals or linearly interpolating them to find the root estimate. 
For our computations, we have used the average of each returned interval's bounds to estimate the root location.  
The full approach is summarized in \Cref{alg:zero-cross}.
\begin{algorithm}
 \caption{ Persistence algorithm for bounding zero-crossings.}
\label{alg:zero-cross}
\textbf{Data: } A time series $x\left(t_{1}\right), \ \cdots,  x\left(t_{N}\right)$ with times $\left\{t_{1}<\cdots<t_{N}\right\}$ and Persistence threshold $\mu \geq 0.$\\
\textbf{Result:} Intervals containing potential zero-crossings of the signal $x(t)$.
\\
Calculate
  $P=\left\{p_{i} \mid \text{sgn}(x\left(p_{i}\right)) >0\right\} \cup \{t_0, t_N \} $
and $Q=\left\{q_{i} \mid \text{sgn}(x\left(q_{i}\right)) < 0 \right\}\cup \{t_0, t_N \}$.\\
Sort the lists, denote as
    $P=\left\{p_{1}  < \cdots < p_{\ell}\right\}$
    and
    $Q=\left\{q_{1} < \cdots < q_{m}\right\}$. \\
Keep the pairs which correspond to persistence above $\mu$ in each diagram\\
  \qquad $\mathcal{I}_P=\left\{(p_i,p_{i+1}) \mid p_{i+1}-p_{i}\geq \mu,\,  i=1, \cdots, \ell-1\right\}$ and\\
  \qquad $\mathcal{I}_Q=\left\{(q_i,q_{i+1}) \mid q_{i+1}-q_{i}  \geq \mu,\, i=1, \cdots, m-1\right\}$.\\
Interleave and sort the end points of the intervals \\
\qquad $R = \{ r_0 < \cdots < r_{2i}<r_{2i+1}<\cdots < r_{2K-1} \} $\\
Augment $R$ with the endpoint values if they are not already included\\ %
\qquad $R = \{ t_0 = r_{-1} \leq r_0 < r_1 < \cdots < r_{2i}<r_{2i+1}<\cdots < r_{2K-1} \leq r_{2K} = t_N \} $\\ 
\textbf{return} Intervals 
$$\{ [r_{2i-1},r_{2i}] \mid r_{2i-1}\neq r_{2i},\;  i = 0, \cdots, K \}$$
 labeled with an uncertainty flag if the endpoints did not come from different types of intervals.
\end{algorithm}
\subsection{Example}
\begin{figure}[!t]
\centering
\includegraphics[width = .75\textwidth]{./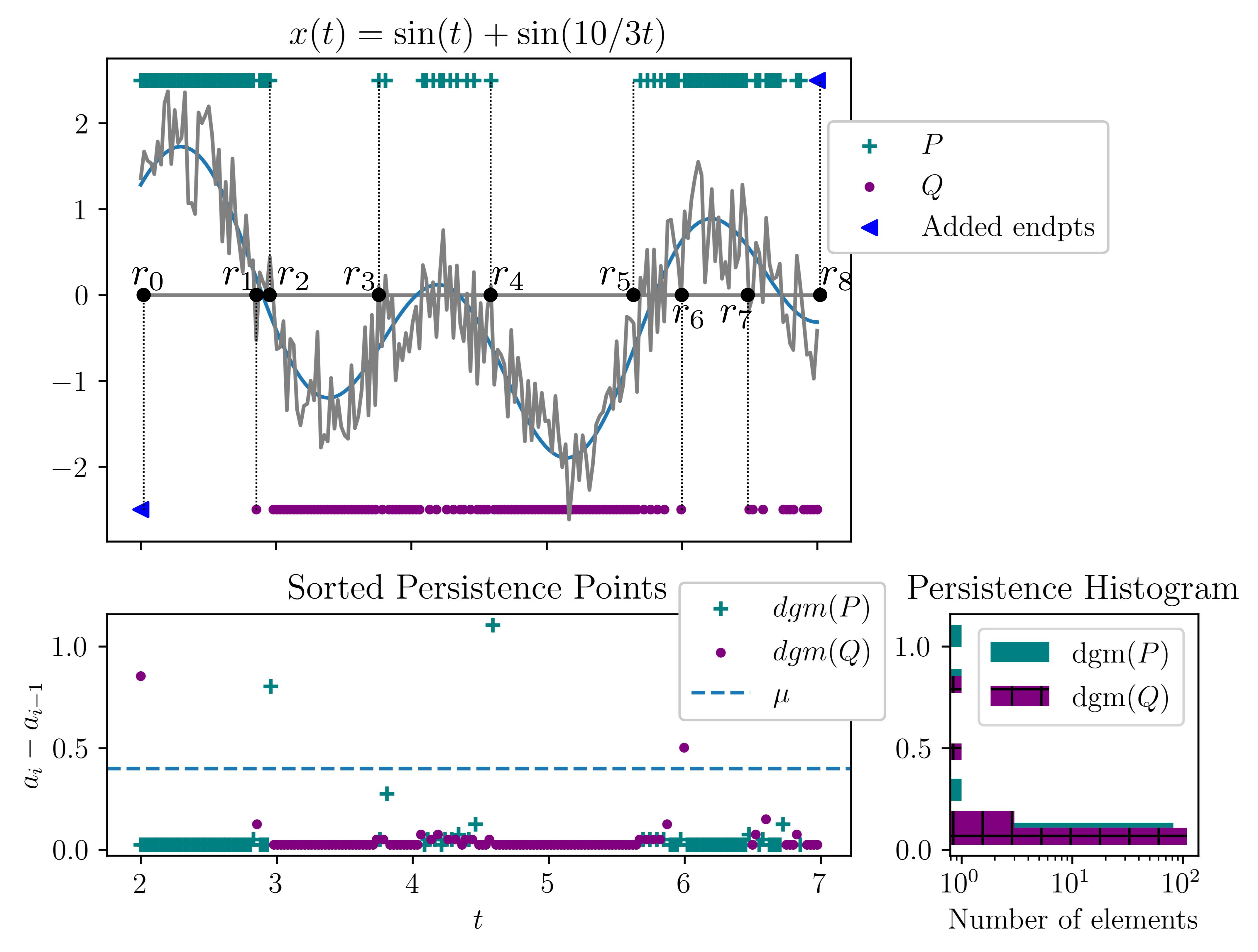}
\caption{An example using $x(t)=\sin (t)+\sin (10 / 3 t)$. The intervals returned by the algorithm are $[r_1,r_2]$ and $[r_5,r_6]$ (without a uncertainty flag); and  $[r_3,r_4]$ and $[r_7,r_8]$ (with an uncertainty flag). On the right, the persistence diagram is shown. The threshold was chosen to be $\mu = 0.4$, but different choices of $\mu$ will result in different zero-crossing intervals.}
  \label{fig:BasicSineExample}
\end{figure}
We demonstrate the algorithm on a multi frequency sine wave given by
\begin{equation}
\centering
\label{eq:demo}
x(t)=\sin (t)+\sin (10t / 3).
\end{equation}
shown in Fig.~\ref{fig:BasicSineExample}.
We include additive white noise in Eq.~\eqref{eq:demo}, and convert the resulting signal to a pulse wave. 
The locations of the $t$ values in $P$ and $Q$ are shown above and below the original signal. 
Note that the blue triangles at $t_{0}$ and $t_{N}$ are added to $P$ and $Q$ after the fact to deal with the boundary.

Below the signal, we have the full sorted persistence diagram for the example.
For $\mu = 0.4$ as represented by the dashed line, the resulting intervals are
\begin{align*}
  \dgm_\mu(P) =  \{(r_2,r_3), (r_4,r_5) \} \\
  \dgm_\mu(Q) =  \{(r_0,r_1), (r_6,r_7) \}
\end{align*}
and we can augment $R$ to include the endpoints $R = \{r_{-1} =r_0 < \cdots < r_7 \leq r_8\} $.
The intervals returned by the algorithm without an uncertainty flag, i.e., intervals whose end points are not both in $P$ or $Q$, are $[r_1,r_2]$ and $[r_5,r_6]$.
The remaining intervals returned with the uncertainty are $[r_3,r_4]$ and $[r_7,r_8]$.

Depending on the choice of threshold $\mu$, the algorithm will return different collections of intervals.
If we choose a slightly higher threshold of $\mu = 0.6$, the points $r_6$ and $r_7$ would not be included as potential bounds for zero crossings and so the interval $[r_5,r_8]$ would be provided instead with an uncertainty flag. 

\subsection[Choosing the threshold mu]{Choosing the threshold {$\mu$}}
\label{sec:PT}
\Cref{alg:zero-cross} utilizes the idea of the points bracketing a root having a higher persistence, a visual quantification of which is possible by plotting both $\dgm(P)$ and $\dgm(Q)$ together.
The algorithm presents the use of a number $\mu$ for setting this persistence threshold, but as seen in the examples, the resulting brackets are quite dependent on the choice of $\mu$.

If we are in particularly restrictive settings, there are some cases where determining outliers are particularly simple. 
The first would be a signal with no noise, for which the time difference between two uniformly sampled points, $1\cdot dt$, is an error-free persistence threshold. 
The second case pertains to the availability of information on the number of roots in the interval. 
For $n$ roots in the interval, $n+1$ highest points from the persistence diagram represent the required brackets.

However, when we do not have such nice input data, we must find other methods of setting $\mu$ mathematically. 
Owing to the low frequency of appearance and quantifiable difference in high persistence points in these settings, they may be treated as outliers in the data-set. 
Consequently, statistical and machine learning outlier detection methods can be applied to identify them. 
Here, we give two options for detecting the high persistence points which will be used in our experiments in the next section. 
\begin{figure}%
\centering
\includegraphics[width=.25\textwidth]{./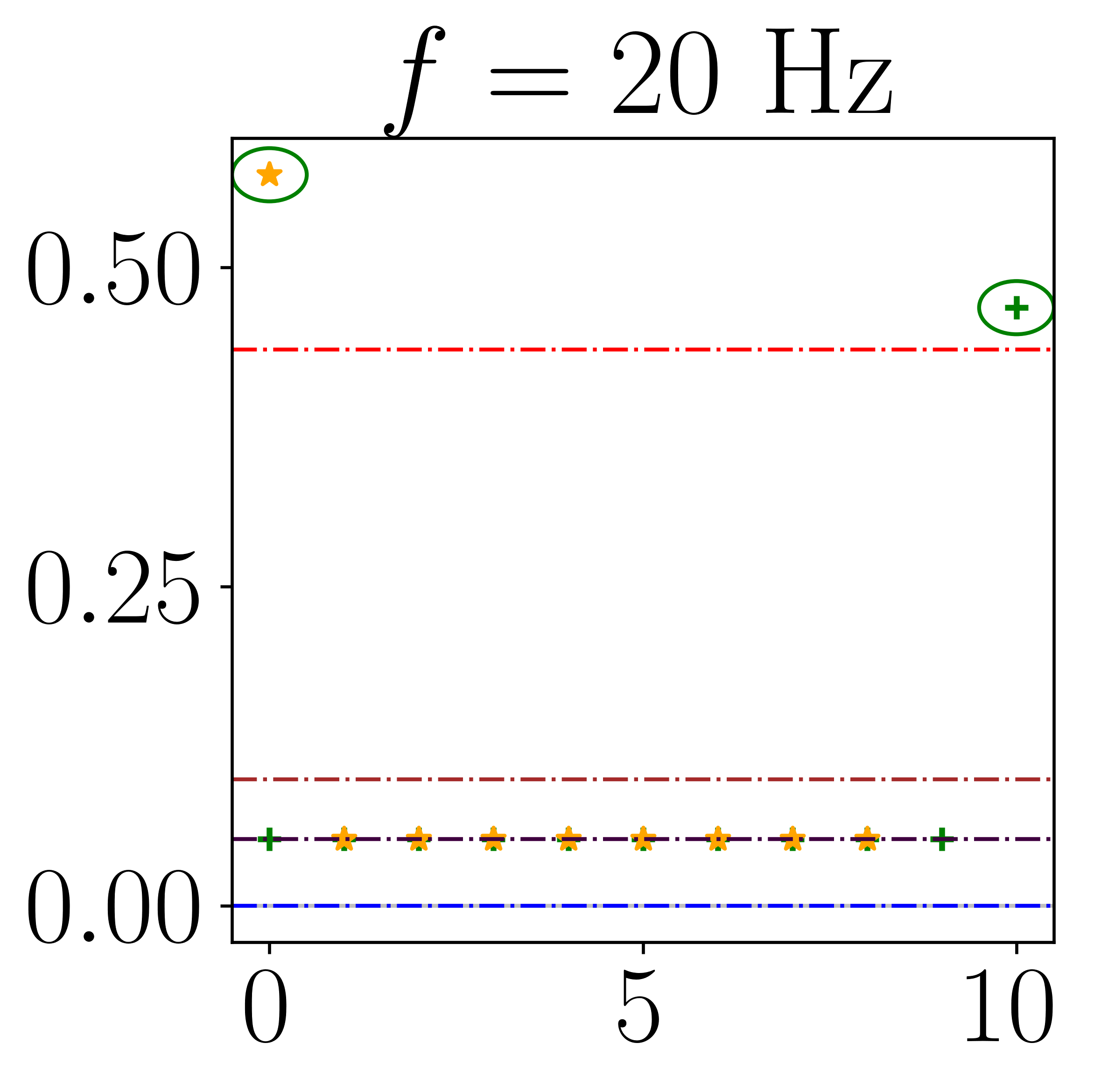}
\hfil
\includegraphics[width=.25\textwidth]{./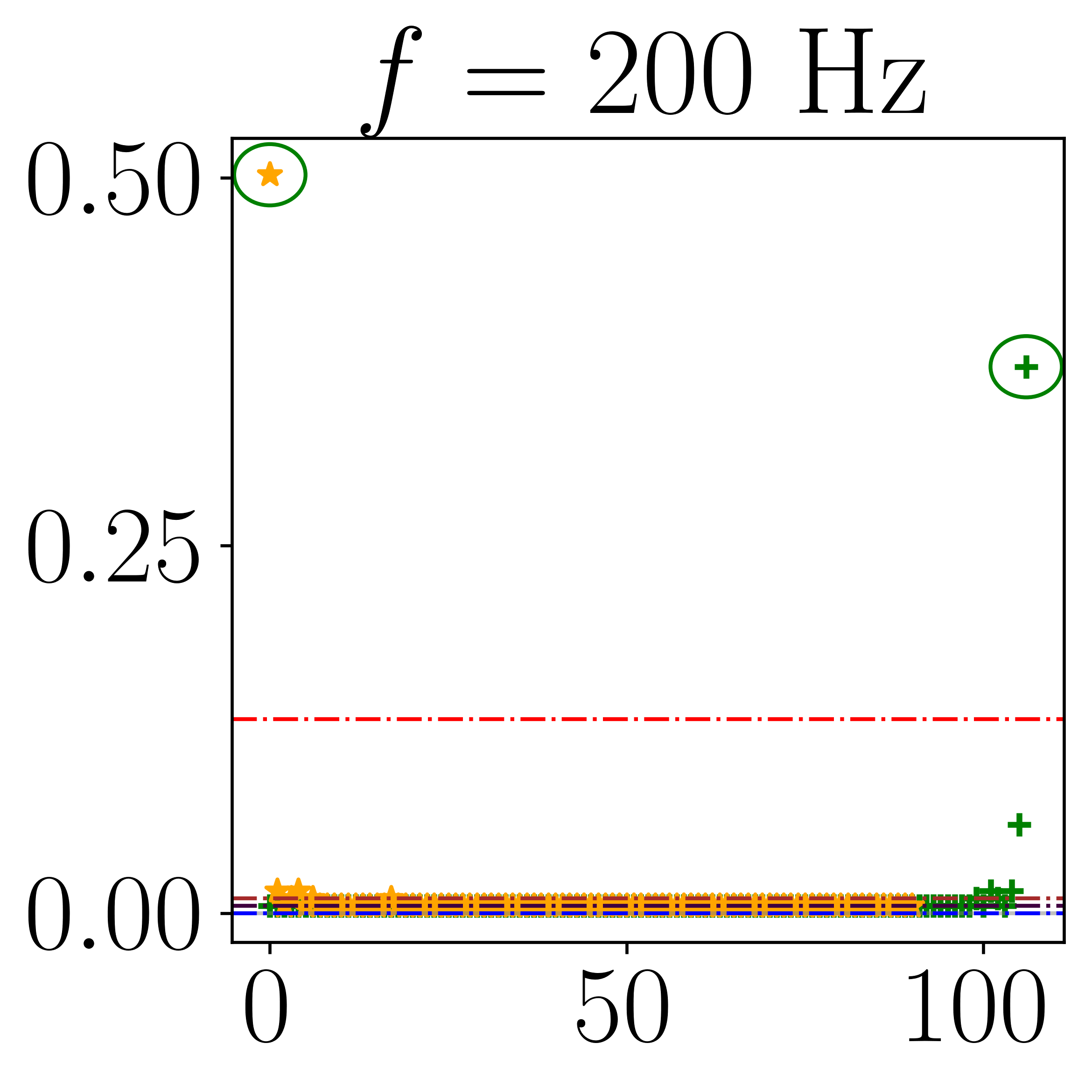}
\hfil
\includegraphics[width=.25\textwidth]{./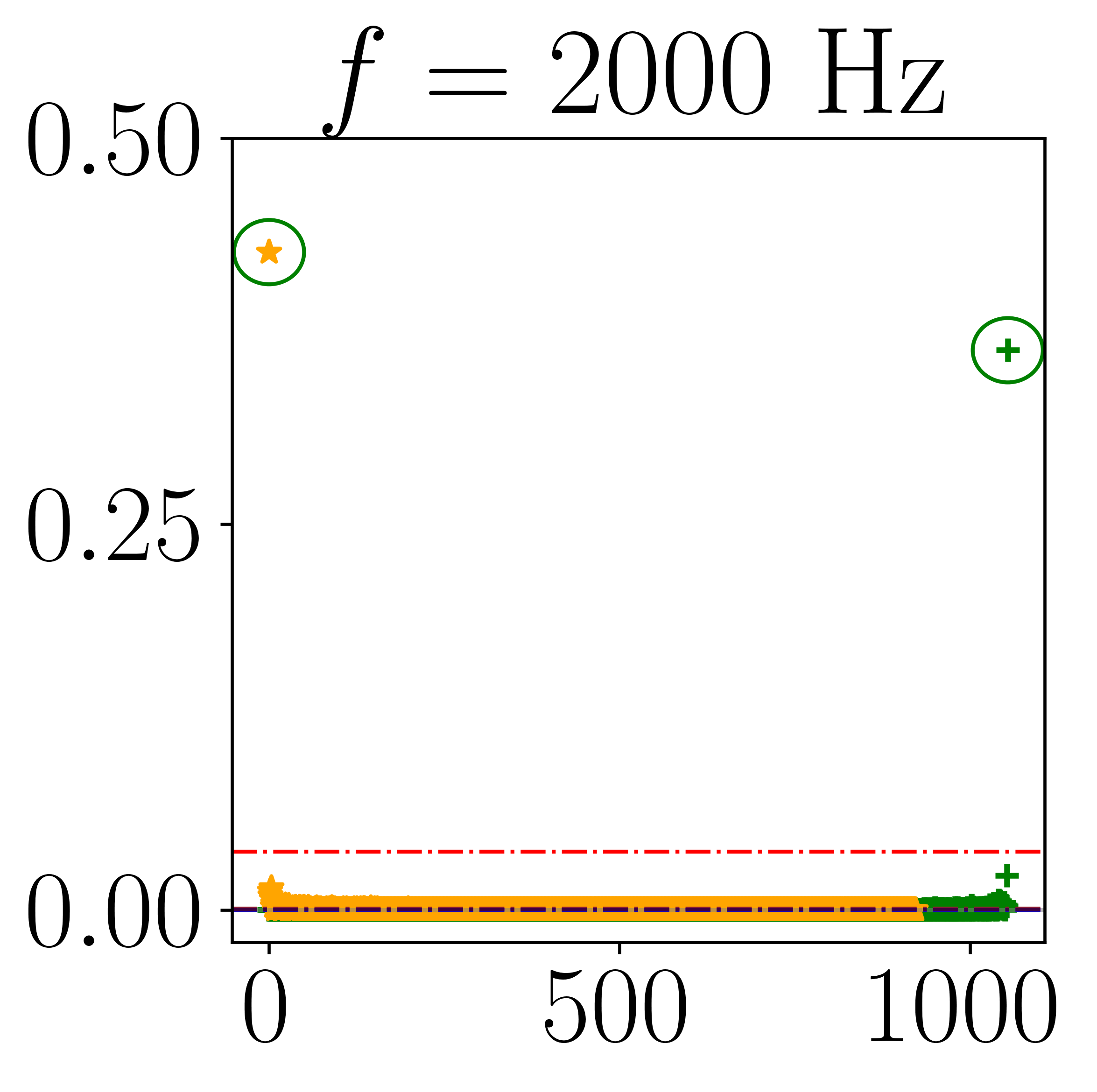}
\hfil
\raisebox{0.4em}{\includegraphics[width=.2\textwidth]{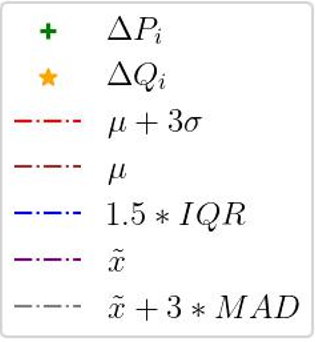}}
\caption{Sorted persistence points for $x_{9}$ (using SNR $=15$ dB) with statistical measures for outlier detection plotted.}
\label{fig:statistics}
\end{figure}
The simplest method for detecting outliers is with statistical measures. 
We show this method on the example of diagrams computed from signal $x_{9}$ in Table~\ref{tab:func}.
Fig.~\ref{fig:statistics} shows 
mean $\mu$, 
z-score of $3$ ($\mu+3\sigma$) \cite{Saleem2021, Mowbray2018}, 
$1.5$ times Interquartile Range (IQR) \cite{Saleem2021, Mowbray2018}, 
median $\tilde{x}$ 
and the Median Absolute Deviation (MAD) \cite{Saleem2021} - all of which have been used as cut-offs for finding outliers in a data-set.
We can see from this example that  $\mu + 3\sigma$ is a  reliable estimate of the persistence threshold.

Finally, we explore the application of Machine Learning algorithm Isolation Forest \cite{Pedregosa2011} on sorted persistence points of $x_{4}$ of Table \ref{tab:func}. Encircled are the points returned by the technique while the points required for true roots are the first six from the top. Fig.~\ref{fig:ml_iforest} shows Isolation Forest to be suitable for low sampling frequencies at all SNRs, but requires a cleaner signal (higher SNR) as sampling frequency is increased. For example, the figure shows the technique failing for frequency of 5000 Hz at 40 dB noise, but manages to find the desired points for 55 dB at the same frequency.\\
\begin{figure}%
\centering
\includegraphics[width=.235\textwidth]{./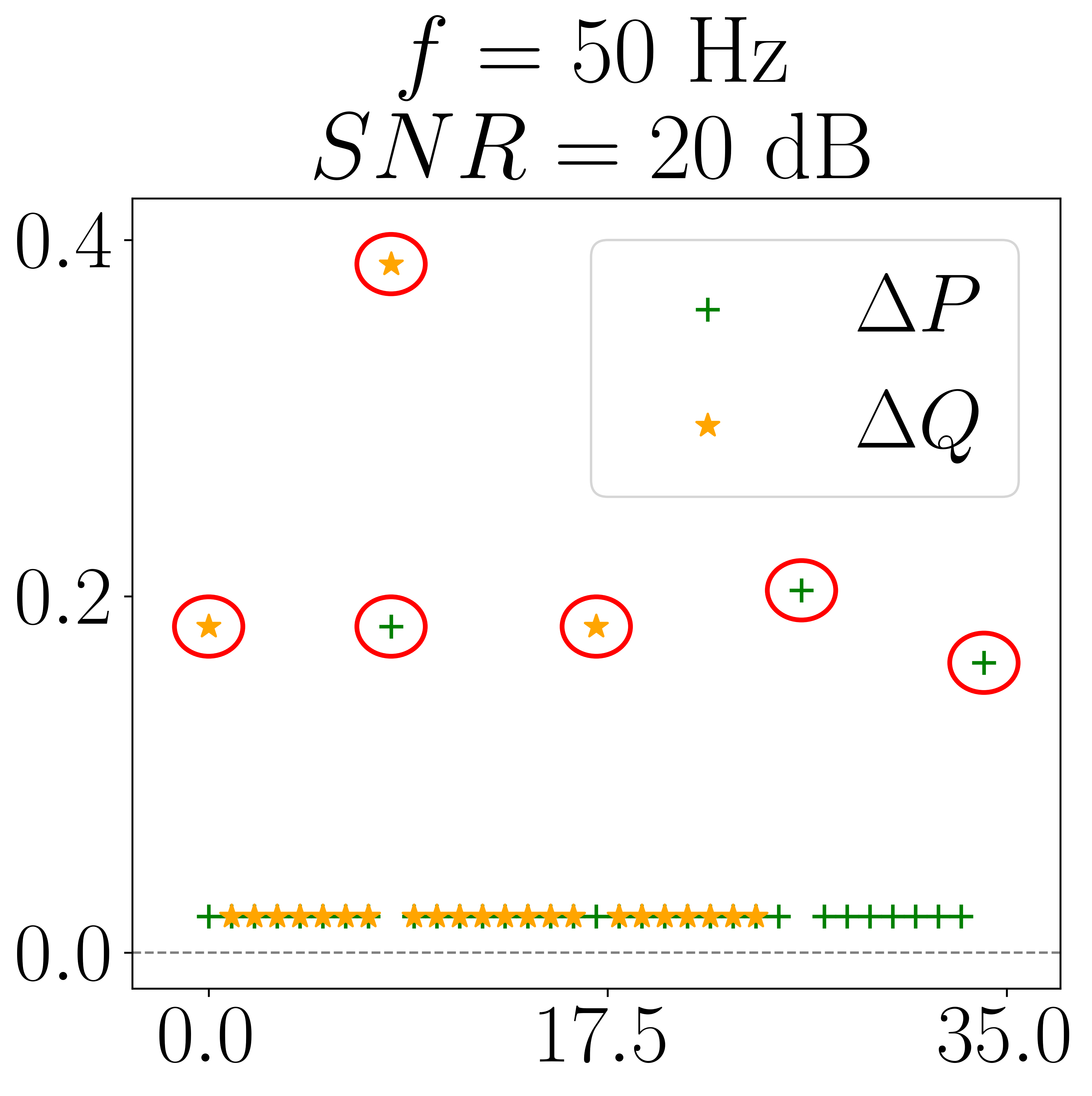}
\includegraphics[width=.235\textwidth]{./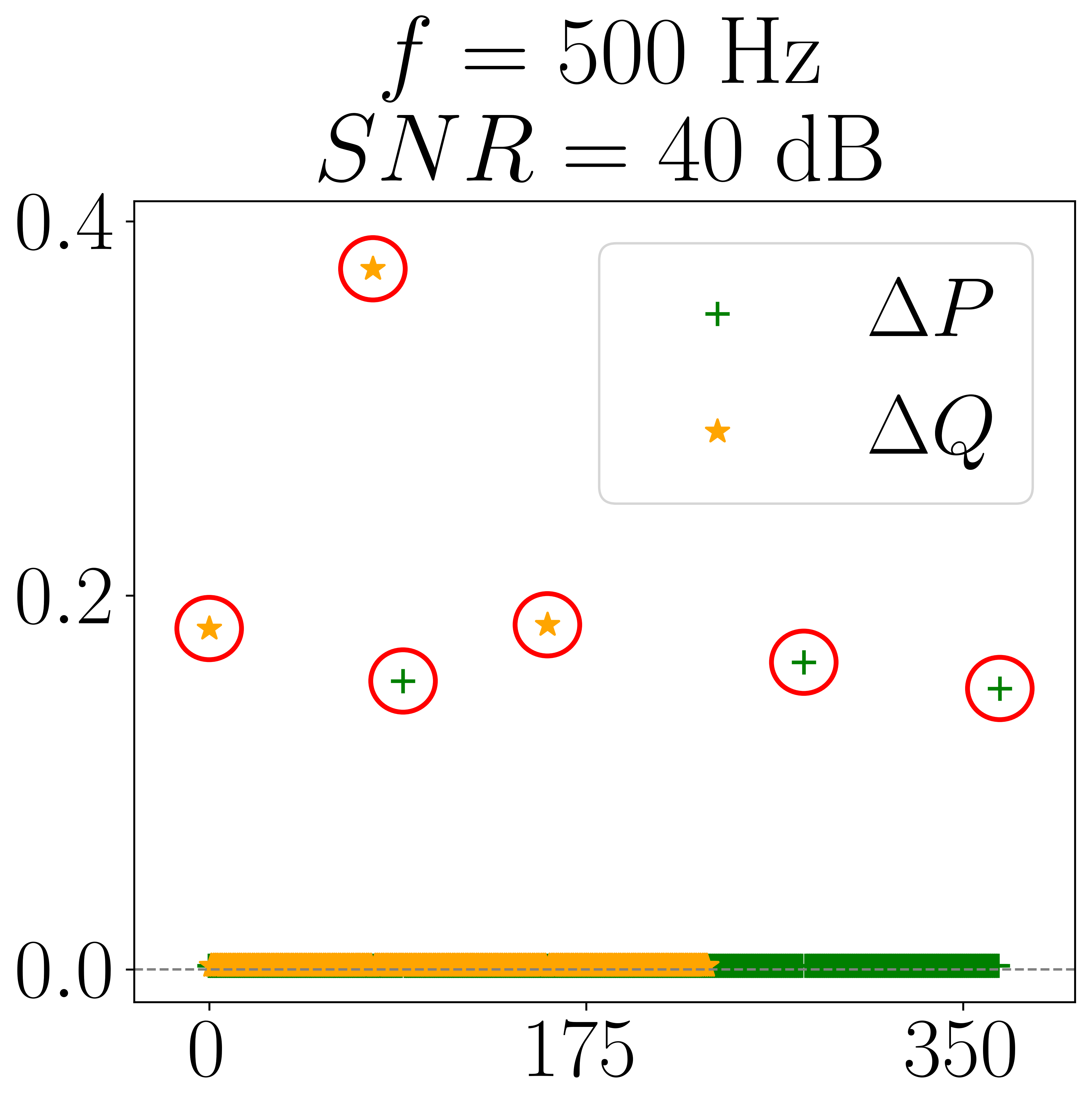}
\includegraphics[width=.235\textwidth]{./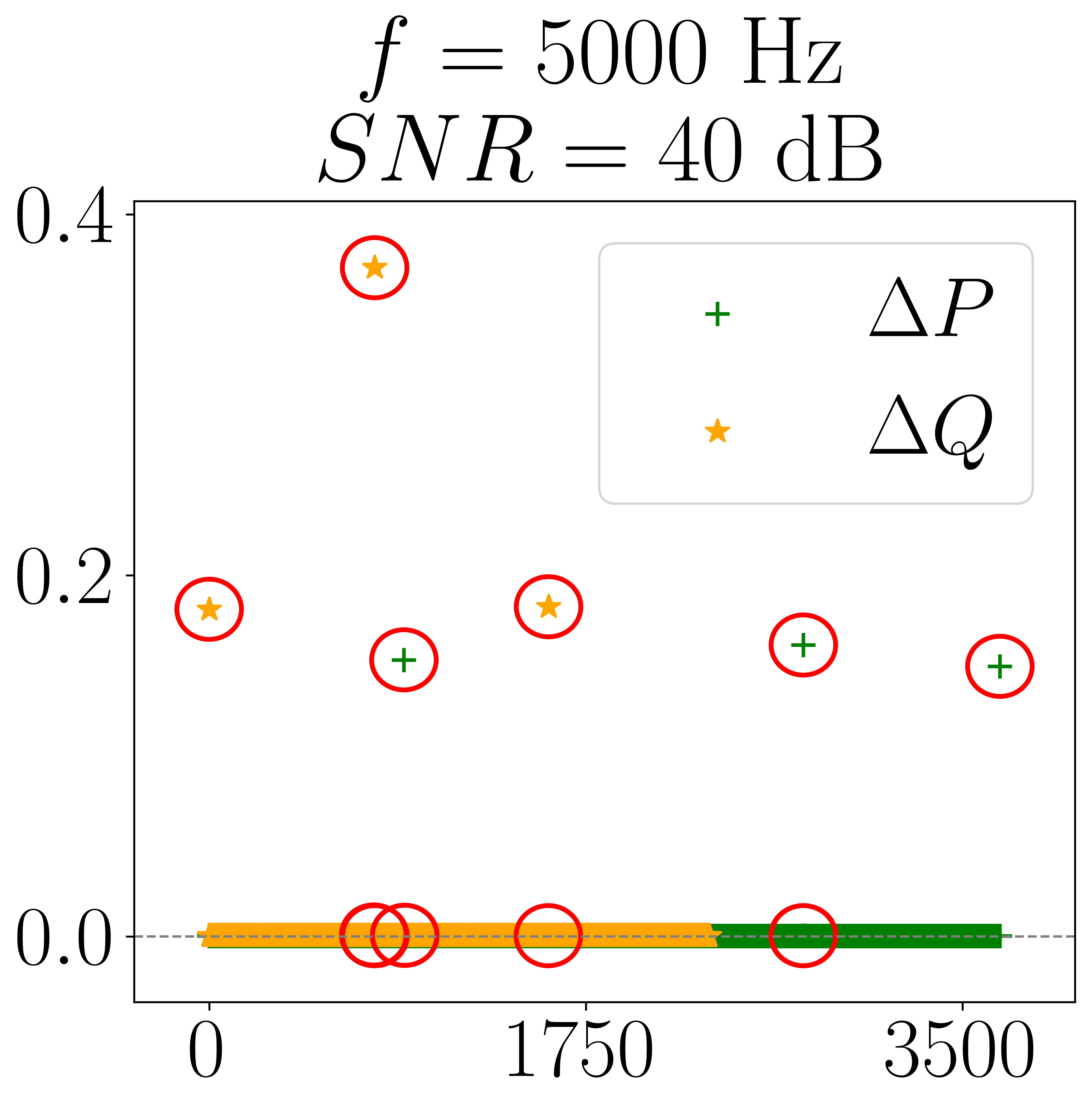}
\includegraphics[width=.235\textwidth]{./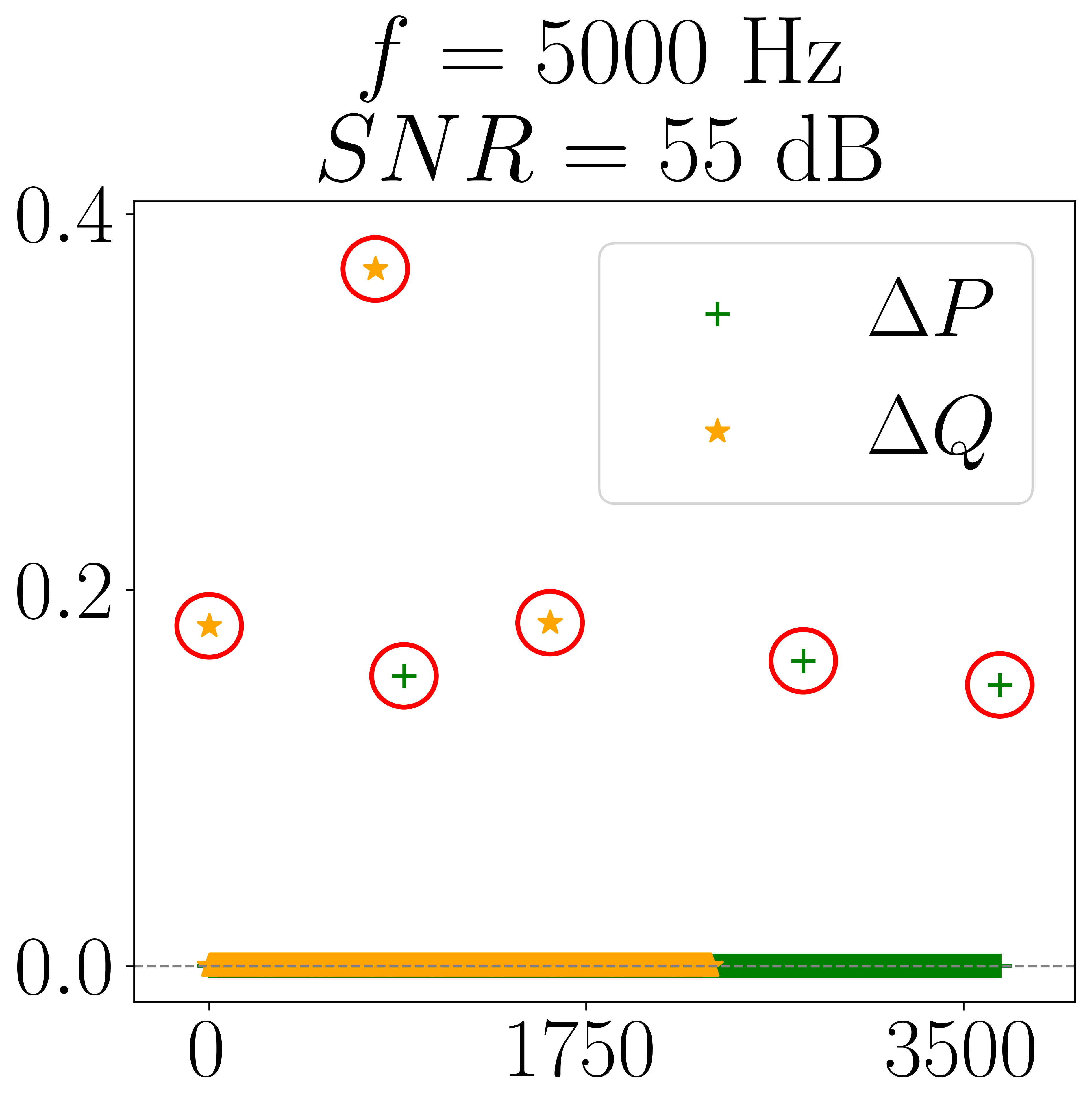}
\caption{Application of Isolation Forest on sorted persistence points of $x_{4}$.}
\label{fig:ml_iforest}
\end{figure}
In Fig.~\ref{fig:statistics}, we notice that with decreasing sampling frequency, the 3$\sigma$ threshold approaches one of the required outliers and would surpass it if the frequency were lowered further. This indicates that the method works better if sampling frequencies are high than low. On the other hand, we discussed the need of an increasingly clean signal for Isolation Forest as we raise our sampling frequency. Therefore, as a rule of thumb when working with noisy signals, the former method should be preferred for frequently-sampled signals, while the latter for signals with low sampling rates.
\section{Results and Discussion}
\label{sec:R&D}
\begin{table}[!t]
\centering
\begin{tabular}{lllll}
\toprule
  & $f(t)$ & Zero crossings& Interval \\
\toprule
1 & $x_1(t)=(1 / 6) t^{6}-(52 / 25) t^{5}+(39 / 80) t^{4}$ & 4.052 & {[-1.5, 5]} & \\
& $+(71 / 10) t^{3}-(79 / 20) t^{2} $& & &\\
&$-t+1 / 10+1000$ & & &\\
2 & $x_2(t)=\sin (t)+\sin (10 / 3 t)$ & 2.9, 4.039, 4.3499, 5.79986,  & {[2.7, 7.5]} & \\
& & 6.73198, 7.2598 \\
3 & $x_3(t)=\left(-16 t^{2}+24 t-5\right) \mathrm{e}^{-t}+3$ & 2.064 & {[1.9, 3.9]} &\\
4 & $x_4(t)=(-3 t+1.4) \sin (18 t)+0.1$ & 0.181, 0.3349, 0.7059, 0.868,  & {[0, 1.2]} &\\
& & 1.0504 \\
5 & $x_5(t)=\sin (t)+\sin (2 / 3 t)$ & 3.77, 7.54, 9.425 & {[3.1, 11]} &\\
6 & $x_6(t)=-t \cdot \sin (t)+0.5$ & 0.741, 2.973, 6.362 & {[0, 8]}&\\
7 & $x_7(t)=-(2 \cos (t)+\cos (2 t))$ & -1.196, 1.196, 5.087 & {[-1.57, 6.28]} &\\
8 & $x_8(t)=\sin ^{3}(t)+\cos ^{3}(t)$ & 2.356, 5.5 & {[0, 6.28]} &\\
9 & $x_{9}(t)=-t^{3}+\left(t^{2}-1\right)^{6}$ & 0.525 & {[0.001, 0.99]} &\\
10 & $x_{10}(t)=-\mathrm{e}^{-t} \sin (2 \pi t)+0.5$ & 0.092, 0.371 & {[0, 4]} \\
11 & $x_{11}(t)=(t^{2}-5t+6)/(t^{2}+1)$ & 2, 3 & {[-5, 3]} \\
12 & $
x_{12}(t) =
    \begin{cases}
        (t-2)^{2} & \text{if } t \leq 3\\
        2\ln{(t-2)} + 1 & \text{if } \text{otherwise}
    \end{cases}
$
& 2 & {[0, 6]} \\
13 & $x_{13}(t)=-t+ \sin (3 t)+1$ & $1.035$ & {[0, 6.5]} &\\
14 & $x_{14}(t)=-(t - \sin{t})*e^{-t^{2}} + 0.01$ & 0.4159 & {[-2, 2]} &\\
\bottomrule
\end{tabular}
\caption{Examples used to evaluate the robustness of the proposed approach \cite{Molinaro2001}.}
\label{tab:func}
\end{table}

A series of experiments were conducted using the algorithm for finding the roots of the functions in Table~\ref{tab:func} over a range of sampling frequencies and signal-to-noise ratios (SNR), where SNR in decibels is defined in Eq.~\eqref{eq:snr}.
\begin{equation}
\centering
\label{eq:snr}
\text{SNR} = 10\cdot \log_{10}\frac{P_{s}}{P_{n}}
\end{equation}
where $P_{s}\text{ and } P_{n}$ represent the power of the clean signal and the noise respectively.
\begin{figure*}[!t]
\centering
\subfloat[]{\includegraphics[width=.4\linewidth]{./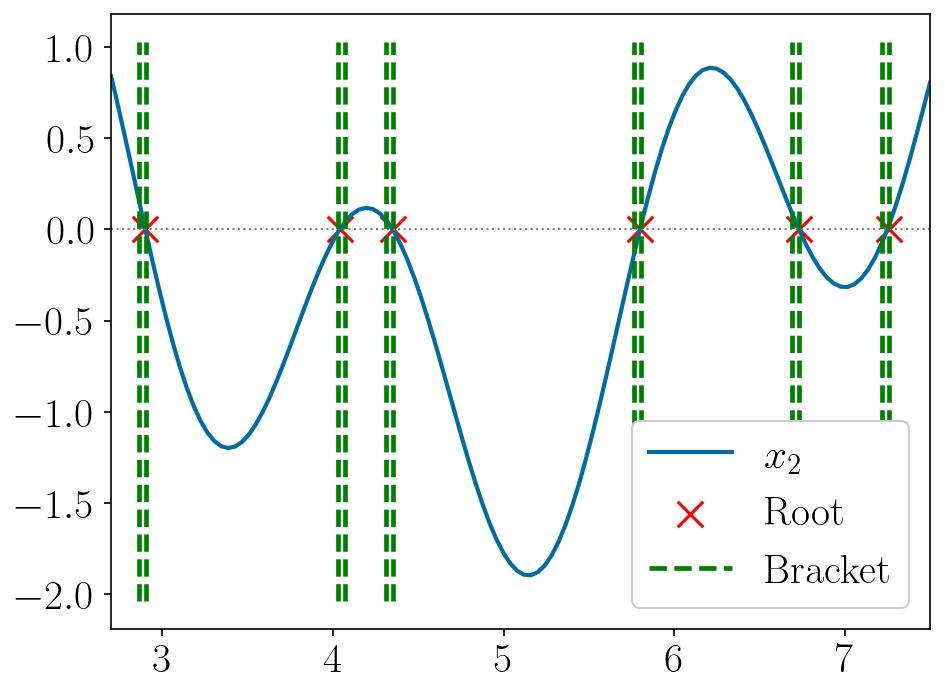}\label{fig:x8-roots}}
\hfil
\subfloat[]{\includegraphics[width=.4\linewidth]{./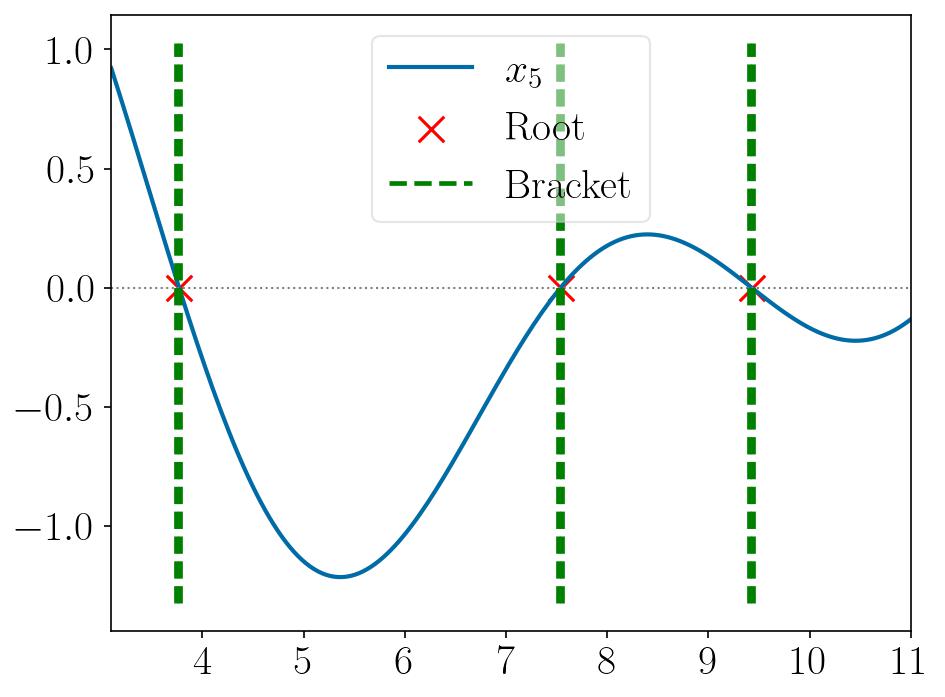}\label{fig:x14-roots}}
\hfil
\subfloat[]{\includegraphics[width=.4\linewidth]{./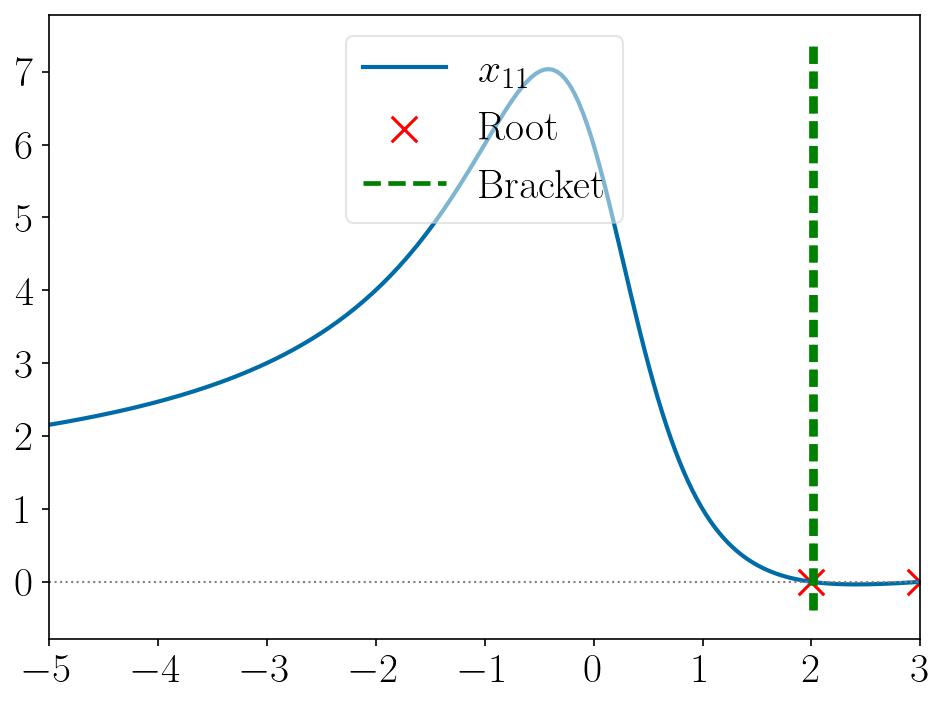}\label{fig:x11-roots}}
\hfil
\subfloat[]{\includegraphics[width=.4\linewidth]{./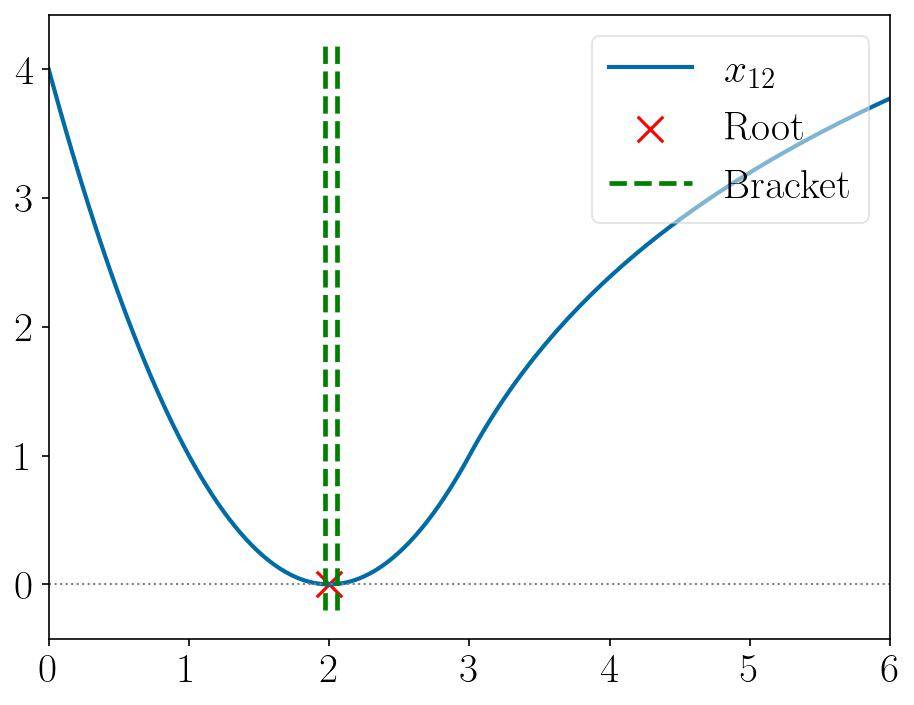}\label{fig:x12-roots}}
\caption{True roots versus brackets returned at a sampling frequency of 25 Hz for (a) $x_{2}$, (b) $x_{5}$, (c) $x_{11}$ and (d) $x_{12}$}
\label{fig:brackets}
\end{figure*}
Figure~\ref{fig:brackets} shows the brackets returned by the algorithm and the true roots of the noise-free functions $x_{2}$ and $x_{5}$ in Table \ref{tab:func} at a low sampling frequency of $25$ Hz. The figures demonstrate that the algorithm has the ability to capture multiple roots in the given interval. 
However, functions $x_{11}$ and $x_{12}$ illustrate particular shortcomings inherent in the method, even with no noise. 
First, if there is a root on the boundary, it is particularly difficult for the method to find it, such as in the example of $x_{11}$. Another issue is that in the case of noise free data with a zero of positive second derivative, it is possible for the zero to not be seen at all. In  $x_{12}$, no zero would be detected if the zero itself is not actually sampled. Interestingly in this case, we would do a better job detecting the crossing if there was noise in the system. The rest of the examples can be found in Appendix \ref{appendix:AppB}.\\
Figure~\ref{fig:SNR} shows the effect of adding gaussian noise of varying SNR in the signal on the brackets returned by the algorithm for the functions $x_{8}$ and $x_{14}$ in Table \ref{tab:func} at a sampling frequency of $1000$ Hz. The figures illustrate that the algorithm can efficiently bracket all roots regardless of SNR. However, in cases where the SNR is low enough to cause the emergence of artificial crossings in the signal, the algorithm may return more roots (for example, $x_{14}$ at SNR of $15$ dB) or less roots (for example, $x_{2}$ at SNR of $15$ dB, shown in Fig.~\ref{fig:SNR-shortcoming}). The rest of the cases can be found in Appendix \ref{appendix:AppC}.\\
\begin{figure*}[!t]
\centering
\subfloat[]{\includegraphics[width=2.75in]{./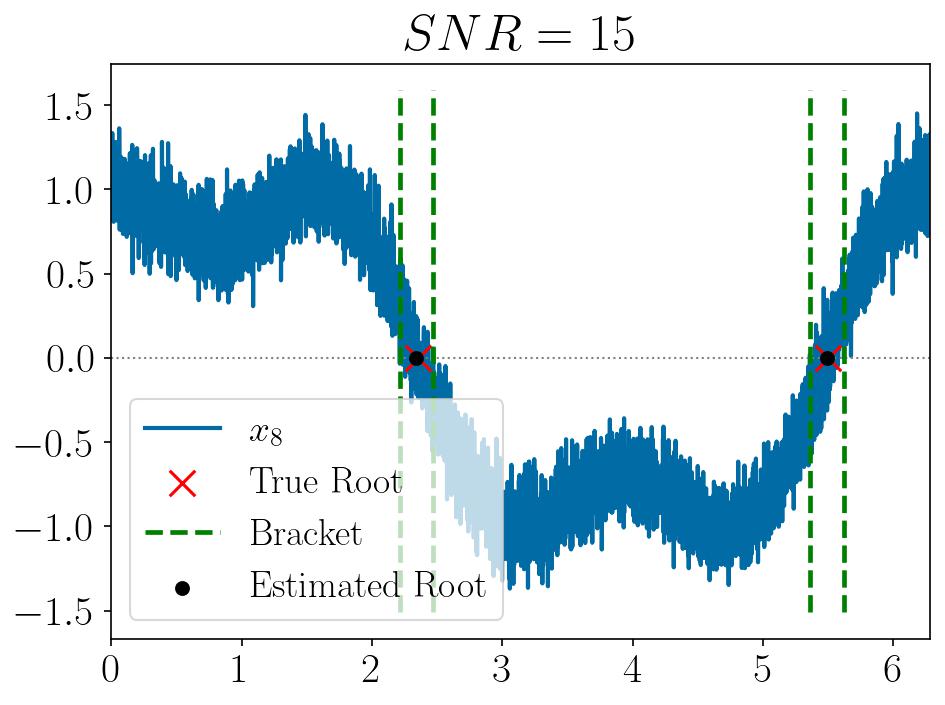}\label{fig:x8-15}}
\hfil
\subfloat[]{\includegraphics[width=2.75in]{./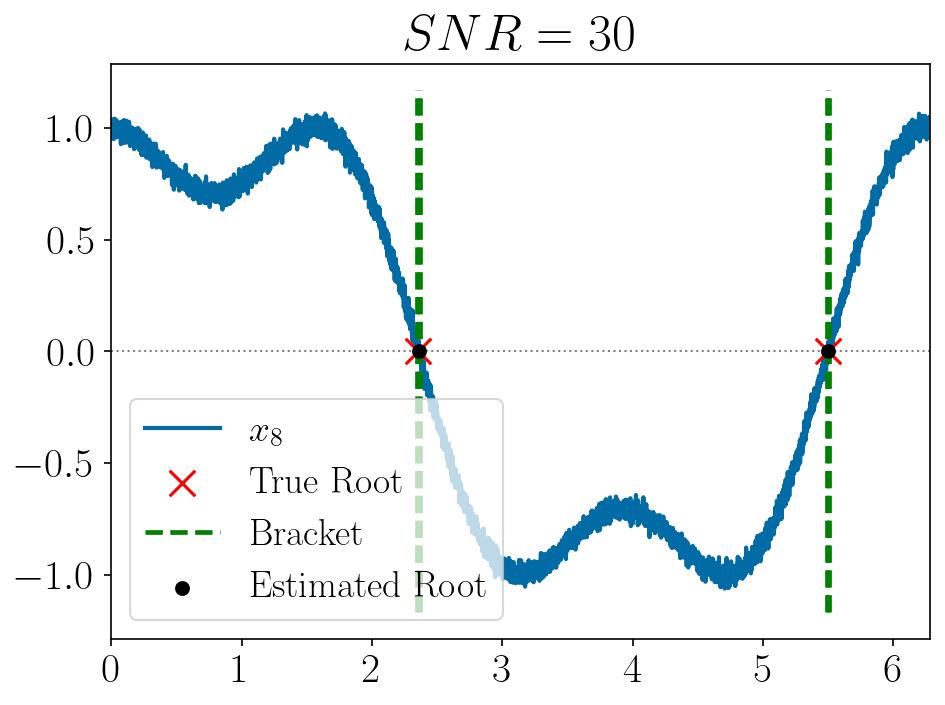}\label{fig:x8-30}}
\\
\subfloat[]{\includegraphics[width=2.75in]{./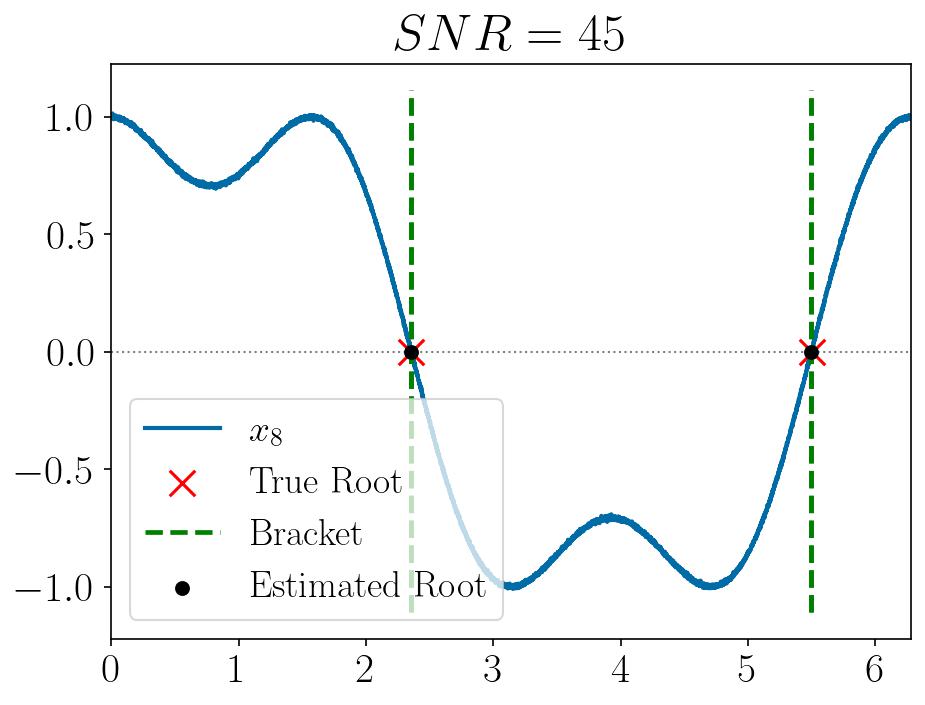}\label{fig:x8-45}}
\hfil
\subfloat[]{\includegraphics[width=2.75in]{./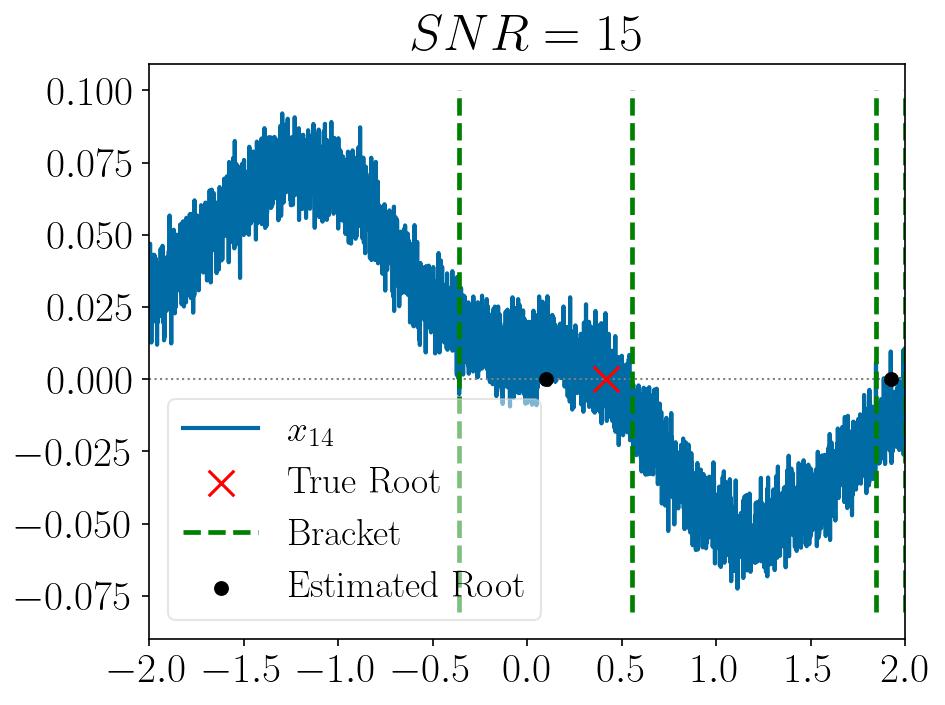}\label{fig:x14-15}}
\\
\subfloat[]{\includegraphics[width=2.75in]{./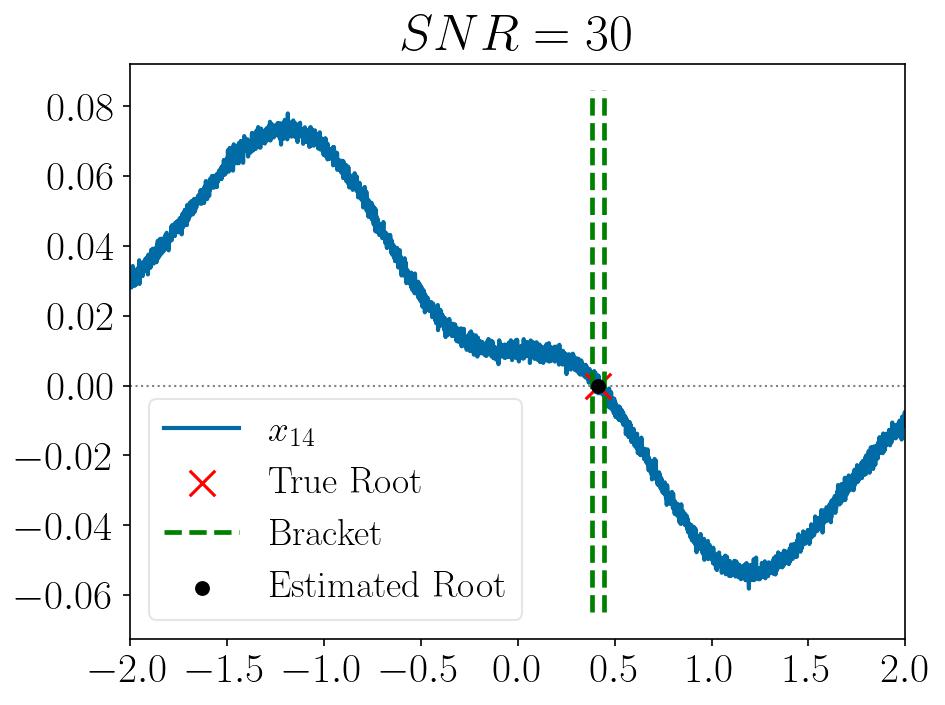}\label{fig:x14-30}}
\hfil
\subfloat[]{\includegraphics[width=2.75in]{./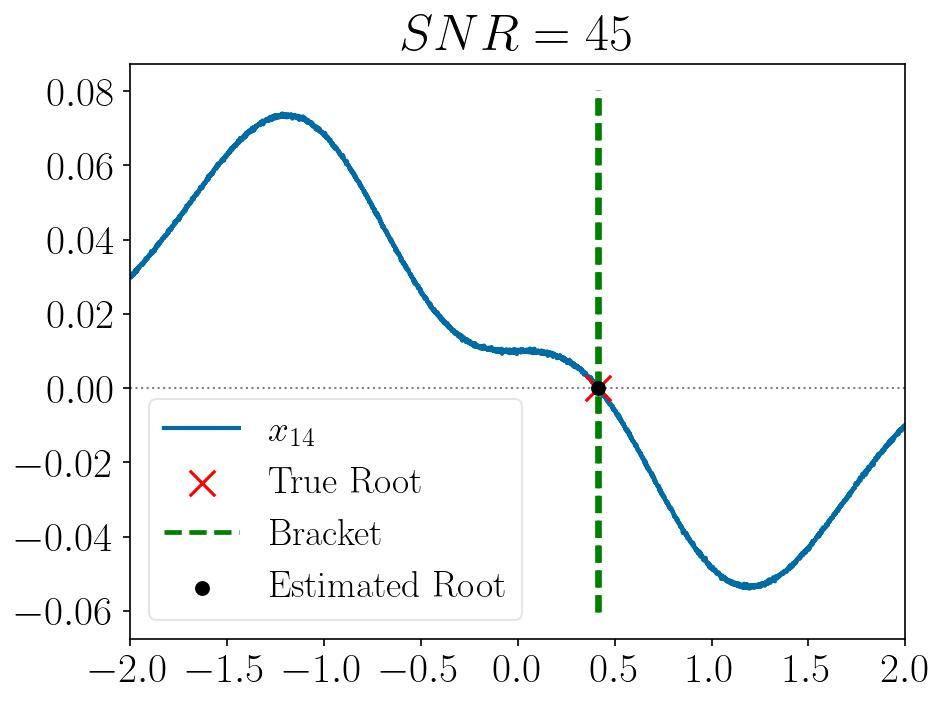}\label{fig:x14-45}}
\caption{True roots versus estimated roots returned at a sampling frequency of 1000 Hz for $x_{8}$ and $x_{14}$ at low, medium and high SNR}
\label{fig:SNR}
\end{figure*}
\begin{figure*}[!t]
\centering
\subfloat[]{\includegraphics[width=2.75in]{./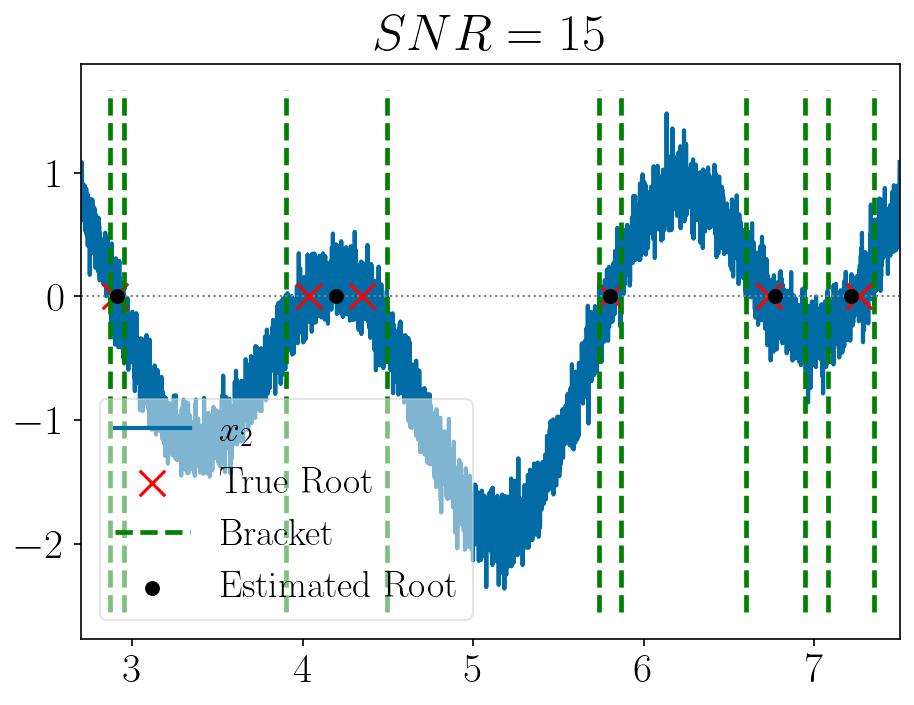}\label{fig:x2-15}}
\hfil
\subfloat[]{\includegraphics[width=2.75in]{./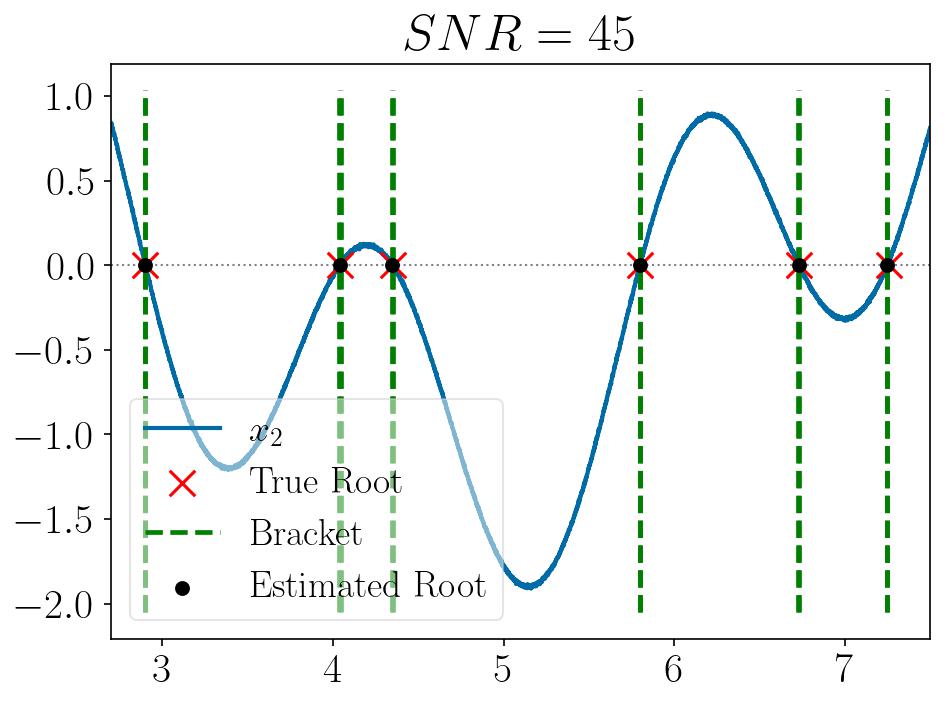}\label{fig:x2-45}}
\caption{True roots versus estimated roots for $x_{2}$ at a sampling frequency of 1000 Hz. Part (a) shows the algorithm missing a root due to high noise.}
\label{fig:SNR-shortcoming}
\end{figure*}
To go beyond one-off examples, we conducted a series of experiments  for establishing a thorough comparison between the algorithm presented herein and a state-of-the-art software-based zero detection algorithm developed by Molinaro and Sergeyev \cite{Molinaro2001}.
This method devises a support function for the time series and using estimates of local Lipschitz constant approximates the first zero-crossing in the interval. 
The authors have shown their algorithm to be substantially better than the simple grid technique for solving this problem for discrete signals. 
Appendix \ref{appendix:AppA} gives their algorithm in detail. 
Since the algorithm returns the first zero-crossing in the interval only, our comparisons consider the first root alone with a convergence criterion, $\epsilon$, set equal to the difference between two consecutive time values in a uniformly sampled series. 
The convergence criterion used is
\begin{equation}
\centering
\label{eq:convergence}
\epsilon = \frac{t_{b} - t_{a}}{n - 1}
\end{equation}
where $n = f(t_{b} - t_{a})$ is the number of discrete values sampled from the interval $[t_{a}$, $t_{b}]$ at a sampling frequency of $f$.

Fig.~\ref{fig:barchart} provides a comparison of the relative errors obtained by using both algorithms for all functions considered along with the time taken for convergence for signals having an SNR of 45 dB, 30 dB and 15 dB, each sampled at a frequency of $500$ Hz. 
The bar chart shows that in general, the relative error produced by Molinaro and Sergeyev's algorithm is significantly higher than the ones produced by the proposed algorithm. 
It is also evident that the relative error, in general, reduces with the increase in SNR.
\begin{figure}%
\centering
\includegraphics[width=.65\textwidth]{./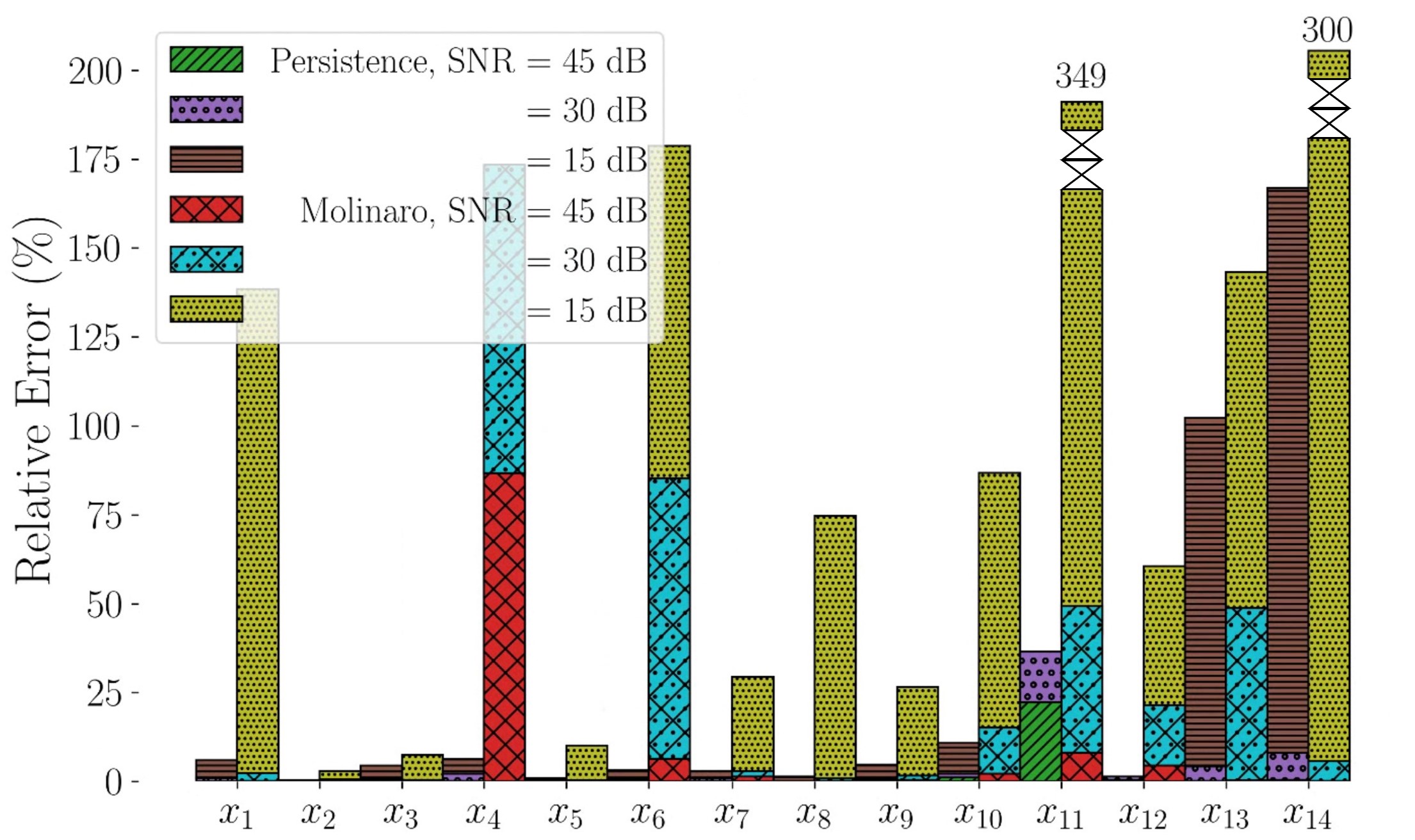}
\caption{Relative errors at low, medium and high SNR from both algorithms at $f = 500$ Hz}
\label{fig:barchart}
\end{figure}

Fig.~\ref{fig:heat} provides heat maps of relative error and time taken for convergence for the function $x_{2}$ using both algorithms over a range of SNRs and sampling frequencies. The figure demonstrates that the maximum relative error yielded by the proposed algorithm is $\sim3.5\%$ while that by Molinaro's algorithm is $\sim6\%$, while the maximum time taken by their algorithm is twice as high as the proposed algorithm for this case. Both these aspects further reinforce the conclusion that the proposed algorithm is more efficient. The rest of the cases can be found in Appendix \ref{appendix:AppD}.\\
\begin{figure}[!h]
\centering
\subfloat[]{\includegraphics[width=.4\linewidth]{./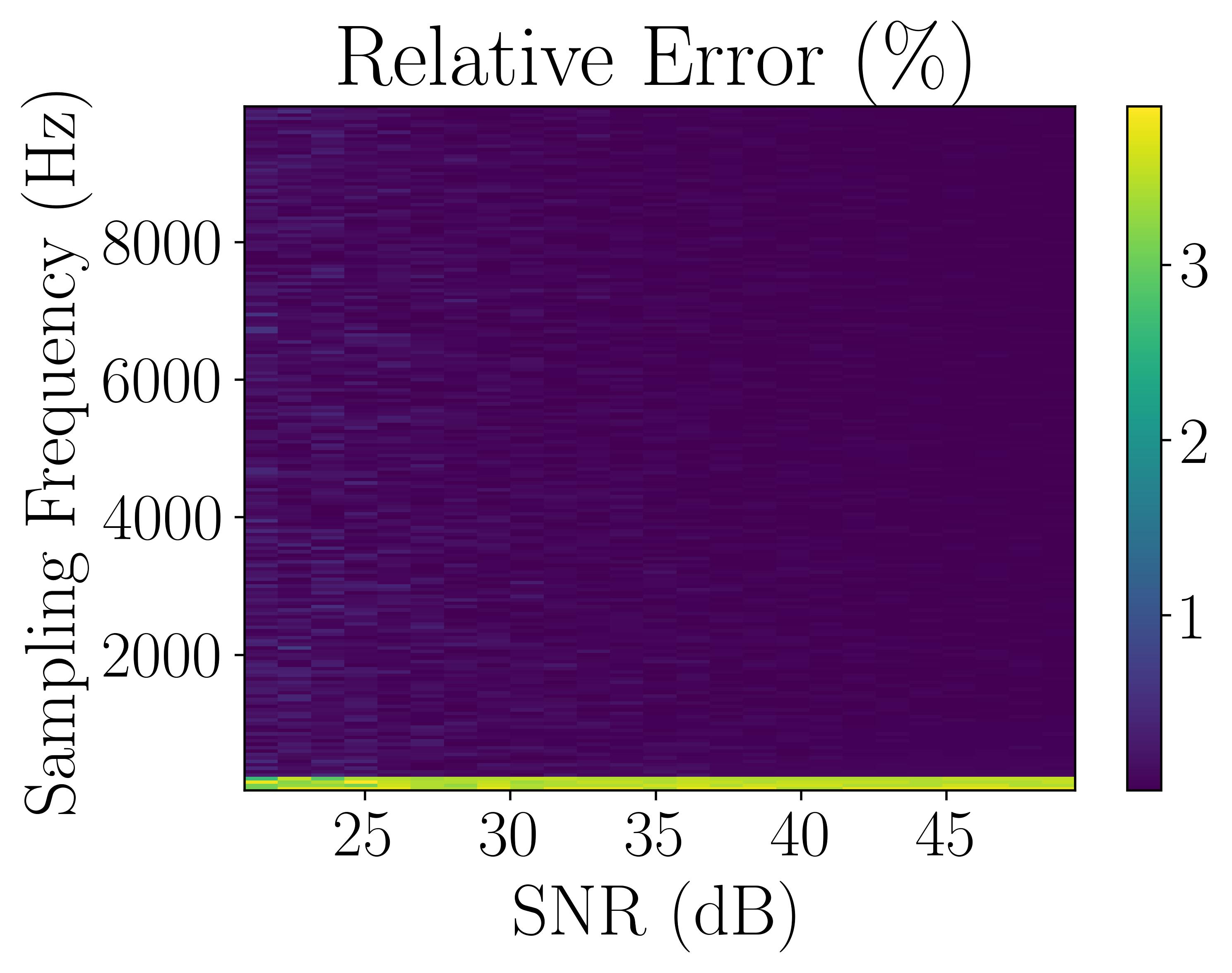}\label{fig:x2-0D-Err}}
\hfil
\subfloat[]{\includegraphics[width=.4\linewidth]{./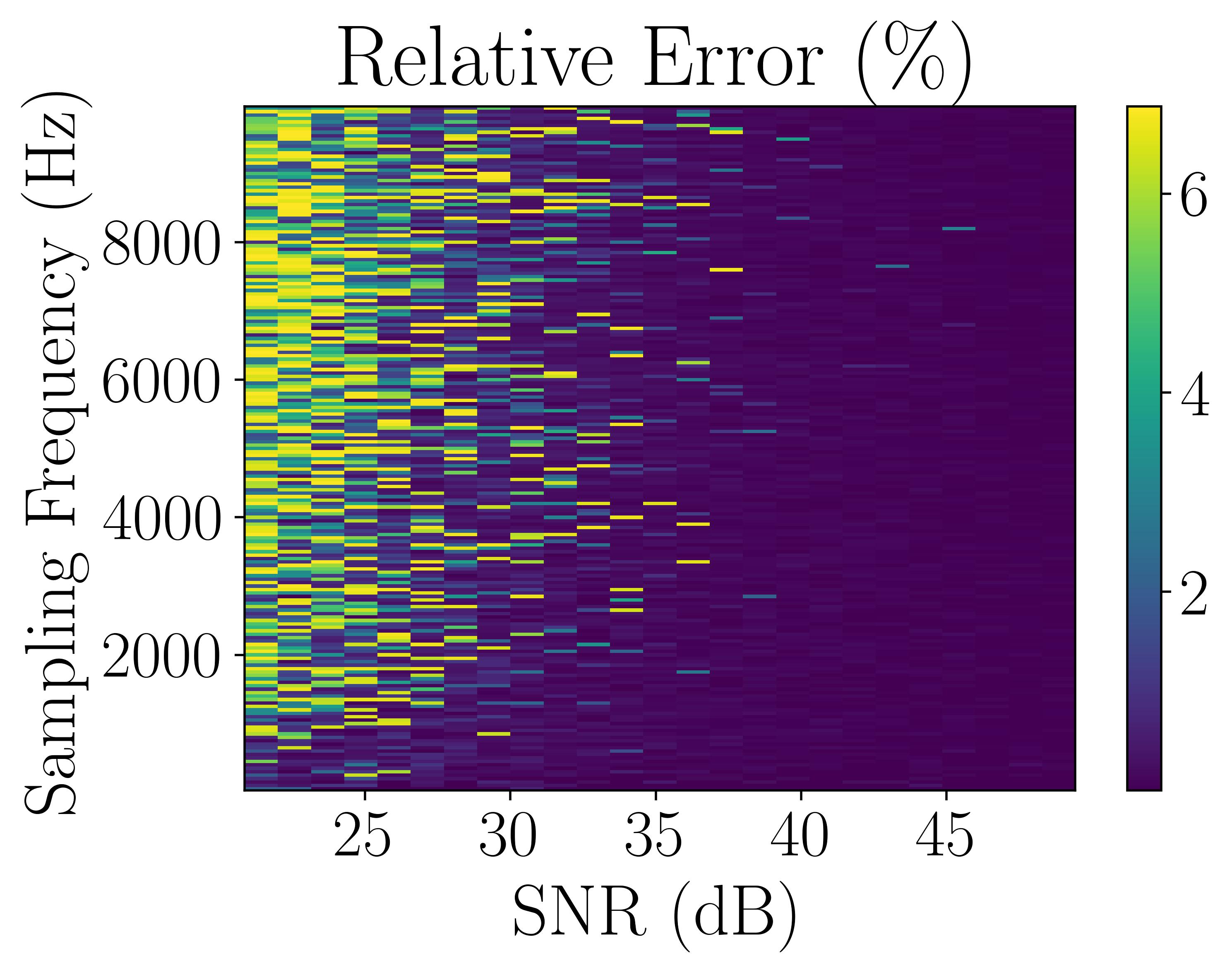}\label{fig:x2-mol-Err}}
\\
\subfloat[]{\includegraphics[width=.4\linewidth]{./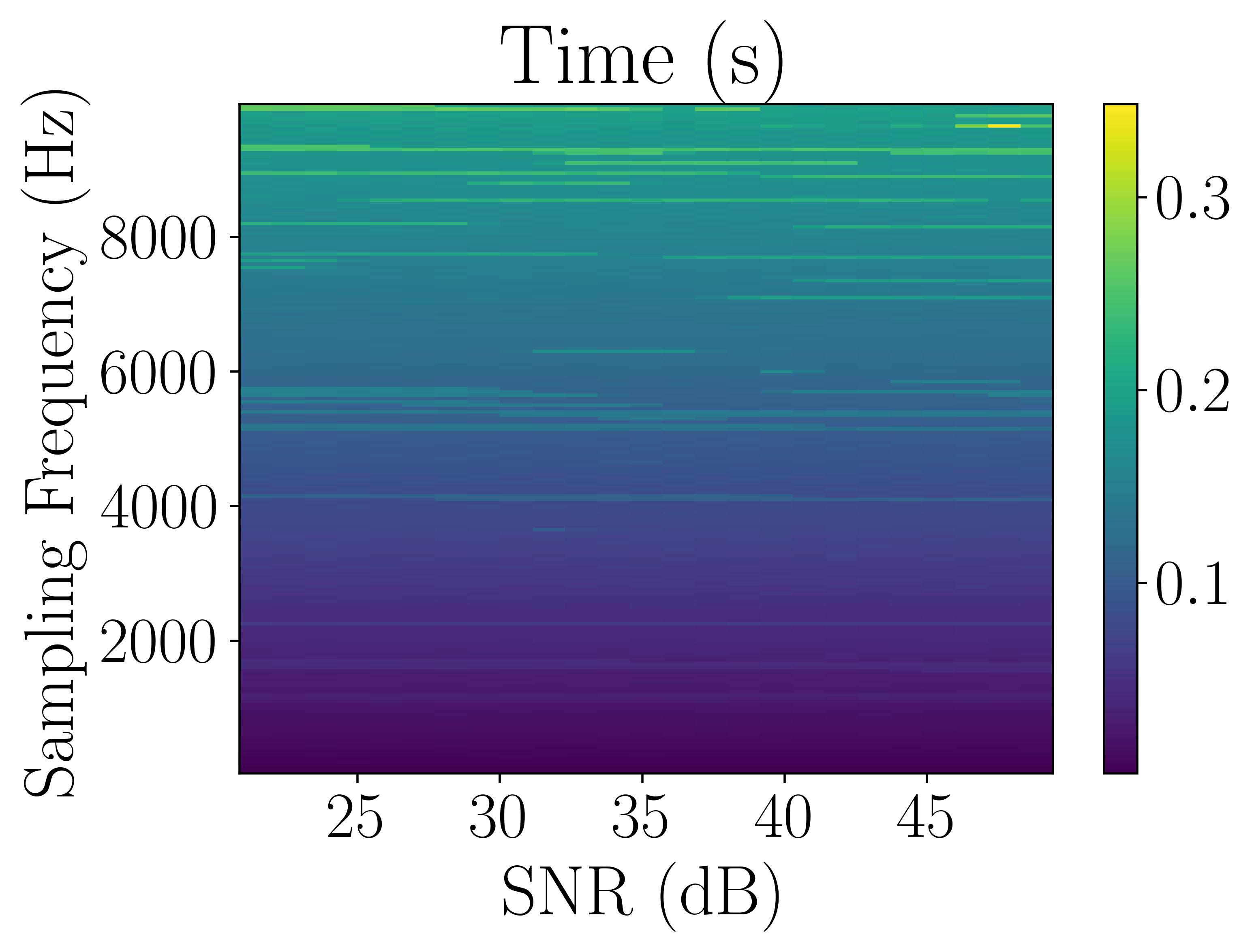}\label{fig:x2-0D-dur}}
\hfil
\subfloat[]{\includegraphics[width=.4\linewidth]{./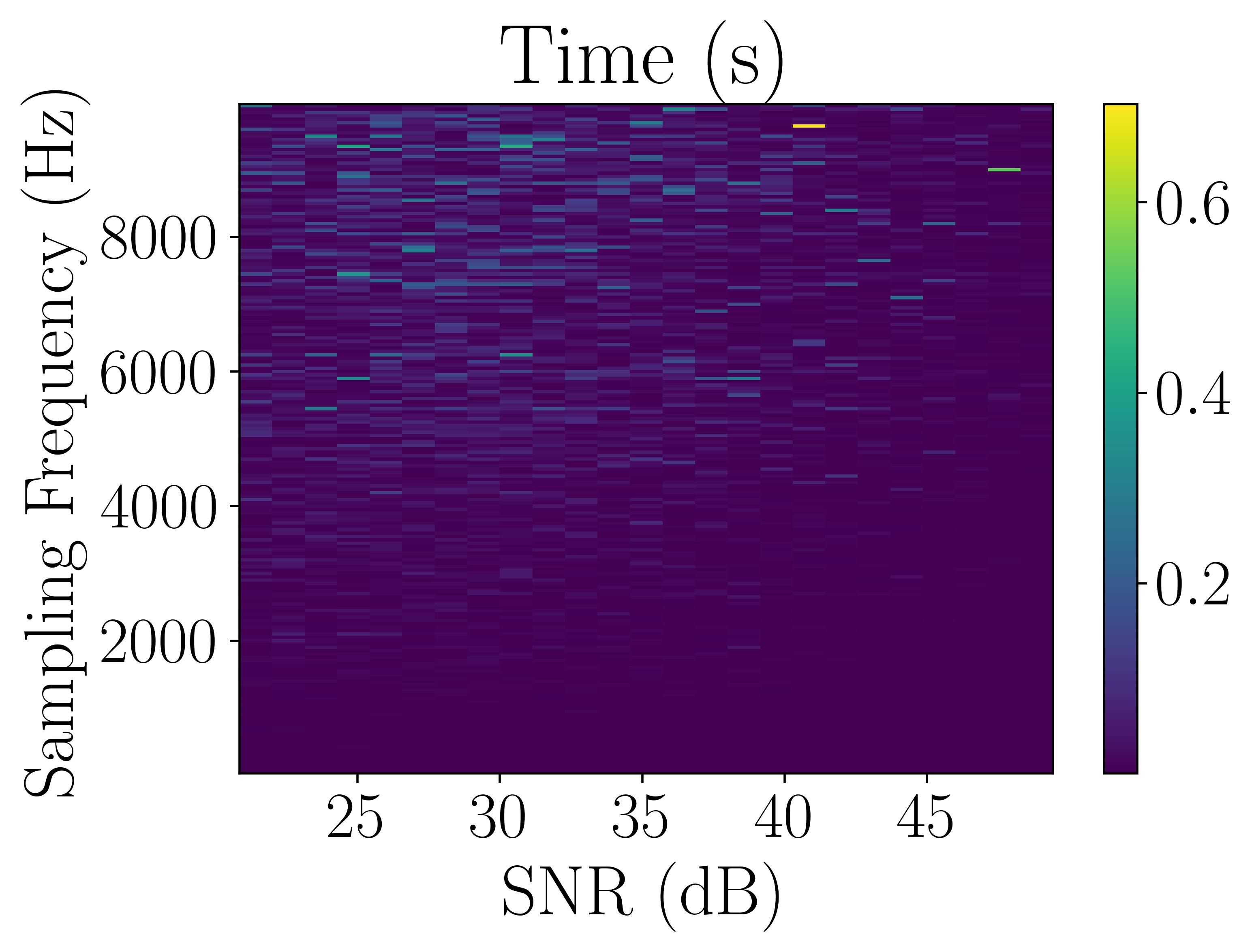}\label{fig:x2-mol-dur}}
\caption{Relative error and time taken for convergence by 0D Persistence (a, c) and Molinaro's algorithm (b, d) for $x_{2}$}
\label{fig:heat}
\end{figure}

\section{Conclusion}
\label{sec:Conc}

The mathematical problem of finding the roots of a signal is significant and has many applications in science and engineering. However, there is a lack of techniques in the literature that can determine all of the roots of a function, particularly for discrete time series. Some algorithms can only find roots if the function's expression is known, while others only converge to a single root within a given interval, ignoring any other roots that may be present.

To overcome these limitations, a new algorithm has been proposed for identifying the zero crossings of discrete time signals. This algorithm is based on the 0-dimensional persistence of the binarized pulse obtained from the signal. The points in the pulse corresponding to the correct zero brackets have a distinguishing characteristic: a higher persistence. Because of this high persistence and low frequency, these desired points can be treated as outliers - with the popular methods of z-score and Isolation Forest utilized to efficiently detect them. As a general rule, the z-score technique is more effective for medium to high sampling frequencies, while Isolation Forest is better suited to low sampling frequencies or small data sets.

Furthermore, the algorithm has been demonstrated to be robust to noise for a wide range of cases, except when noise creates artificial crossings, or if the root lies on the x-axis or the interval boundary in noise-free data. However, it is clear that these limitations will be inherent to most zero-bracketing algorithms.

Finally, the algorithm has been compared with an available software-based technique for zero-crossing detection in discrete signals presented by Molinaro and Sergeyev \cite{Molinaro2001}. The comparison deduces that our proposed algorithm is generally faster, more accurate and returns all the crossings in the given interval. While the algorithm may take longer to converge in some cases in comparison with Molinaro's algorithm, it is not possible to make a fair time-based comparison between the two algorithms since the competing algorithm returns only the first crossing of the interval, while our algorithm returns all of them. To conclude, the proposed zero-crossing algorithm is a reliable and powerful tool with numerous possible applications in the fields of science and engineering.

\section{Acknowledgements}
This material is based upon work supported by the Air Force Office of Scientific Research under award number FA9550-22-1-0007.

\bibliographystyle{unsrt}
\bibliography{bibliography}

\appendix
\begin{appendices}
    \newpage

\section{Global Optimization Based Zero-Crossing Algorithm}
\label{appendix:AppA}

\begin{algorithm}[!htbp]
 \caption{Algorithm for first zero-crossing by Molinaro and Sergeyev \cite{Molinaro2001}}
\label{alg:molinaro}
\SetAlgoLined
\KwData{
A time series $f\left(t_{1}\right), \ \cdots,  f\left(t_{N}\right)$ with times $\left\{t_{1}<\cdots<t_{N}\right\}$ such that \\
\hspace{3.5em}$t^{(1)} = f(t_{1}), \text{ } t^{(2)} = f(t_{N}) \text{ and } t^{(n+1)} \text{ is found as below where n is the}$\\
\hspace{3.5em}{iteration number and } $t \in T_\delta$ for a $\delta$ grid.}
\KwResult{First zero-crossing or global minimum of the signal $f(t)$.}
\phantom{x}\\

Order the trial points $t^{(j)}, 1 \leq j \leq n$, such that $t_{1} < t_{2} < \cdots < t_{k} = t^{(n)} \leq t_{N}$. \\
Estimate the local Lipschitz constants as $m_{i}$ for each interval $[t_{i-1}, t_{i}]$, $[t_{i}, t_{i+1}]$, $2 \leq i \leq N-1$.
\[m_{i} = r\cdot max\{\lambda'_{i}, \lambda''_{i}, \epsilon\}\]
where $r > 1$ is a reliability parameter, $\epsilon$ is a small number such that $0 < \epsilon\leq 10^{-3}$, and $\lambda', \lambda''$ are found as follows:

\begin{equation*}
\lambda' =
    \begin{cases}
        max\{S_{k-1}, S_{k}\} & \text{if } k = 3 \text{ or } k = n\\
        max\{S_{k-1}, S_{k}, S_{k+1}\} & \text{if } \text{otherwise}
    \end{cases}
\end{equation*}

\[S_{i} = |f(t_{i}) - f(t_{i-1})|/(t_{i} - t_{i-1}), \quad 2 \leq i \leq k\]
\[\lambda''_{i} = \lambda^{(n+1)}_{max} (t_{i}-t_{i-1})/\Delta t^{(n+1)}_{max}, \quad 2 \leq i \leq k\]
\[\lambda^{(n+1)}_{max} = max\{S_{i}: 2 \leq i \leq k\}\]
\[\Delta t^{(n+1)}_{max} = max\{t_{i} - t_{i-1}: 2 \leq i \leq k\}\]

Compute $R_{i}$ and $\tau^{\text{th}}$ interval to calculate $t^{(n+1)}$, as follows:
\[R_{i} = 0.5\cdot\{f(t_{i}) + f(t_{i-1}) - m_{i}\cdot (t_{i} - t_{i-1})\}\]
\[\hat{t}_{i} = 0.5\cdot\{t_{i} + t_{i-1} - (f(t_{i}) - f(t_{i-1}))/m_{i}\}\]
\quad \quad If any $R_{i} \leq 0$, there is a zero-crossing. Using first $R_{i} \leq 0$, put $\tau = i$ and
\[t^{*}_{\tau} = t_{\tau - 1} + f(t_{\tau - 1})/m_{\tau} \quad \quad t^{(n+1)} = argmin\{|t_{\tau} - t|: t \in T_{\delta}\]
\quad \quad If all $R_{i} > 0$, find global minimum.
\[\tau = min\{j: j = argmin\{R_{i}: 2\leq i \leq k\}\}\]
\[t^{(n+1)} = argmin\{|\hat{t_{\tau}} - t|: t \in T_{\delta}\}\]

Verify if $|t^{(n+1)} - t_{\tau - 1}| \leq \sigma$ where $\sigma \text{ is the required tolerance}$. If not, repeat the procedure with $t_{N} = t^{(n+1)} \text{ if } f(t^{(n+1)} \leq 0$, and with $t_{1} < \cdots < t^{(n+1)} < \cdots < t_{N}$, otherwise.

\phantom{x}\\
\Return{First zero-crossing (if exists) or the global minimum in the interval $[t_{1}, t_{N}]$ with a tolerance of $\sigma$.}
\end{algorithm}
    \newpage

\section{Zero-Crossing Brackets}
\label{appendix:AppB}
Fig.~\ref{fig:appB_fig1} and \ref{fig:appB_fig2} demonstrate the capability of the algorithm to accurately bracket all zero-crossings of any time series.
\begin{figure}[!htbp]
\centering
\begin{subfigure}{0.43\textwidth}
  \centering
  \includegraphics[width=1.\linewidth]{./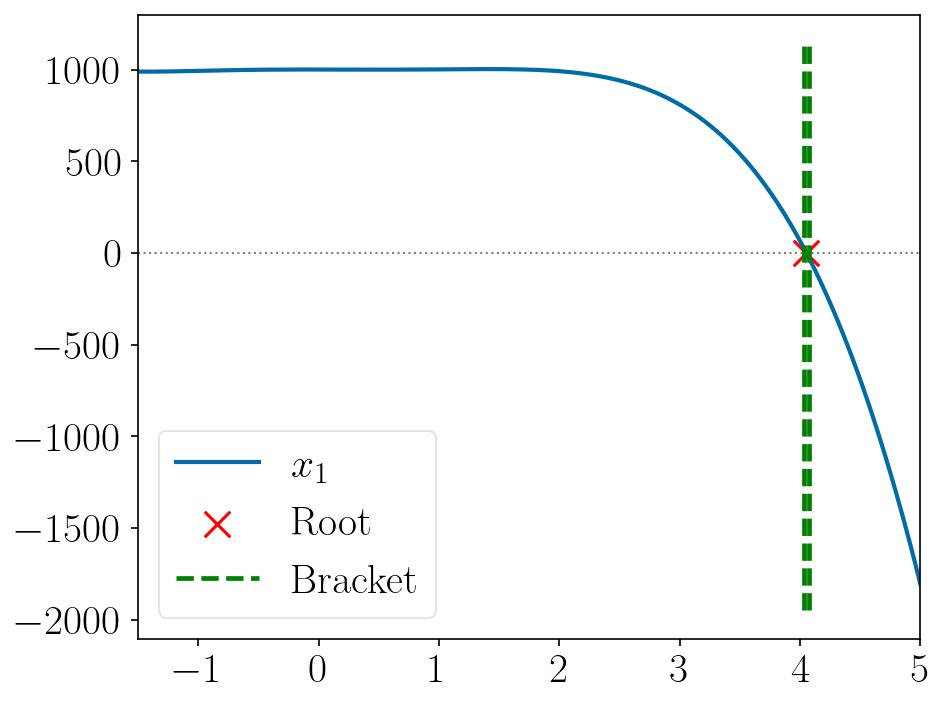}
  \caption{}
\end{subfigure}%
\begin{subfigure}{0.43\textwidth}
  \centering
  \includegraphics[width=1.\linewidth]{./figures/TrueRoots_Brackets/2.jpg}
  \caption{}
\end{subfigure}
\centering
\begin{subfigure}{0.43\textwidth}
  \centering
  \includegraphics[width=1.\linewidth]{./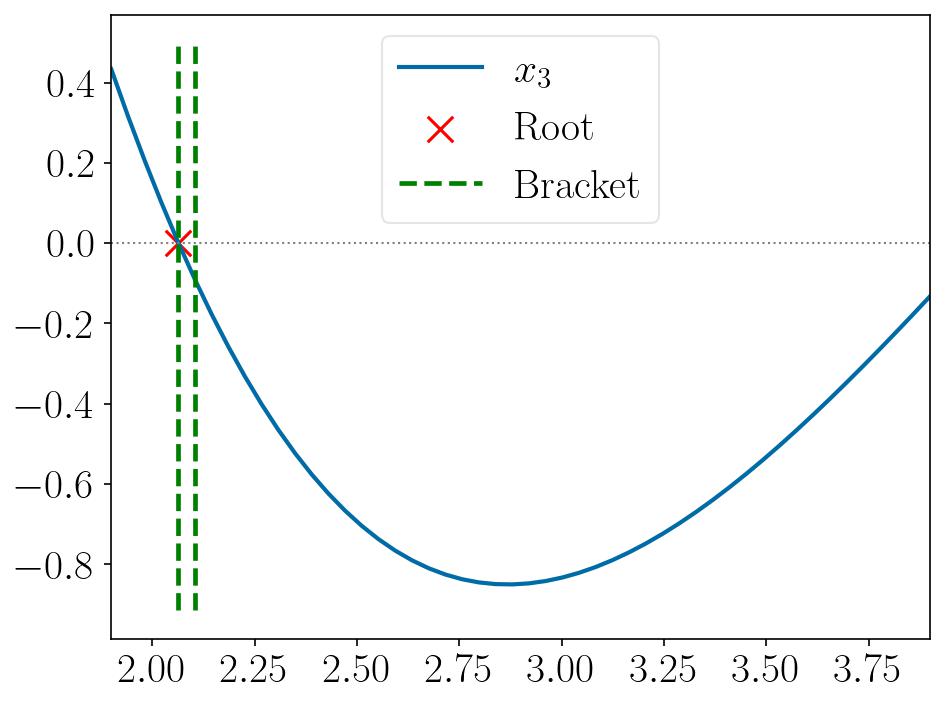}
  \caption{}
\end{subfigure}%
\begin{subfigure}{0.43\textwidth}
  \centering
  \includegraphics[width=1.\linewidth]{./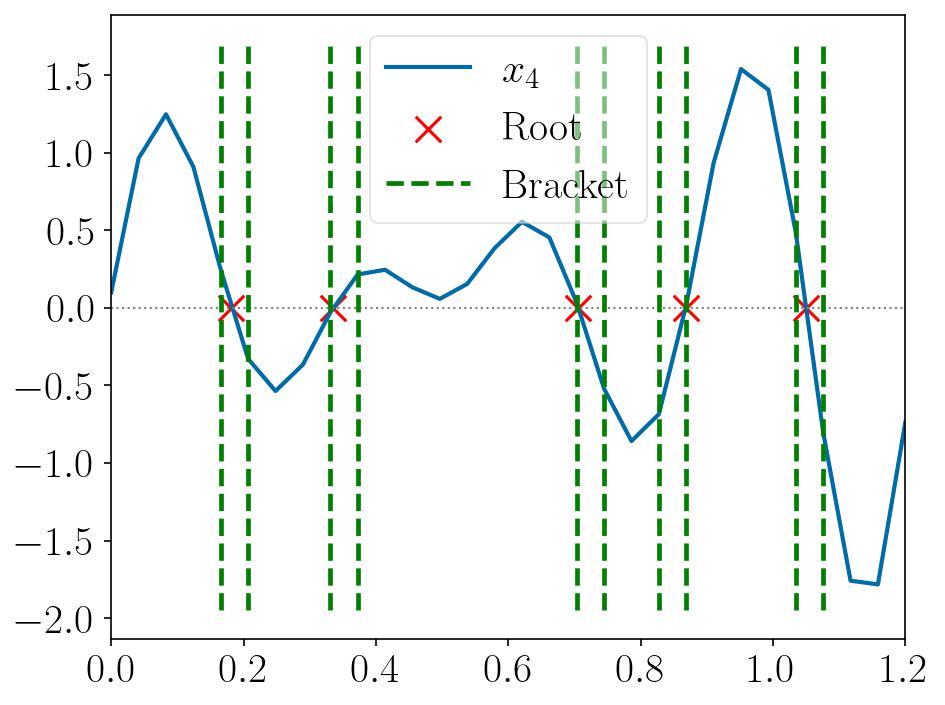}
  \caption{}
\end{subfigure}
\centering
\begin{subfigure}{0.43\textwidth}
  \centering
  \includegraphics[width=1.\linewidth]{./figures/TrueRoots_Brackets/5.jpg}
  \caption{}
\end{subfigure}%
\begin{subfigure}{0.43\textwidth}
  \centering
  \includegraphics[width=1.\linewidth]{./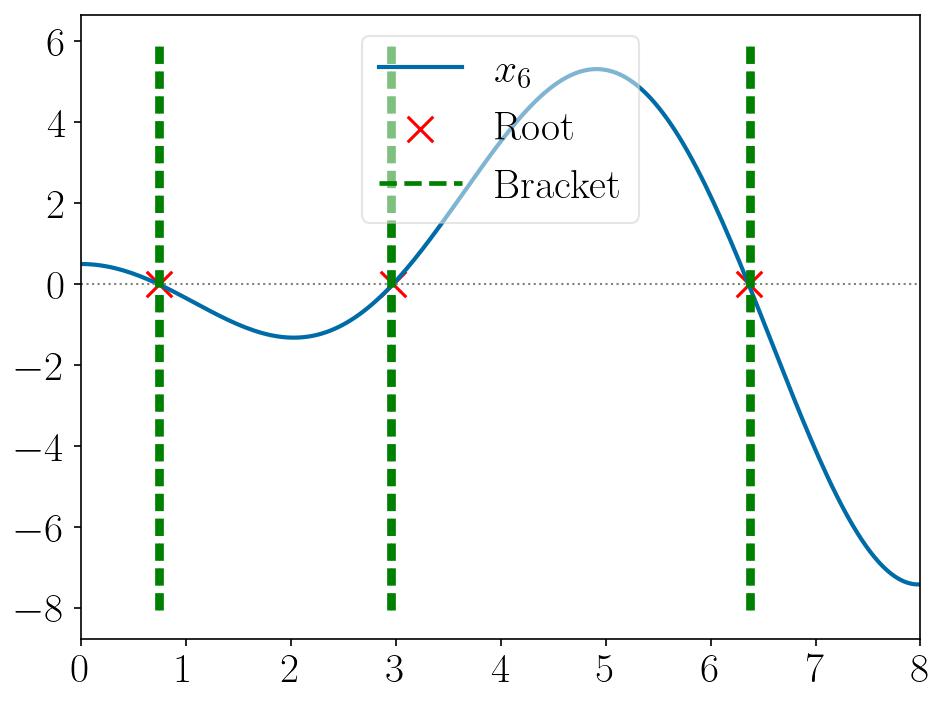}
  \caption{}
\end{subfigure}
\caption{True zero-crossing(s) and brackets returned by the algorithm for functions $x_{1}$ to $x_{6}$ at a low sampling frequency of 25 Hz.}
\label{fig:appB_fig1}
\end{figure}
\begin{figure}
\centering
\begin{subfigure}{0.43\textwidth}
  \centering
  \includegraphics[width=1.\linewidth]{./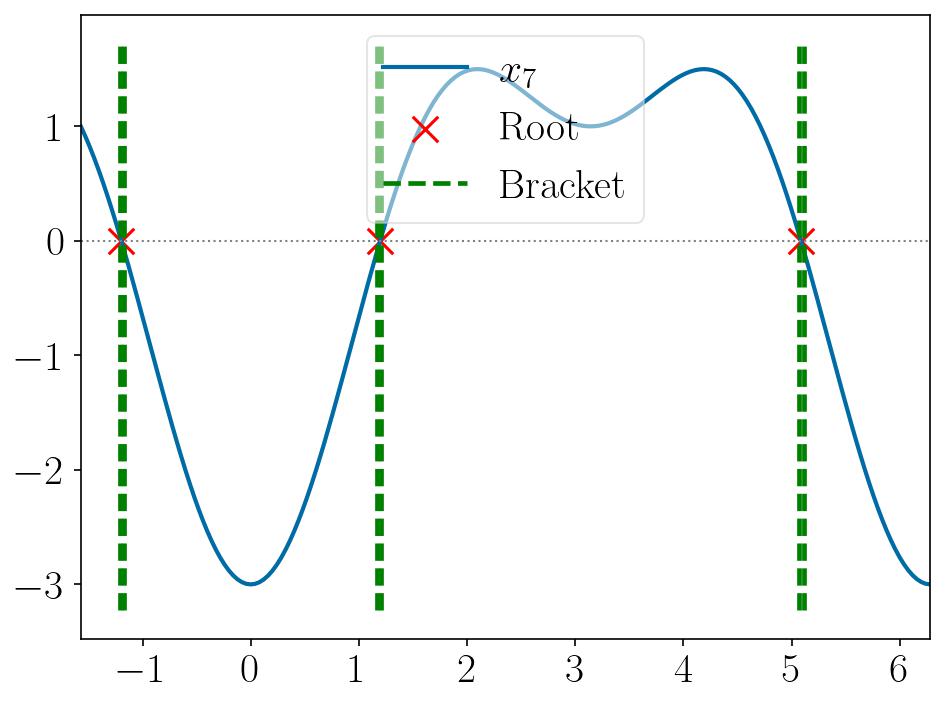}
  \caption{}
\end{subfigure}%
\begin{subfigure}{0.43\textwidth}
  \centering
  \includegraphics[width=1.\linewidth]{./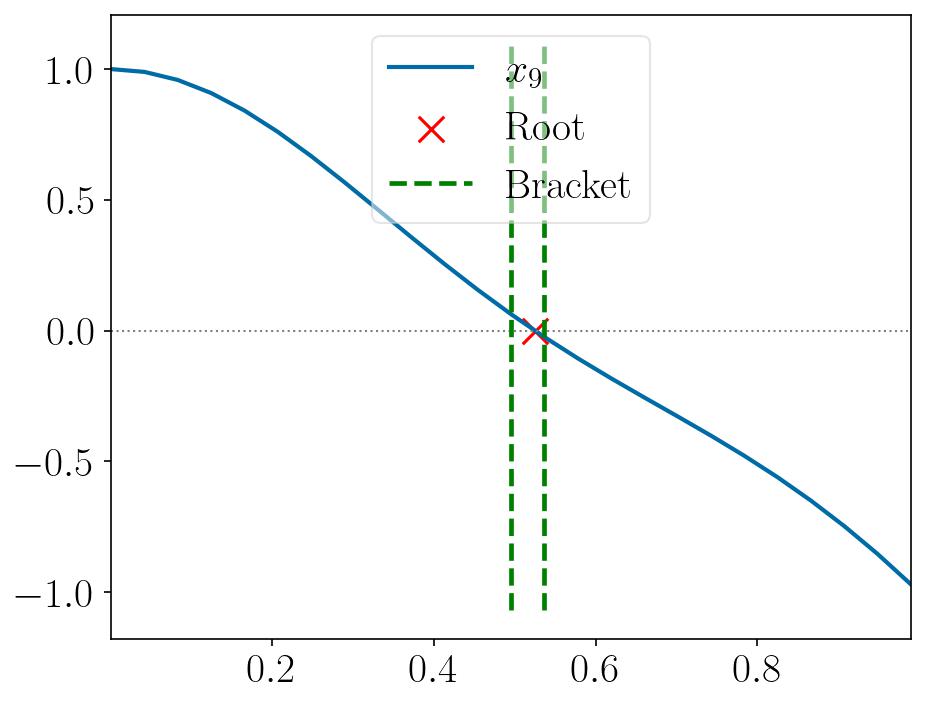}
  \caption{}
\end{subfigure}
\centering
\begin{subfigure}{0.43\textwidth}
  \centering
  \includegraphics[width=1.\linewidth]{./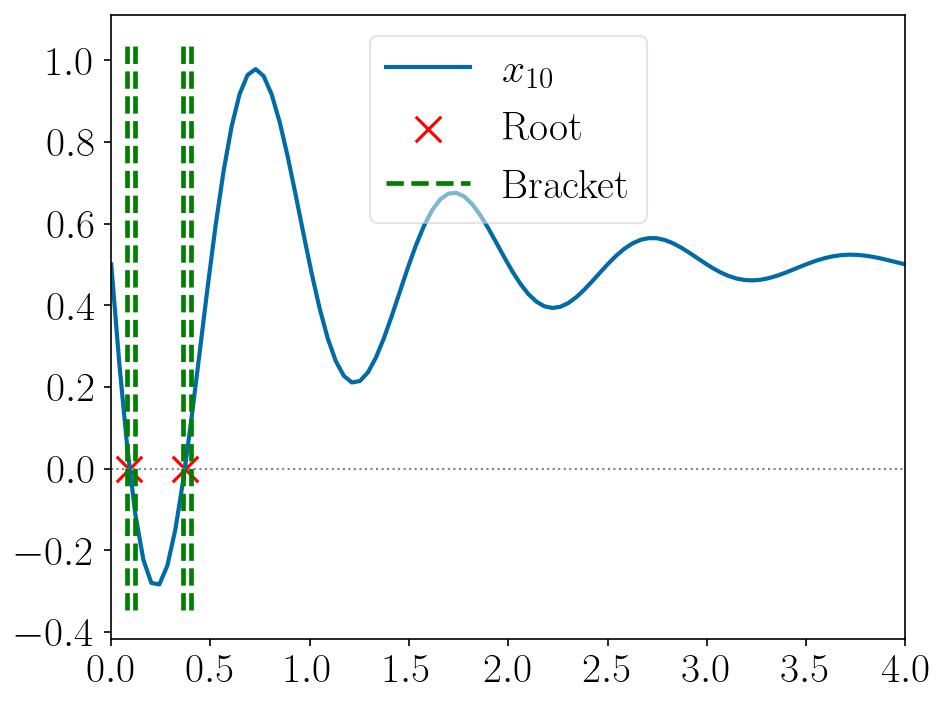}
  \caption{}
\end{subfigure}%
\begin{subfigure}{0.43\textwidth}
  \centering
  \includegraphics[width=1.\linewidth]{./figures/TrueRoots_Brackets/11.jpg}
  \caption{}
\end{subfigure}
\centering
\begin{subfigure}{0.43\textwidth}
  \centering
  \includegraphics[width=1.\linewidth]{./figures/TrueRoots_Brackets/12.jpg}
  \caption{}
\end{subfigure}%
\begin{subfigure}{0.43\textwidth}
  \centering
  \includegraphics[width=1.\linewidth]{./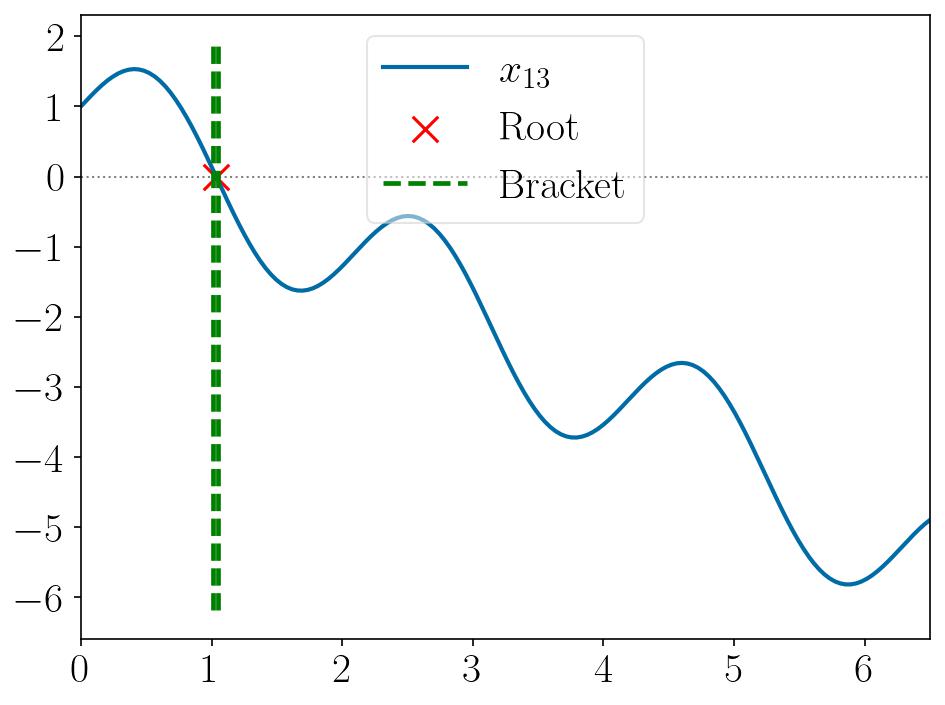}
  \caption{}
\end{subfigure}
\caption{True zero-crossing(s) and brackets returned by the algorithm for functions $x_{7}$, and $x_{9}$ to $x_{13}$ at a low sampling frequency of 25 Hz.}
\label{fig:appB_fig2}
\end{figure}
    \newpage

\section{Root Brackets with Varying SNR}
\label{appendix:AppC}
Fig.~\ref{fig:appC_fig1} to \ref{fig:appC_fig2} show the robustness of the algorithm to noise by estimating the root(s) with a sufficient accuracy even at low SNR.
\begin{figure}[!htbp]
\centering
\begin{subfigure}{0.43\textwidth}
  \centering
  \includegraphics[width=1.\linewidth]{./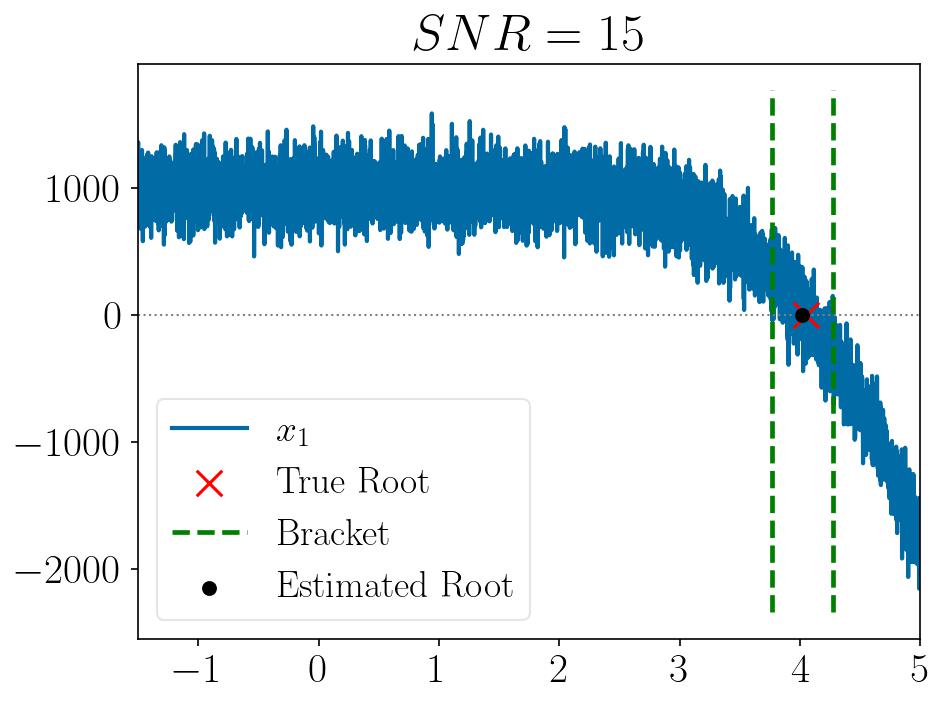}
  \caption{}
\end{subfigure}%
\begin{subfigure}{0.43\textwidth}
  \centering
  \includegraphics[width=1.\linewidth]{./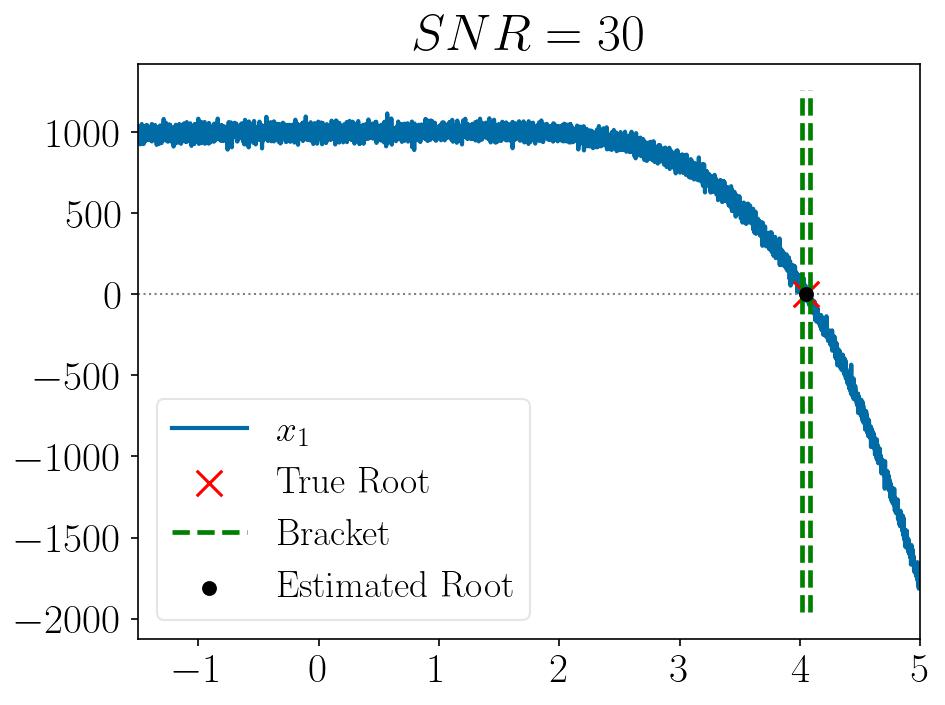}
  \caption{}
\end{subfigure}
\begin{subfigure}{0.43\textwidth}
  \centering
  \includegraphics[width=1.\linewidth]{./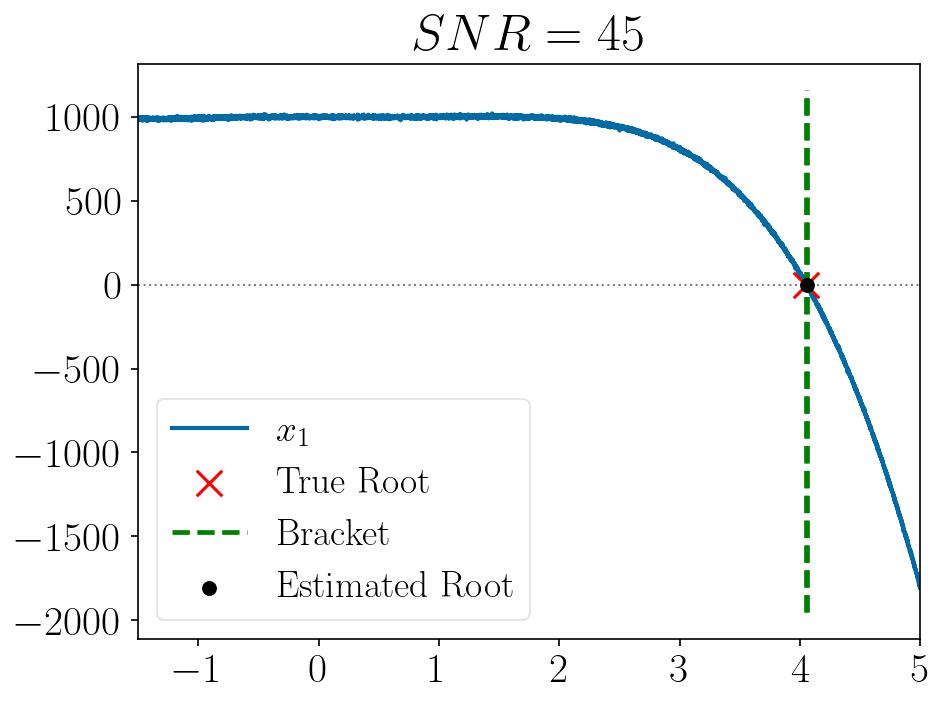}
  \caption{}
\end{subfigure}%
\begin{subfigure}{0.43\textwidth}
  \centering
  \includegraphics[width=1.\linewidth]{./figures/SNR_varied/2_15.jpg}
  \caption{}
\end{subfigure}
\begin{subfigure}{0.43\textwidth}
  \centering
  \includegraphics[width=1.\linewidth]{./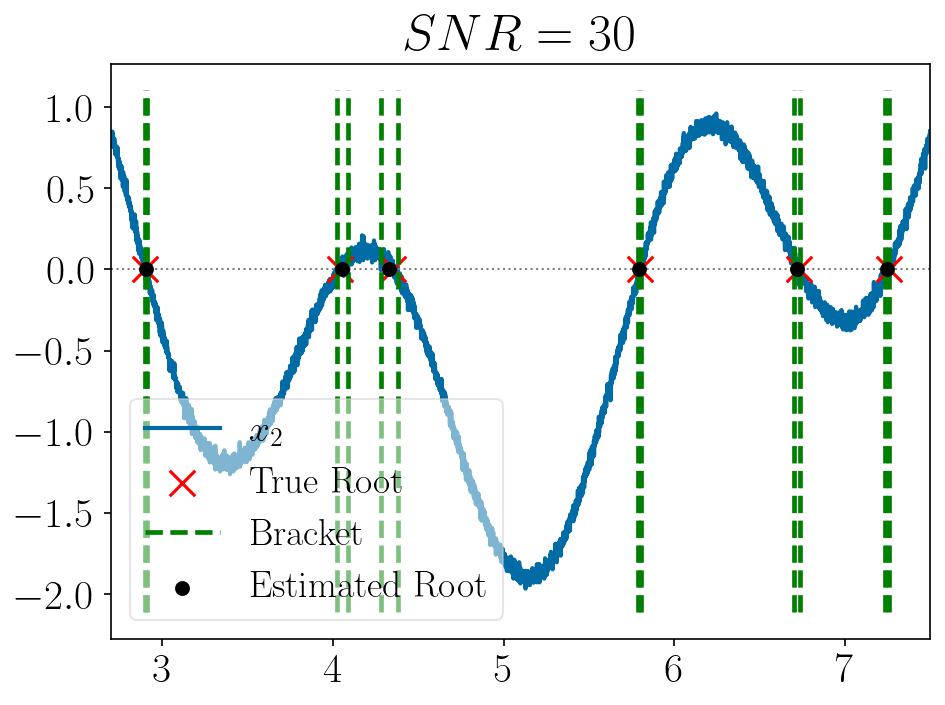}
  \caption{}
\end{subfigure}%
\begin{subfigure}{0.43\textwidth}
  \centering
  \includegraphics[width=1.\linewidth]{./figures/SNR_varied/2_45.jpg}
  \caption{}
\end{subfigure}
\caption{True roots, brackets and estimated roots returned at a sampling frequency of 1000 Hz with low, medium and high SNR for functions $x_{1}$ and $x_{2}$}
\label{fig:appC_fig1}
\end{figure}
\begin{figure}[!htbp]
\centering
\begin{subfigure}{0.43\textwidth}
  \centering
  \includegraphics[width=1.\linewidth]{./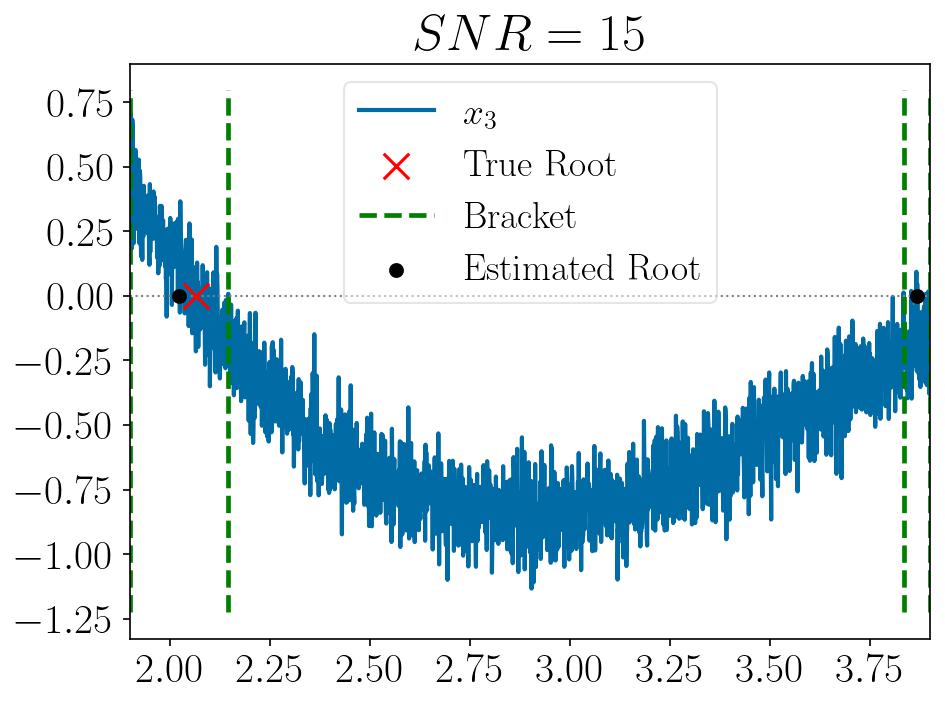}
  \caption{}
\end{subfigure}%
\begin{subfigure}{0.43\textwidth}
  \centering
  \includegraphics[width=1.\linewidth]{./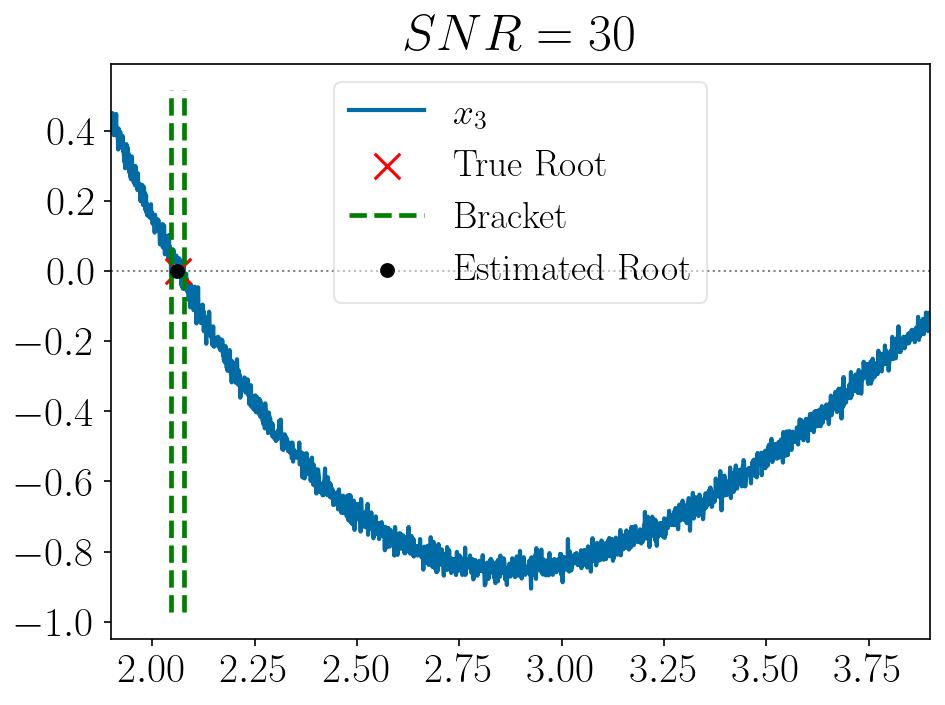}
  \caption{}
\end{subfigure}
\begin{subfigure}{0.43\textwidth}
  \centering
  \includegraphics[width=1.\linewidth]{./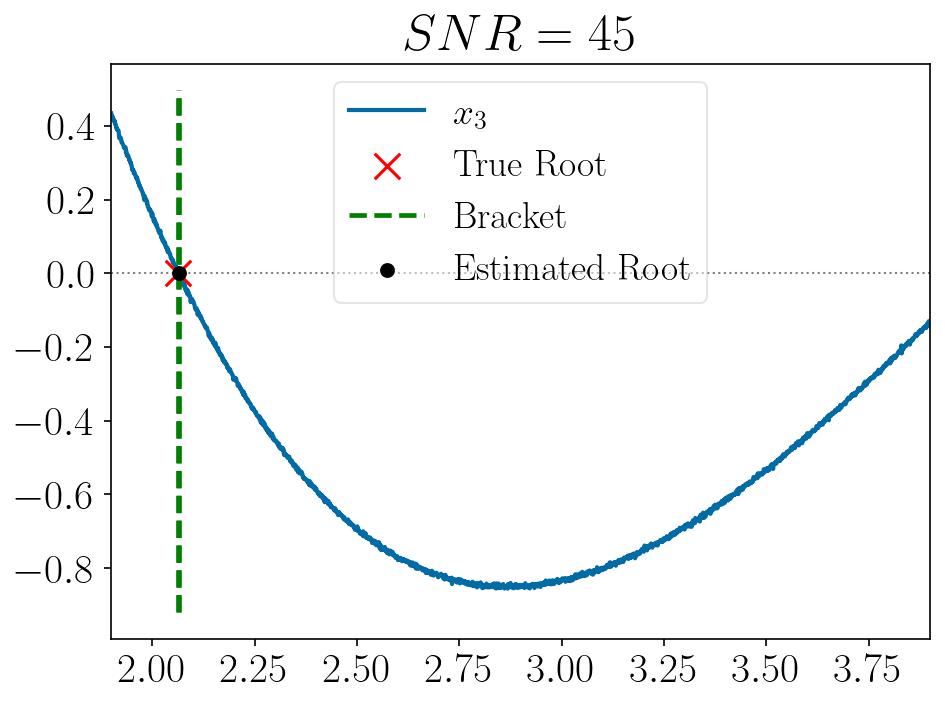}
  \caption{}
\end{subfigure}%
\begin{subfigure}{0.43\textwidth}
  \centering
  \includegraphics[width=1.\linewidth]{./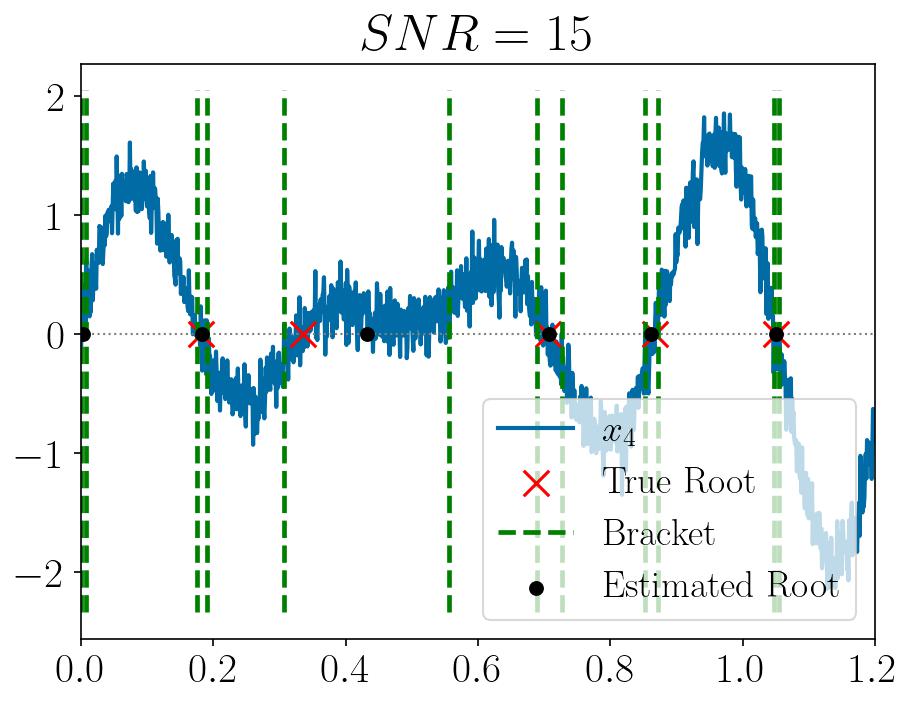}
  \caption{}
\end{subfigure}
\begin{subfigure}{0.43\textwidth}
  \centering
  \includegraphics[width=1.\linewidth]{./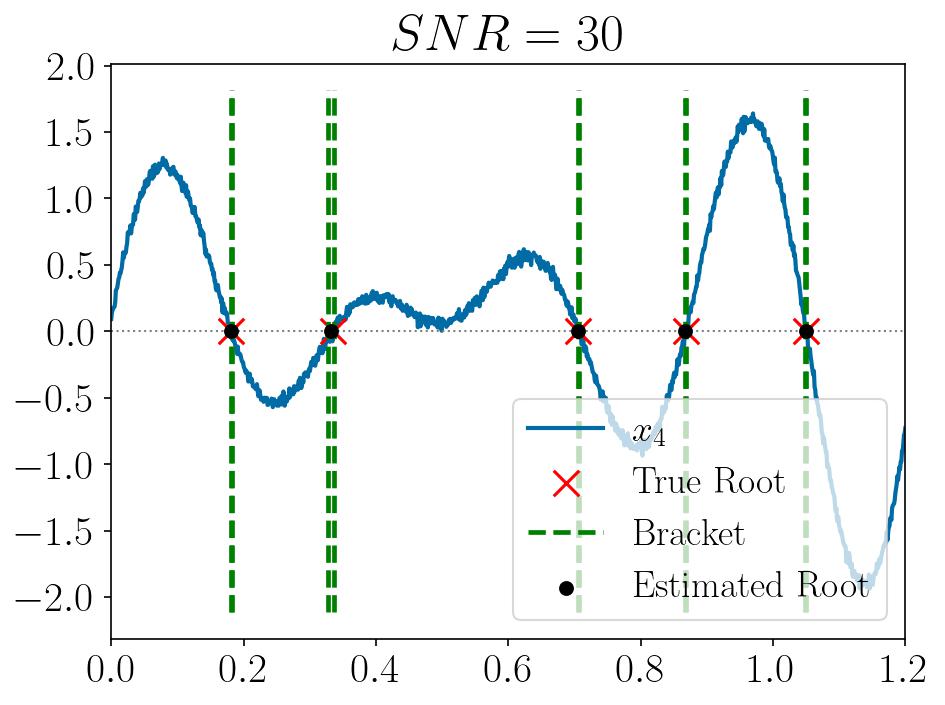}
  \caption{}
\end{subfigure}%
\begin{subfigure}{0.43\textwidth}
  \centering
  \includegraphics[width=1.\linewidth]{./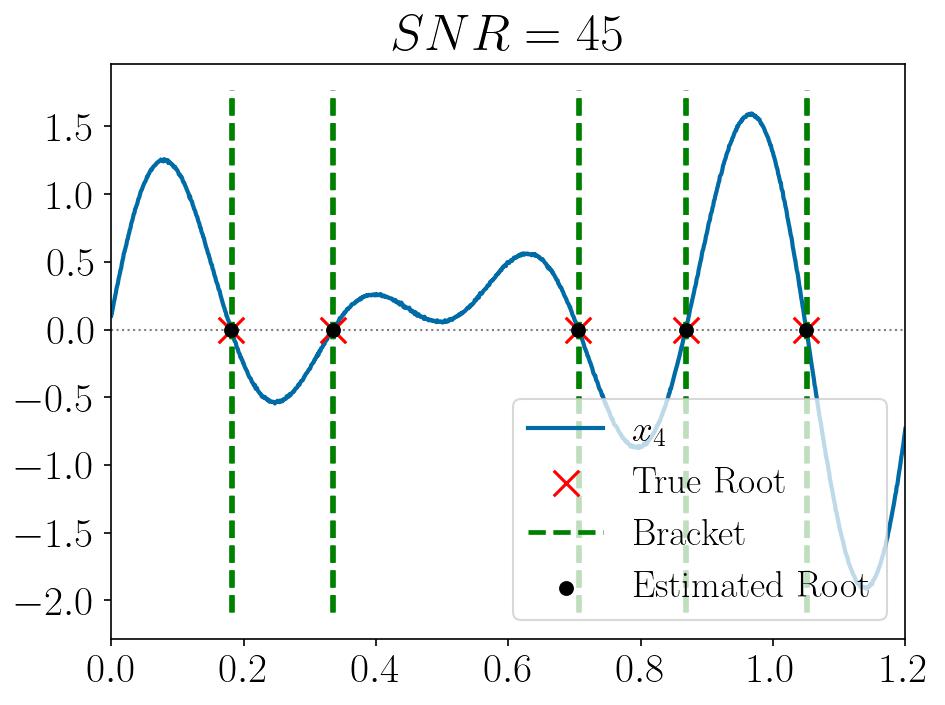}
  \caption{}
\end{subfigure}
\caption{True roots, brackets and estimated roots returned at a sampling frequency of 1000 Hz with low, medium and high SNR for functions $x_{3}$ and $x_{4}$}
\end{figure}
\begin{figure}[!htbp]
\centering
\begin{subfigure}{0.43\textwidth}
  \centering
  \includegraphics[width=1.\linewidth]{./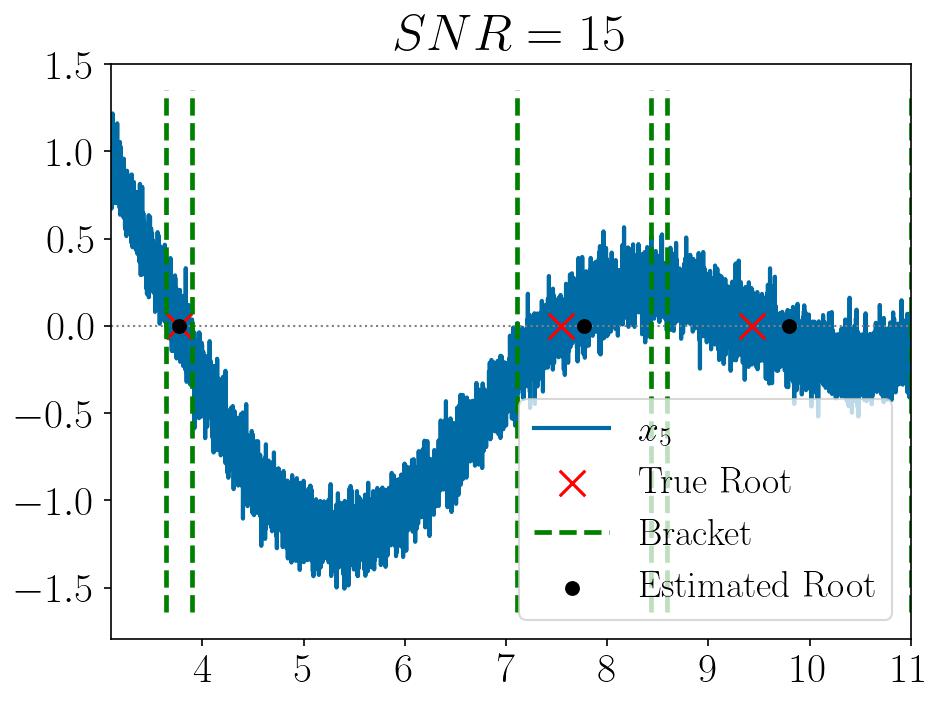}
  \caption{}
\end{subfigure}%
\begin{subfigure}{0.43\textwidth}
  \centering
  \includegraphics[width=1.\linewidth]{./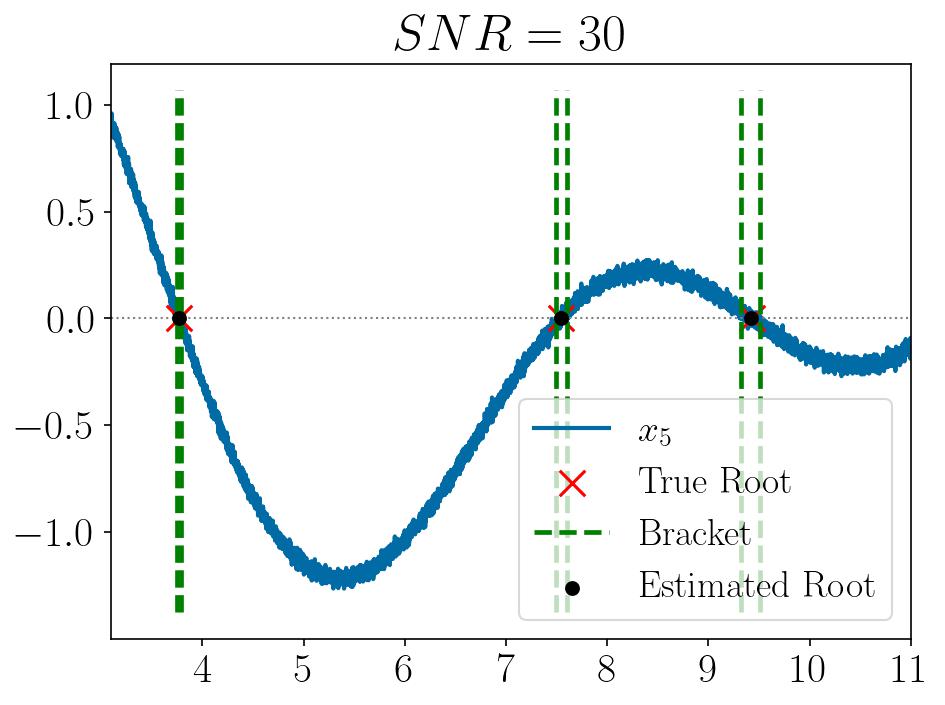}
  \caption{}
\end{subfigure}
\begin{subfigure}{0.43\textwidth}
  \centering
  \includegraphics[width=1.\linewidth]{./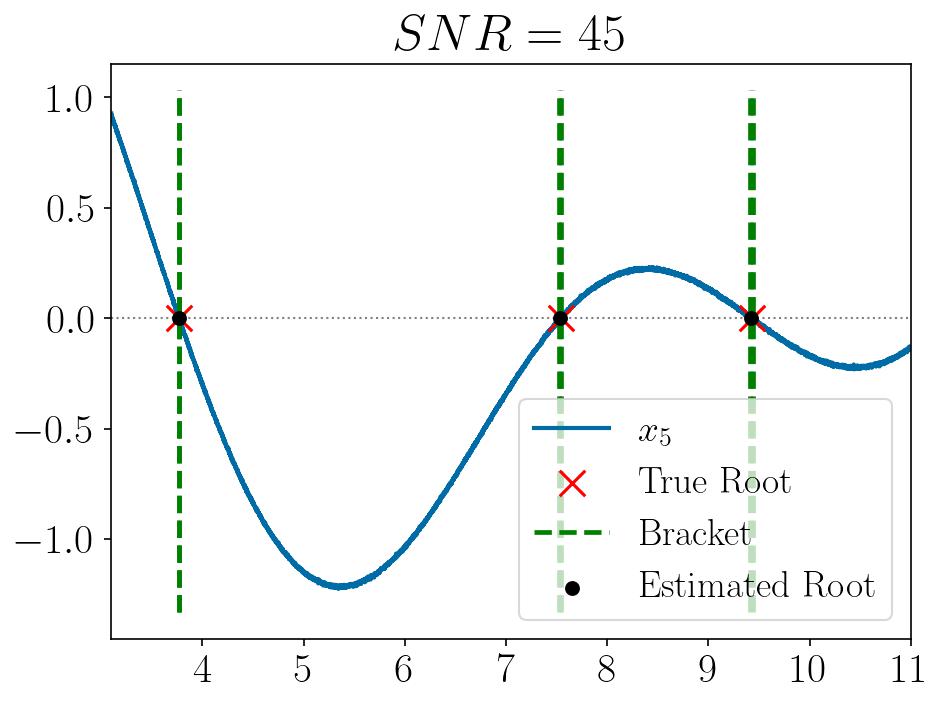}
  \caption{}
\end{subfigure}%
\begin{subfigure}{0.43\textwidth}
  \centering
  \includegraphics[width=1.\linewidth]{./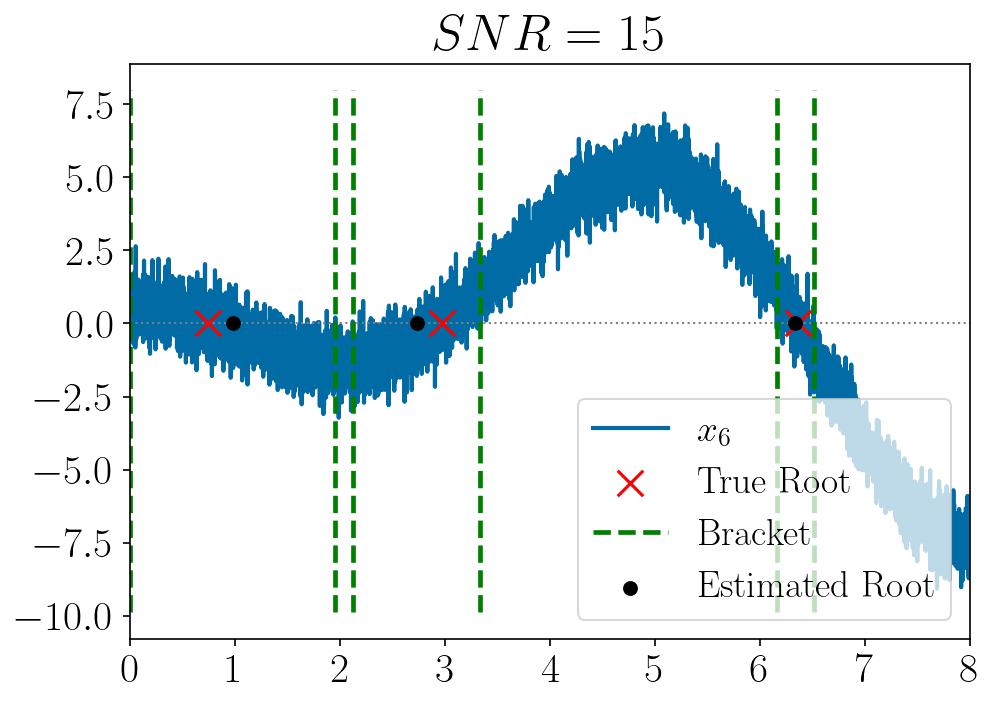}
  \caption{}
\end{subfigure}
\begin{subfigure}{0.43\textwidth}
  \centering
  \includegraphics[width=1.\linewidth]{./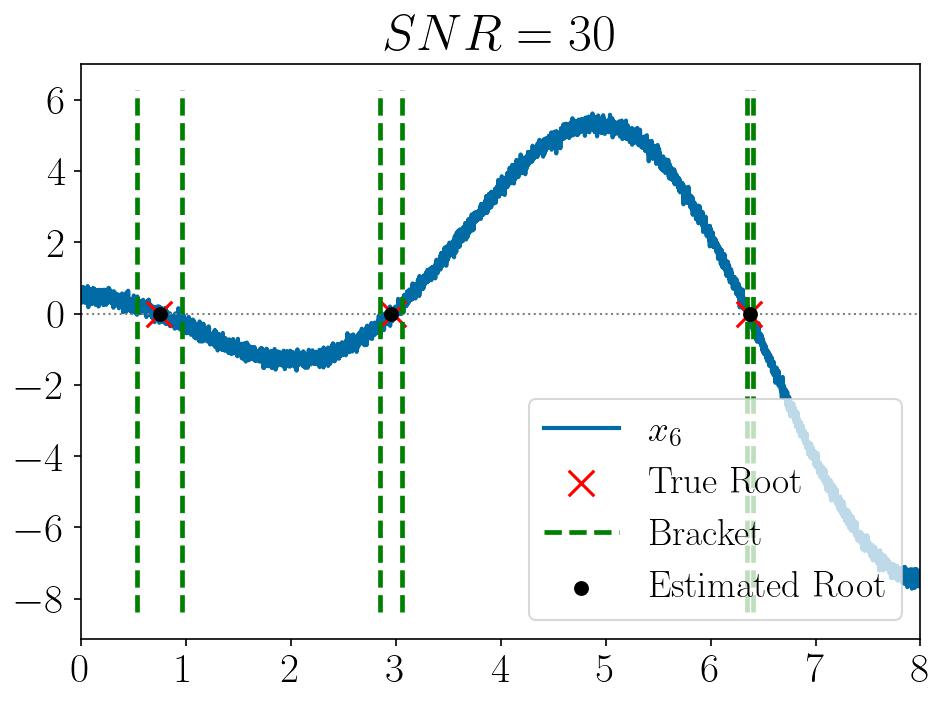}
  \caption{}
\end{subfigure}%
\begin{subfigure}{0.43\textwidth}
  \centering
  \includegraphics[width=1.\linewidth]{./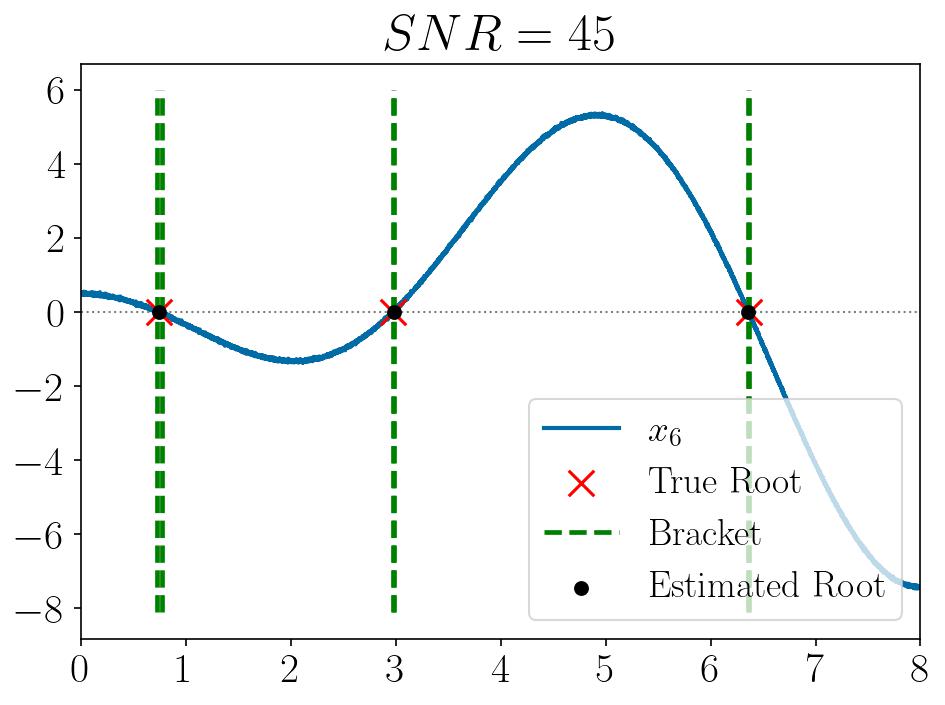}
  \caption{}
\end{subfigure}
\caption{True roots, brackets and estimated roots returned at a sampling frequency of 1000 Hz with low, medium and high SNR for functions $x_{5}$ and $x_{6}$}
\end{figure}
\begin{figure}[!htbp]
\centering
\begin{subfigure}{0.43\textwidth}
  \centering
  \includegraphics[width=1.\linewidth]{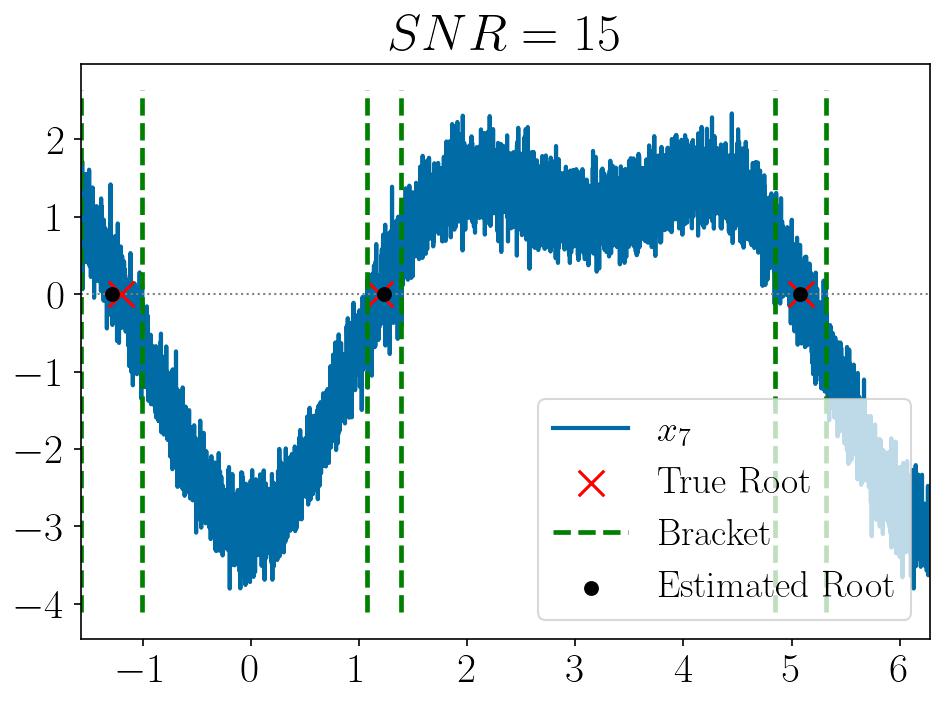}
  \caption{}
\end{subfigure}%
\begin{subfigure}{0.43\textwidth}
  \centering
  \includegraphics[width=1.\linewidth]{./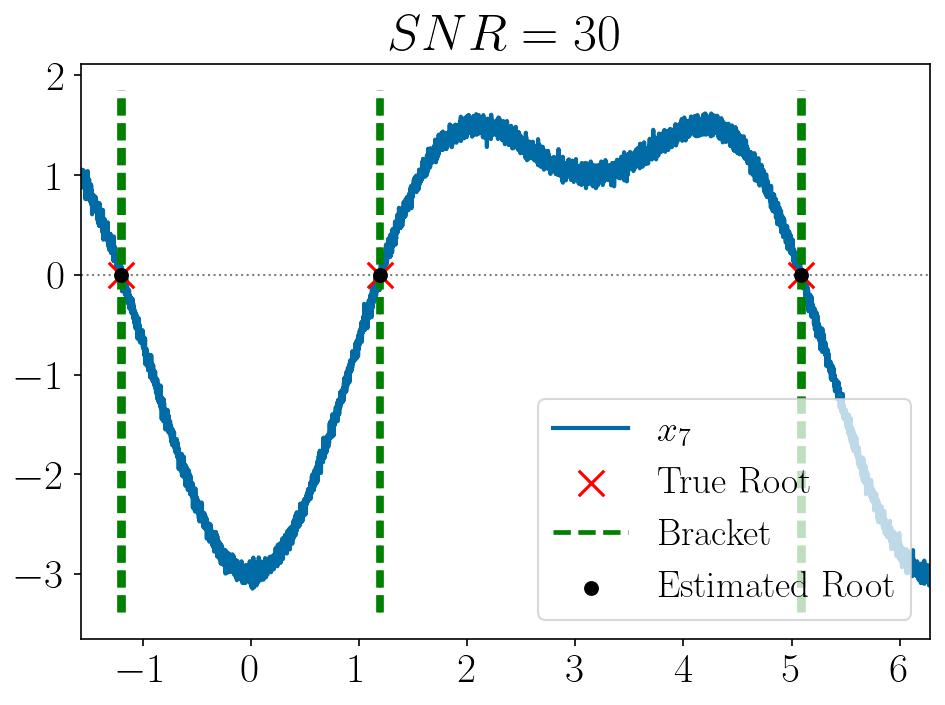}
  \caption{}
\end{subfigure}
\begin{subfigure}{0.43\textwidth}
  \centering
  \includegraphics[width=1.\linewidth]{./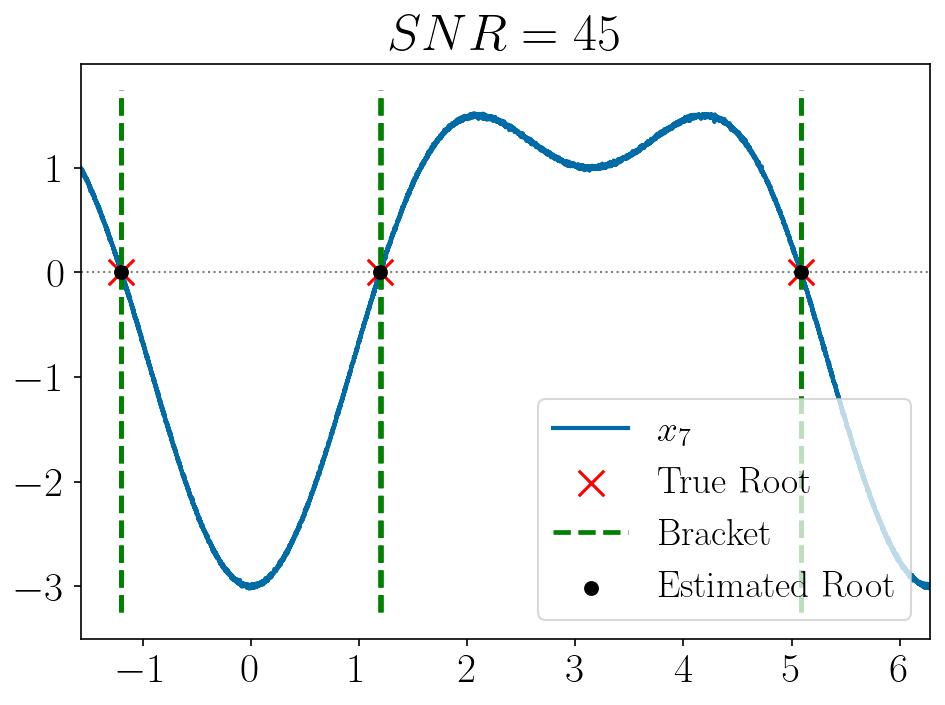}
  \caption{}
\end{subfigure}%
\begin{subfigure}{0.43\textwidth}
  \centering
  \includegraphics[width=1.\linewidth]{./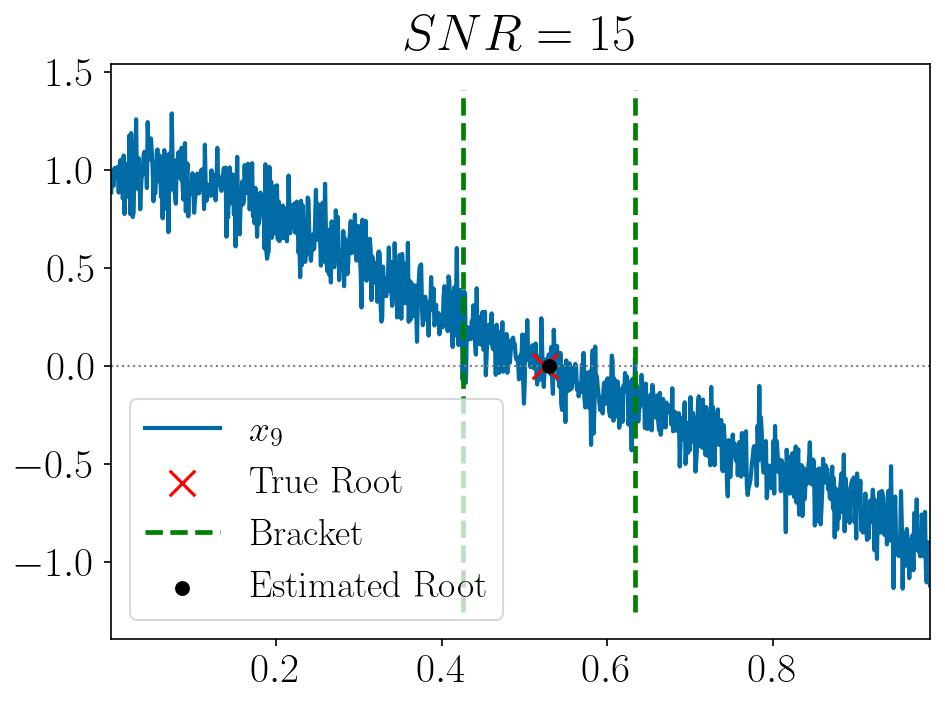}
  \caption{}
\end{subfigure}
\begin{subfigure}{0.43\textwidth}
  \centering
  \includegraphics[width=1.\linewidth]{./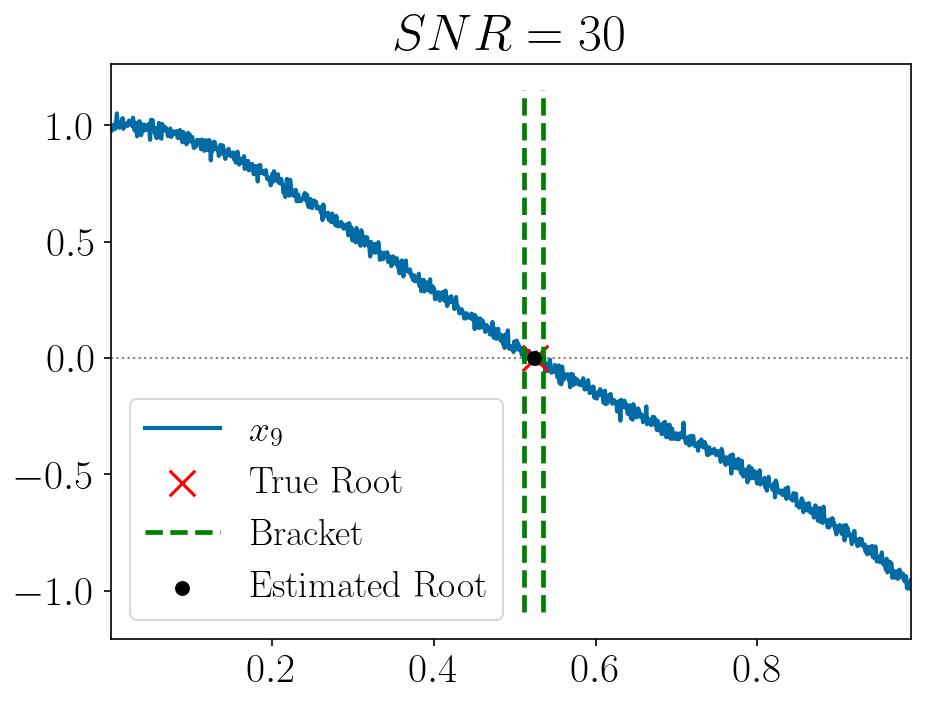}
  \caption{}
\end{subfigure}%
\begin{subfigure}{0.43\textwidth}
  \centering
  \includegraphics[width=1.\linewidth]{./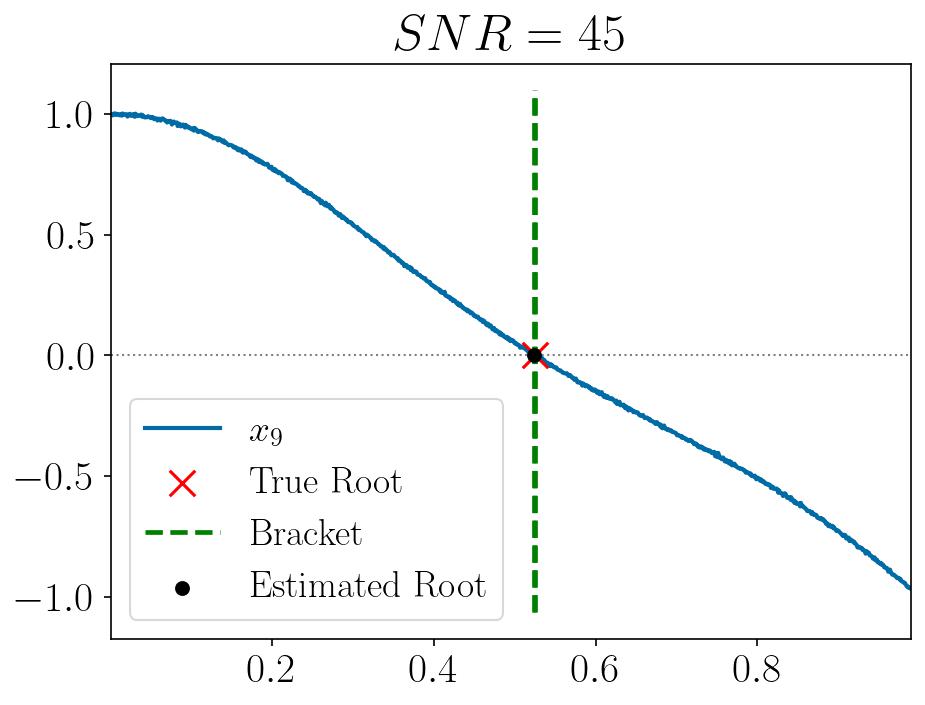}
  \caption{}
\end{subfigure}
\caption{True roots, brackets and estimated roots returned at a sampling frequency of 1000 Hz with low, medium and high SNR for functions $x_{7}$ and $x_{9}$}
\end{figure}
\begin{figure}[!htbp]
\centering
\begin{subfigure}{0.43\textwidth}
  \centering
  \includegraphics[width=1.\linewidth]{./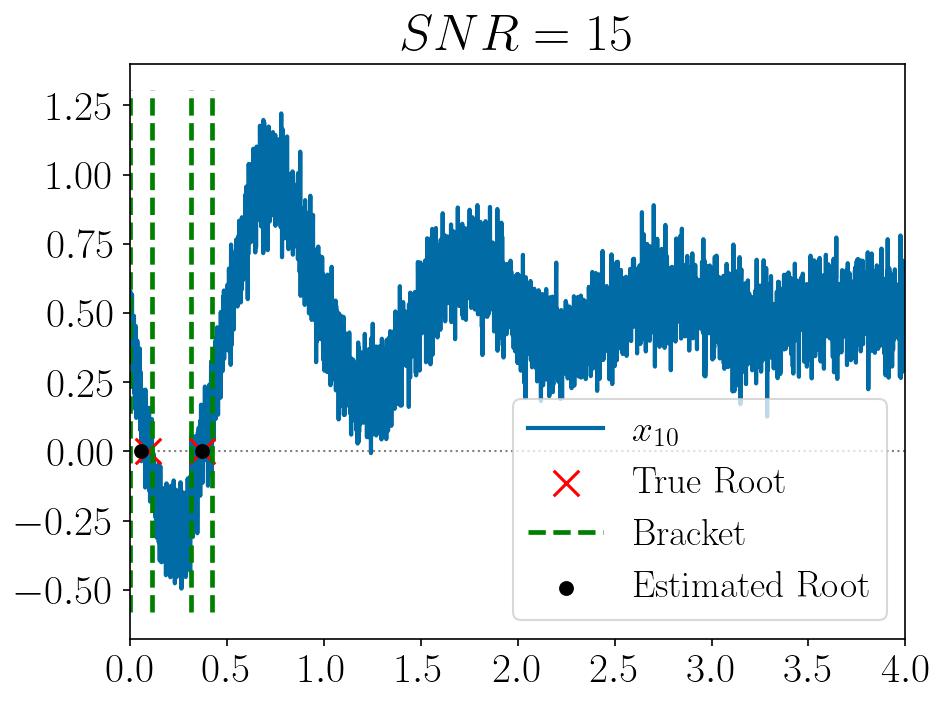}
  \caption{}
\end{subfigure}%
\begin{subfigure}{0.43\textwidth}
  \centering
  \includegraphics[width=1.\linewidth]{./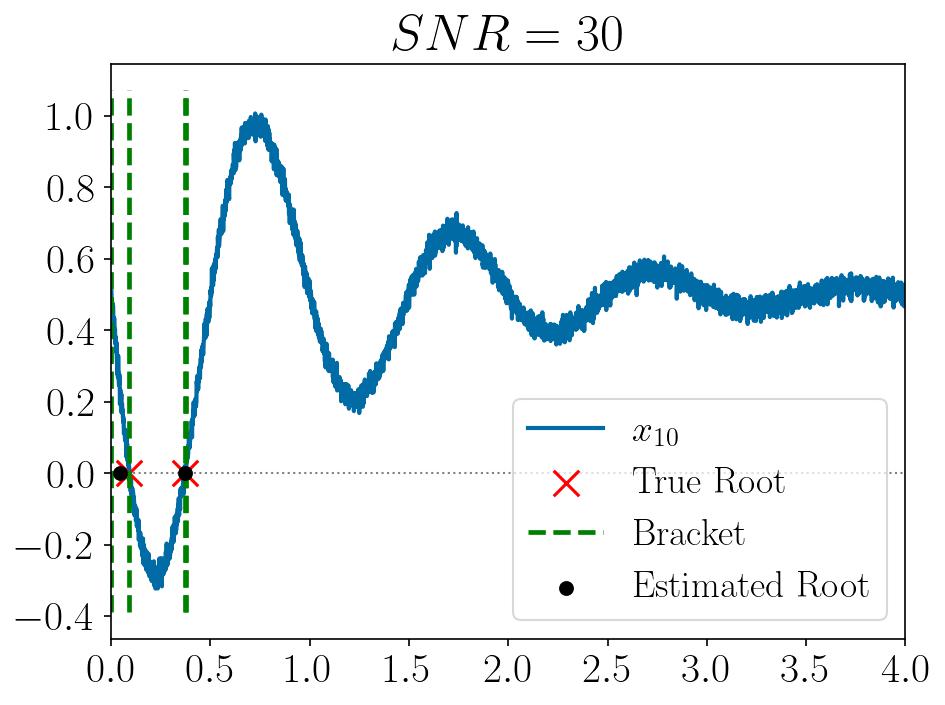}
  \caption{}
\end{subfigure}
\begin{subfigure}{0.43\textwidth}
  \centering
  \includegraphics[width=1.\linewidth]{./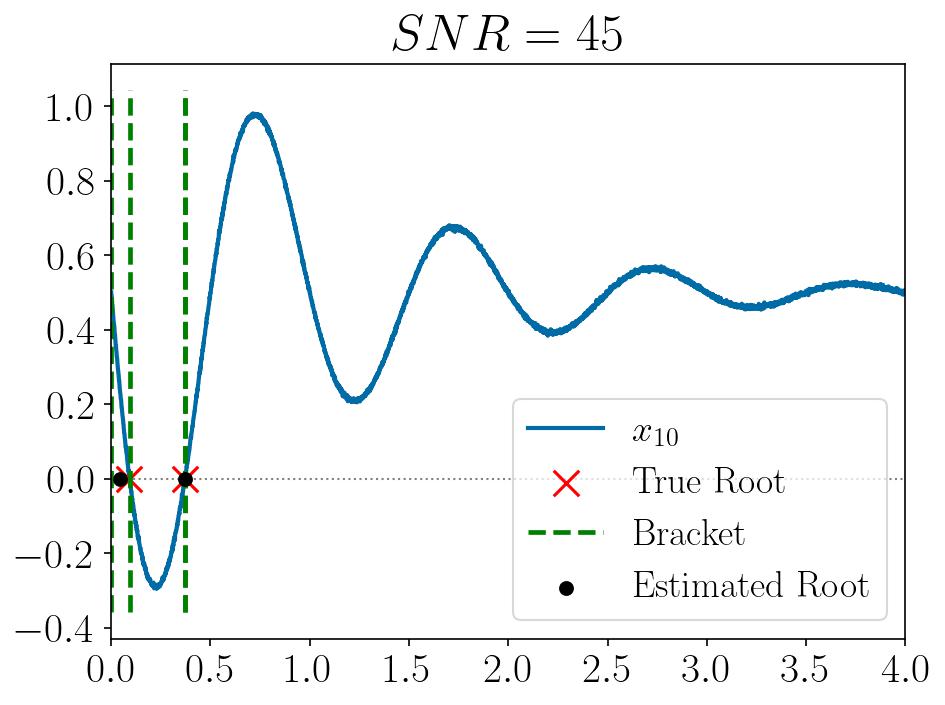}
  \caption{}
\end{subfigure}%
\begin{subfigure}{0.43\textwidth}
  \centering
  \includegraphics[width=1.\linewidth]{./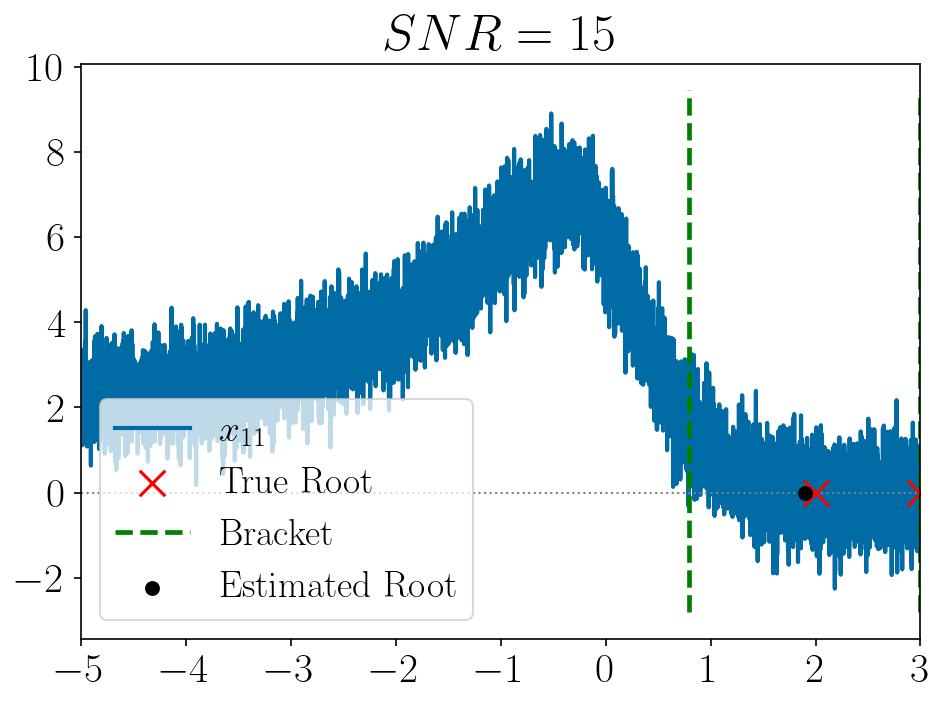}
  \caption{}
\end{subfigure}
\begin{subfigure}{0.43\textwidth}
  \centering
  \includegraphics[width=1.\linewidth]{./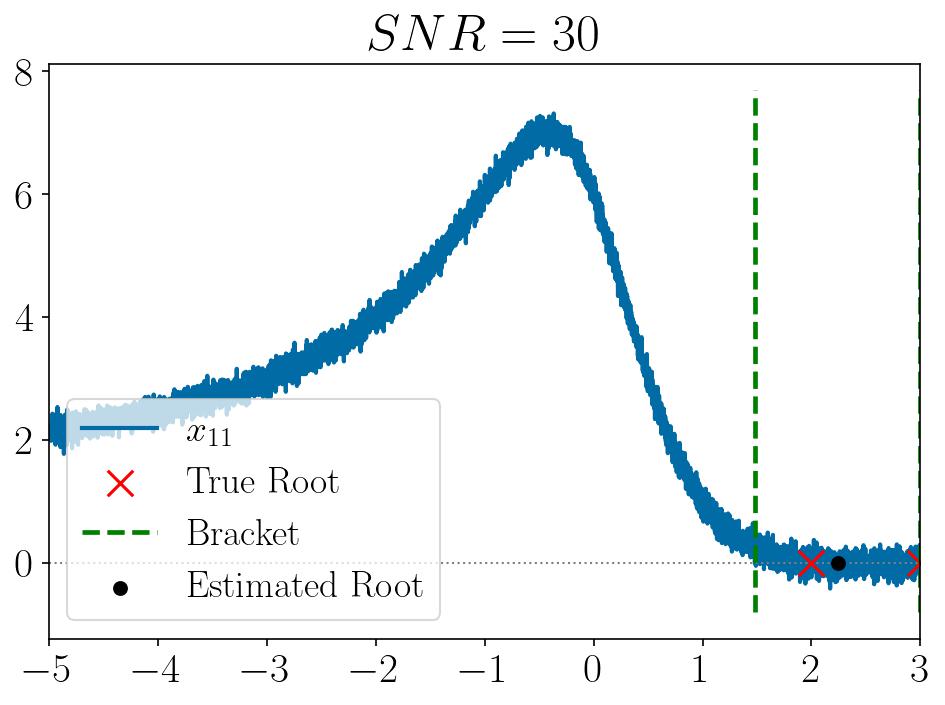}
  \caption{}
\end{subfigure}%
\begin{subfigure}{0.43\textwidth}
  \centering
  \includegraphics[width=1.\linewidth]{./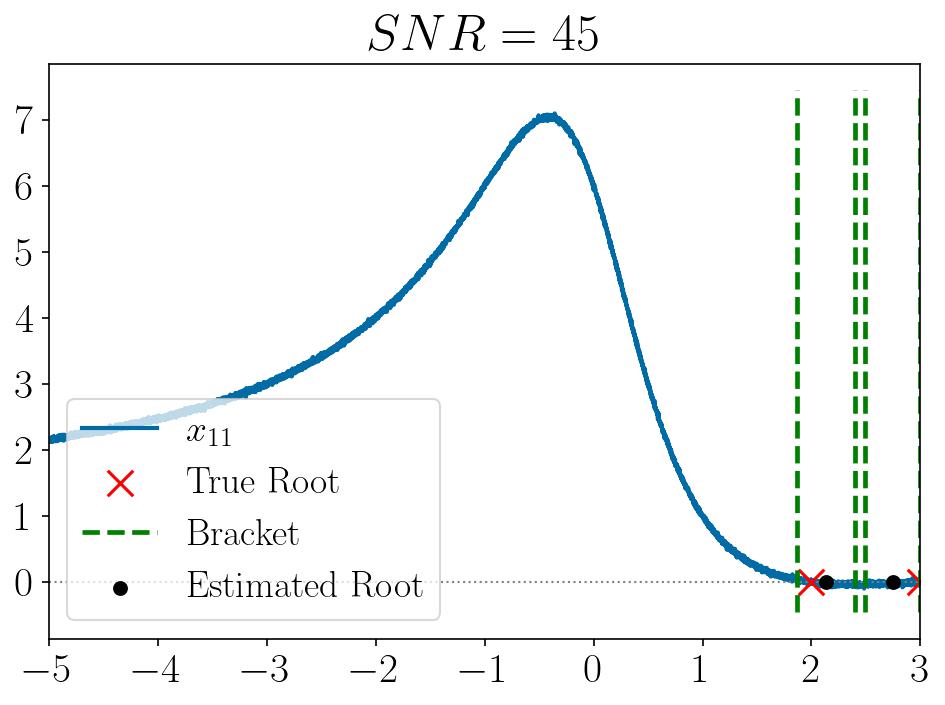}
  \caption{}
\end{subfigure}
\caption{True roots, brackets and estimated roots returned at a sampling frequency of 1000 Hz with low, medium and high SNR for functions $x_{10}$ and $x_{11}$}
\end{figure}
\begin{figure}[!htbp]
\centering
\begin{subfigure}{0.43\textwidth}
  \centering
  \includegraphics[width=1.\linewidth]{./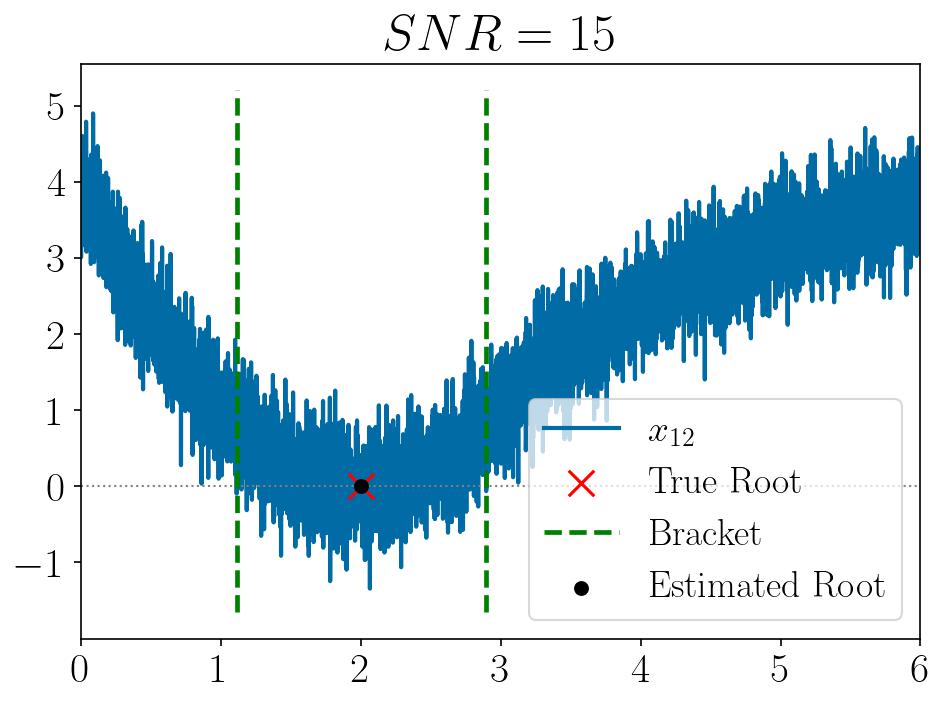}
  \caption{}
\end{subfigure}%
\begin{subfigure}{0.43\textwidth}
  \centering
  \includegraphics[width=1.\linewidth]{./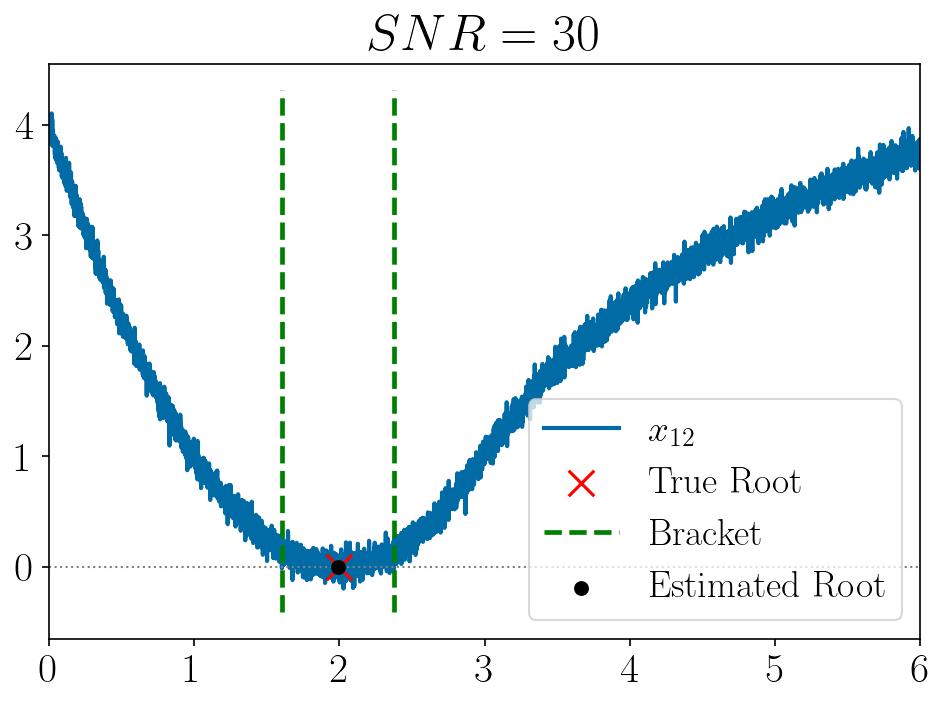}
  \caption{}
\end{subfigure}
\begin{subfigure}{0.43\textwidth}
  \centering
  \includegraphics[width=1.\linewidth]{./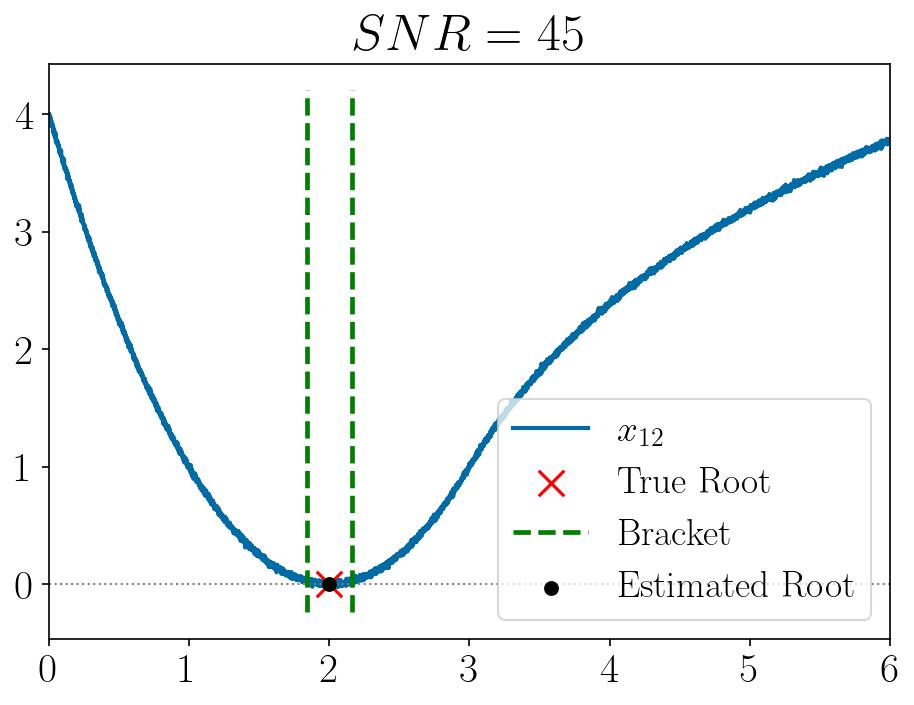}
  \caption{}
\end{subfigure}%
\begin{subfigure}{0.43\textwidth}
  \centering
  \includegraphics[width=1.\linewidth]{./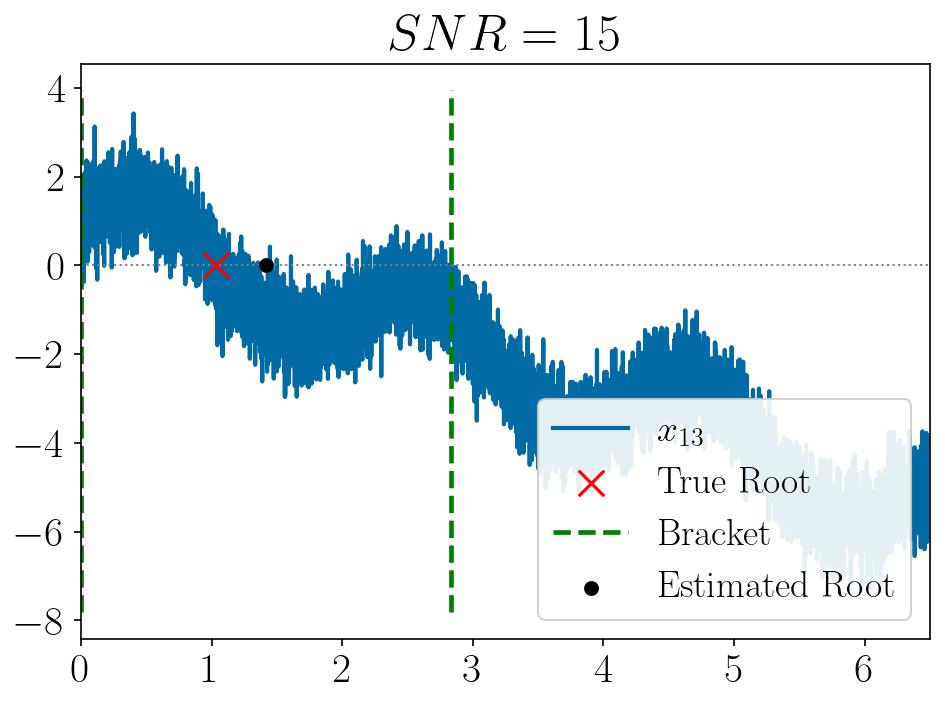}
  \caption{}
\end{subfigure}
\begin{subfigure}{0.43\textwidth}
  \centering
  \includegraphics[width=1.\linewidth]{./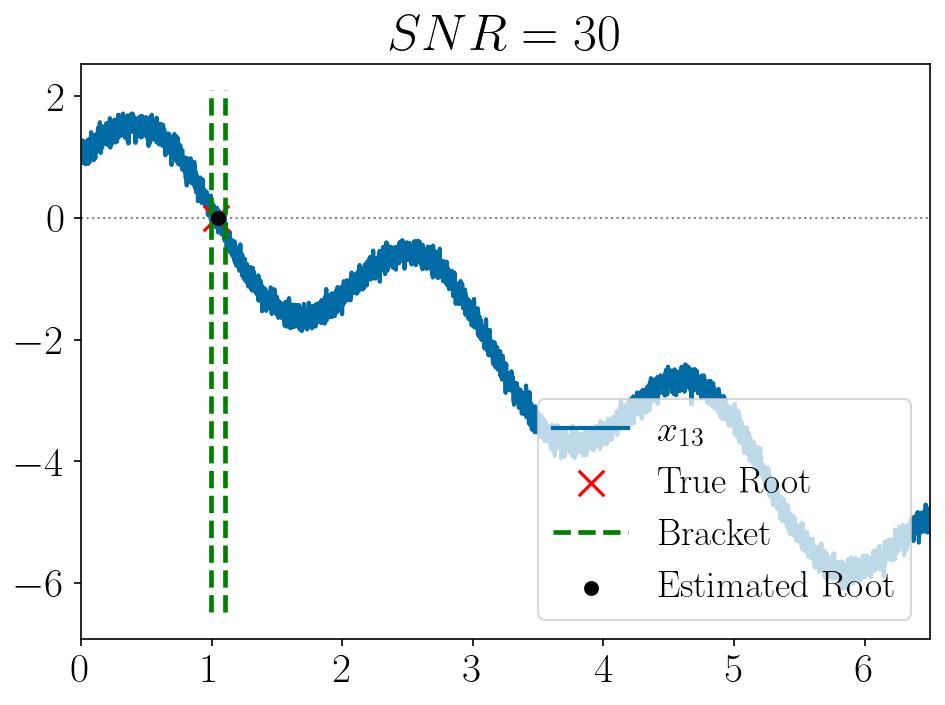}
  \caption{}
\end{subfigure}%
\begin{subfigure}{0.43\textwidth}
  \centering
  \includegraphics[width=1.\linewidth]{./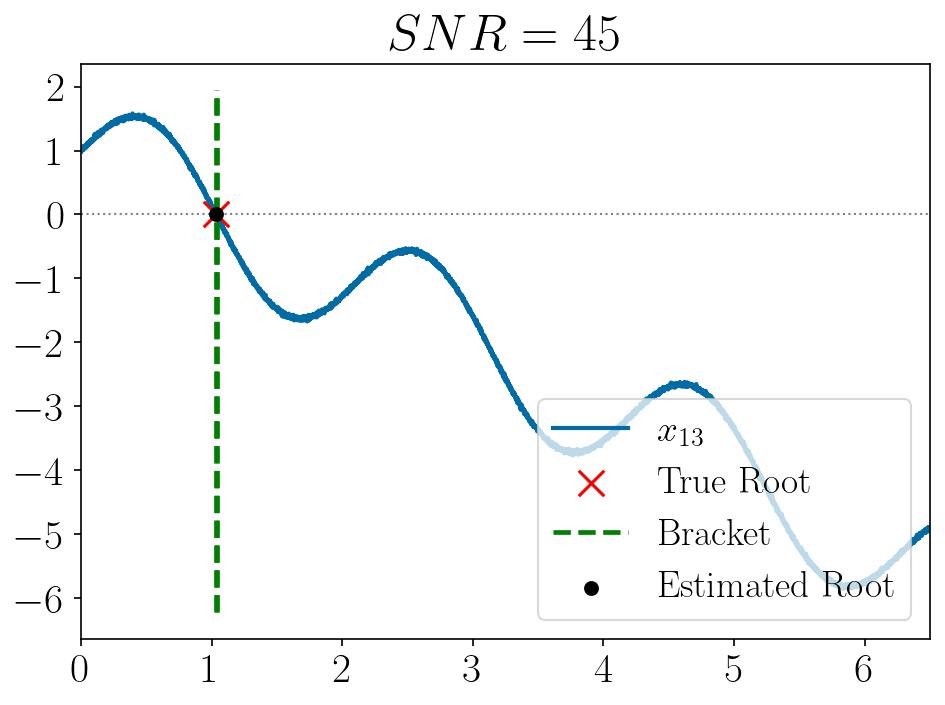}
  \caption{}
\end{subfigure}
\caption{True roots, brackets and estimated roots returned at a sampling frequency of 1000 Hz with low, medium and high SNR for functions $x_{12}$ and $x_{13}$}
\label{fig:appC_fig2}
\end{figure}
    \newpage

\section{Relative Error and Convergence Time}
\label{appendix:AppD}
Fig.~\ref{fig:appD_fig1} shows a maximum of $6\%$ relative error from 0D persistence compared to a maximum error of over $100\%$ yielded by the global optimization algorithm. Similarly, the maximum time of convergence for persistence is $0.15$ s while global optimization required over $200$ s. The same disparity can be seen in heat maps of all functions in Figs.~\ref{fig:appD_fig2} to \ref{fig:appD_fig13} proving the higher efficiency of 0D persistence algorithm.
\begin{figure}[!htbp]
\centering
\begin{subfigure}{0.45\textwidth}
  \centering
  \includegraphics[width=1.\linewidth]{./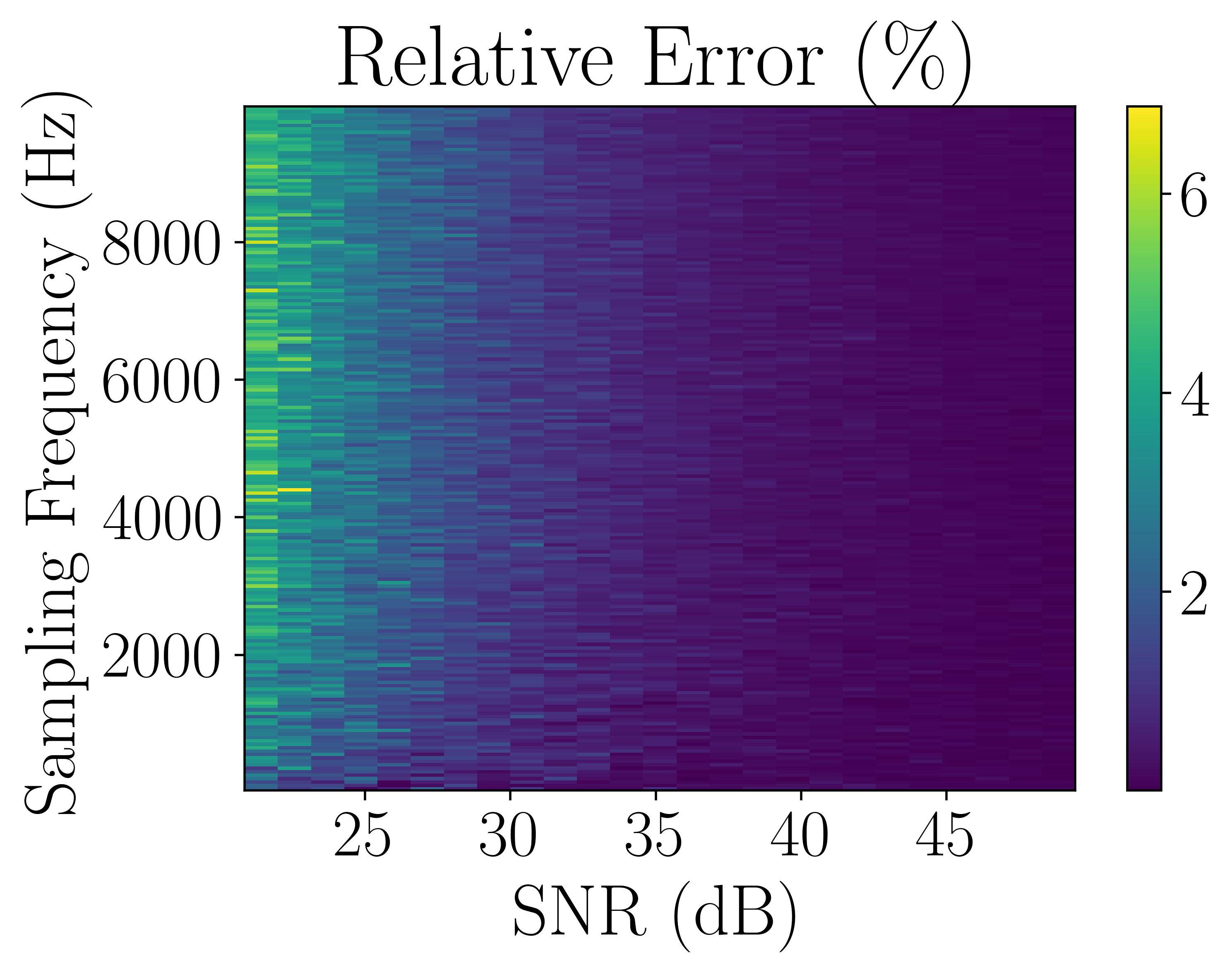}
    \caption{}
\end{subfigure}%
\begin{subfigure}{0.45\textwidth}
  \centering
  \includegraphics[width=1.\linewidth]{./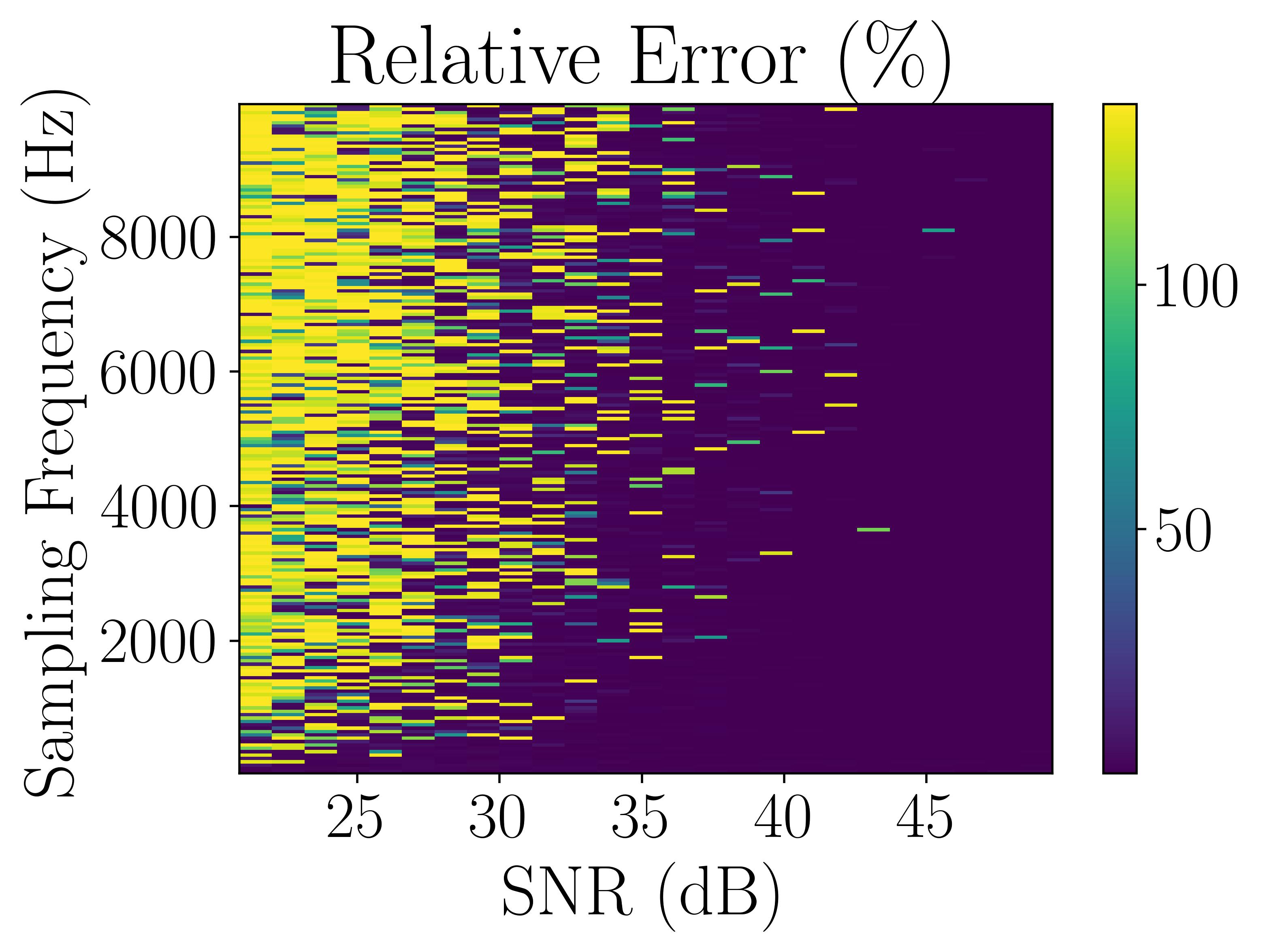}
      \caption{}
\end{subfigure}
\begin{subfigure}{0.45\textwidth}
  \centering
  \includegraphics[width=1.\linewidth]{./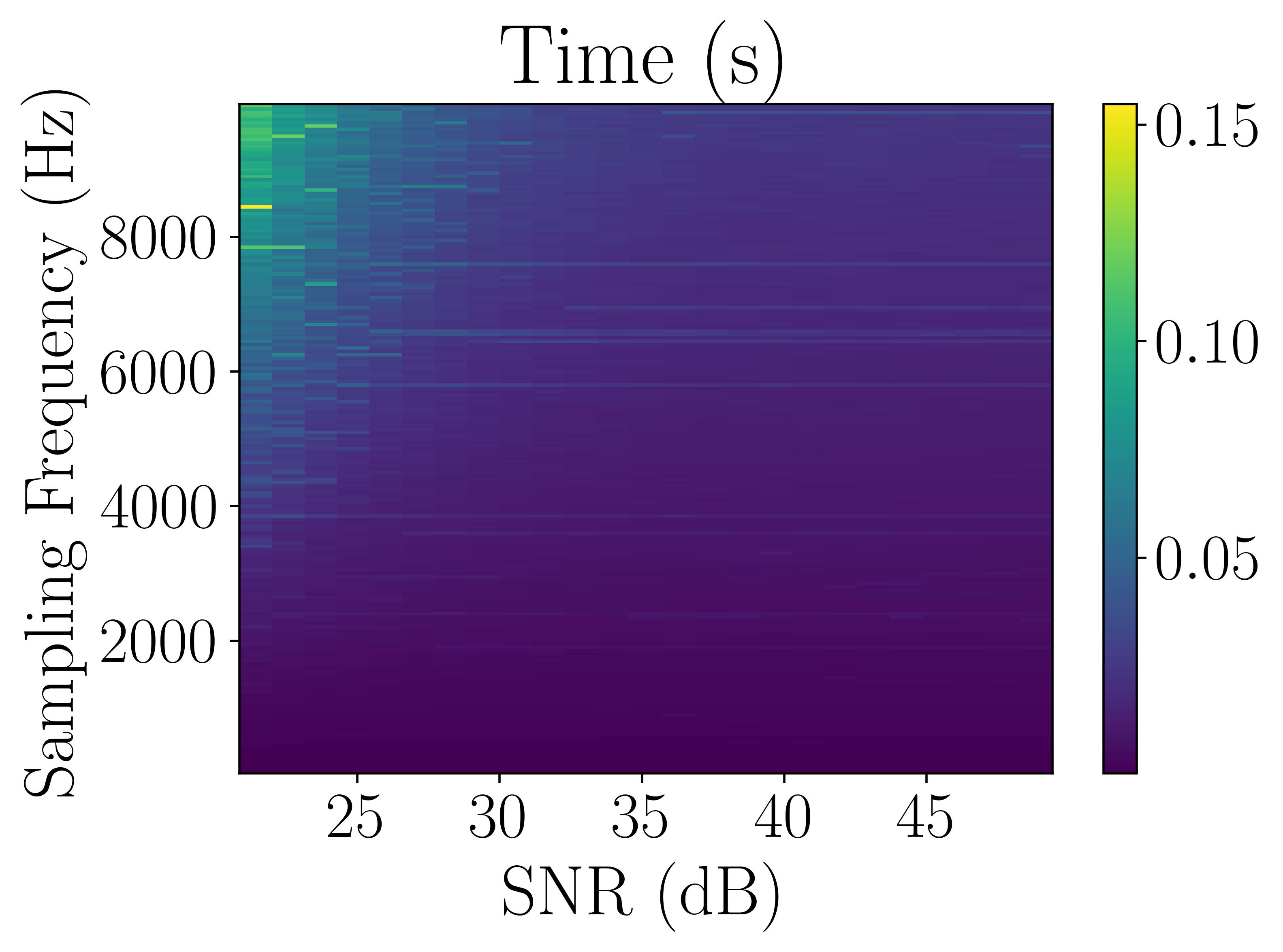}
      \caption{}
\end{subfigure}%
\begin{subfigure}{0.45\textwidth}
  \centering
  \includegraphics[width=1.\linewidth]{./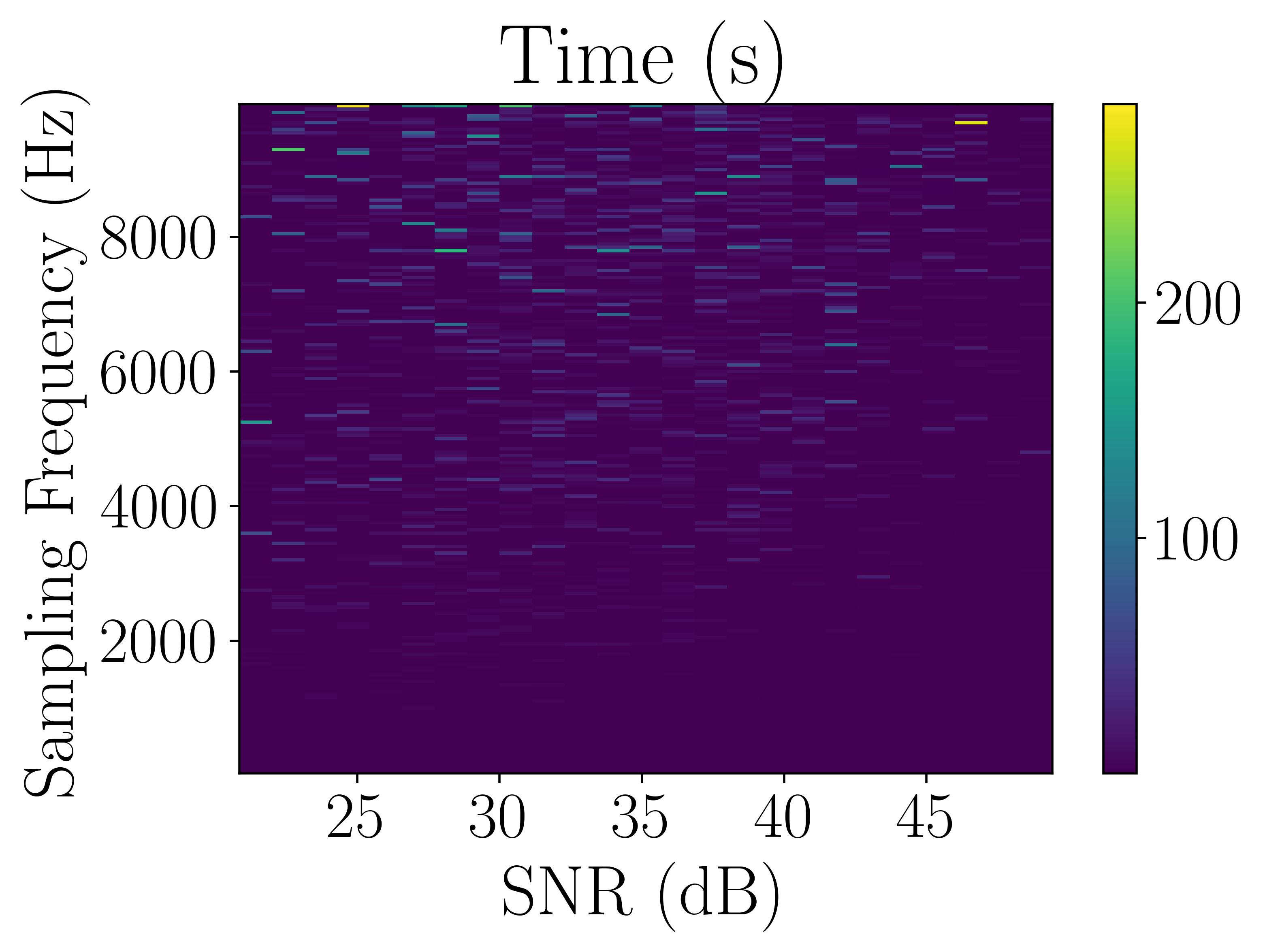}
      \caption{}
\end{subfigure}
\caption{Relative error and time taken for convergence by 0D Persistence (a, c) and Molinaro's algorithm (b, d) for $x_{1}$}
\label{fig:appD_fig1}
\end{figure}
\begin{figure}[!htbp]
\centering
\begin{subfigure}{0.45\textwidth}
  \centering
  \includegraphics[width=1.\linewidth]{./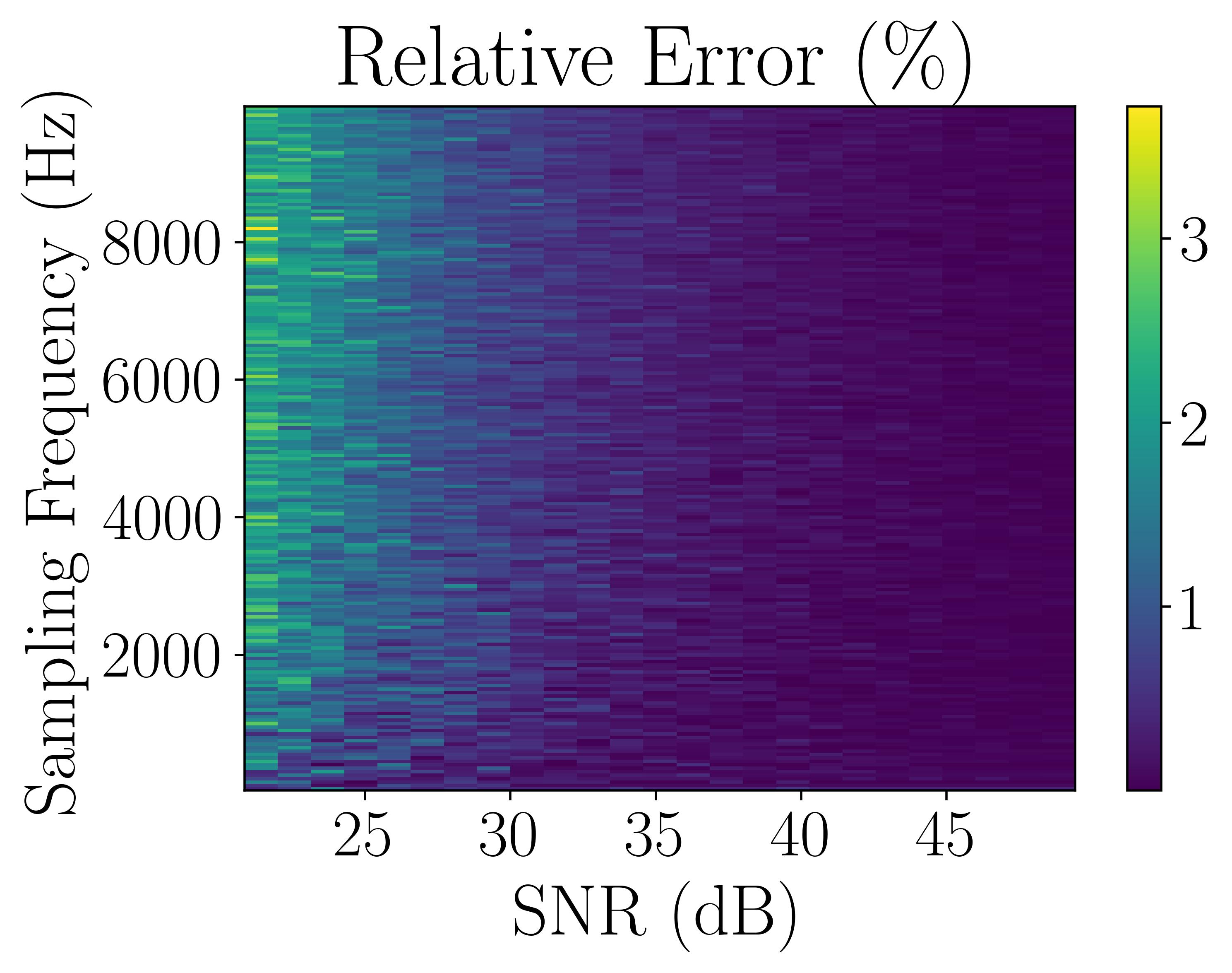}
    \caption{}
\end{subfigure}%
\begin{subfigure}{0.45\textwidth}
  \centering
  \includegraphics[width=1.\linewidth]{./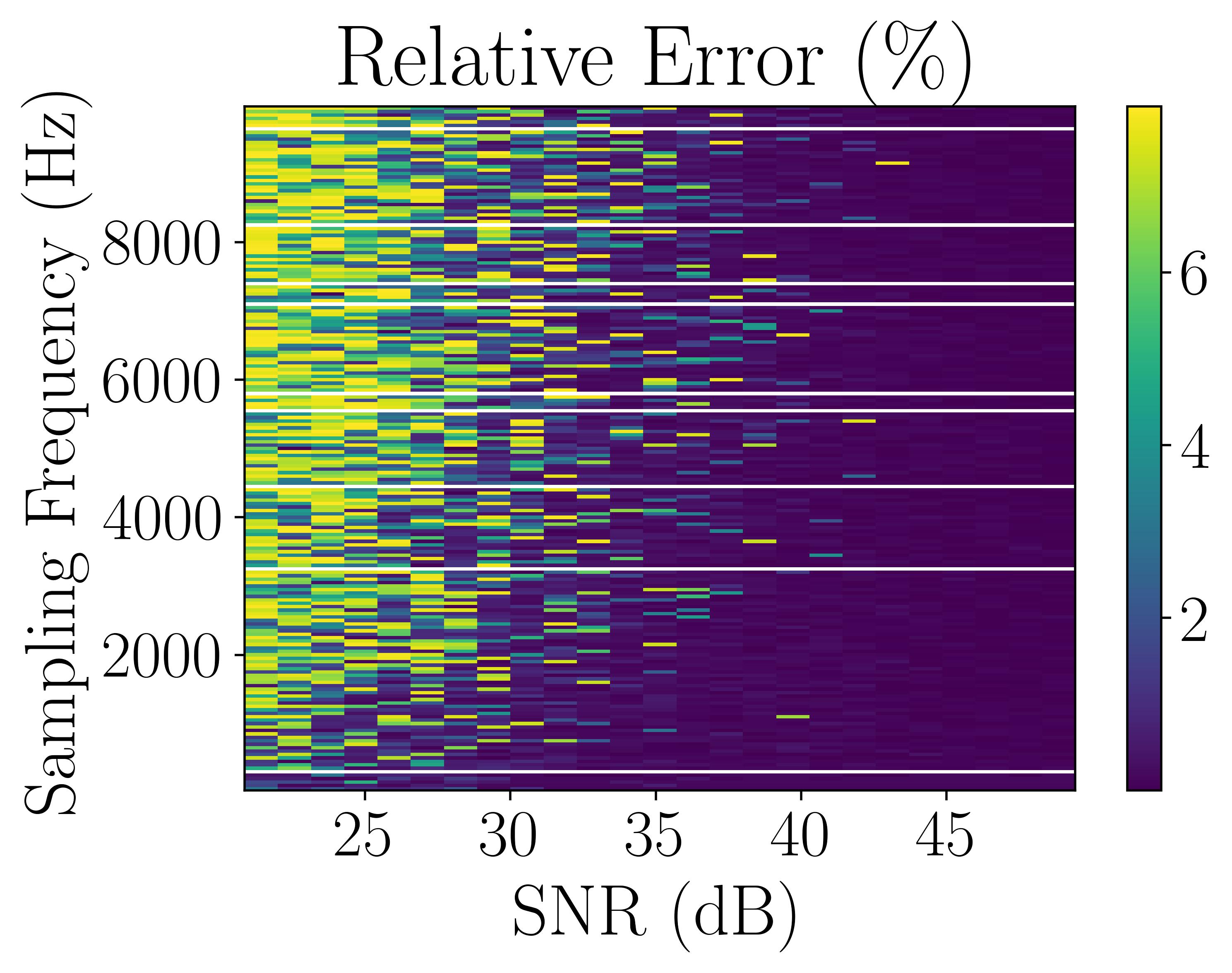}
      \caption{}
\end{subfigure}
\begin{subfigure}{0.45\textwidth}
  \centering
  \includegraphics[width=1.\linewidth]{./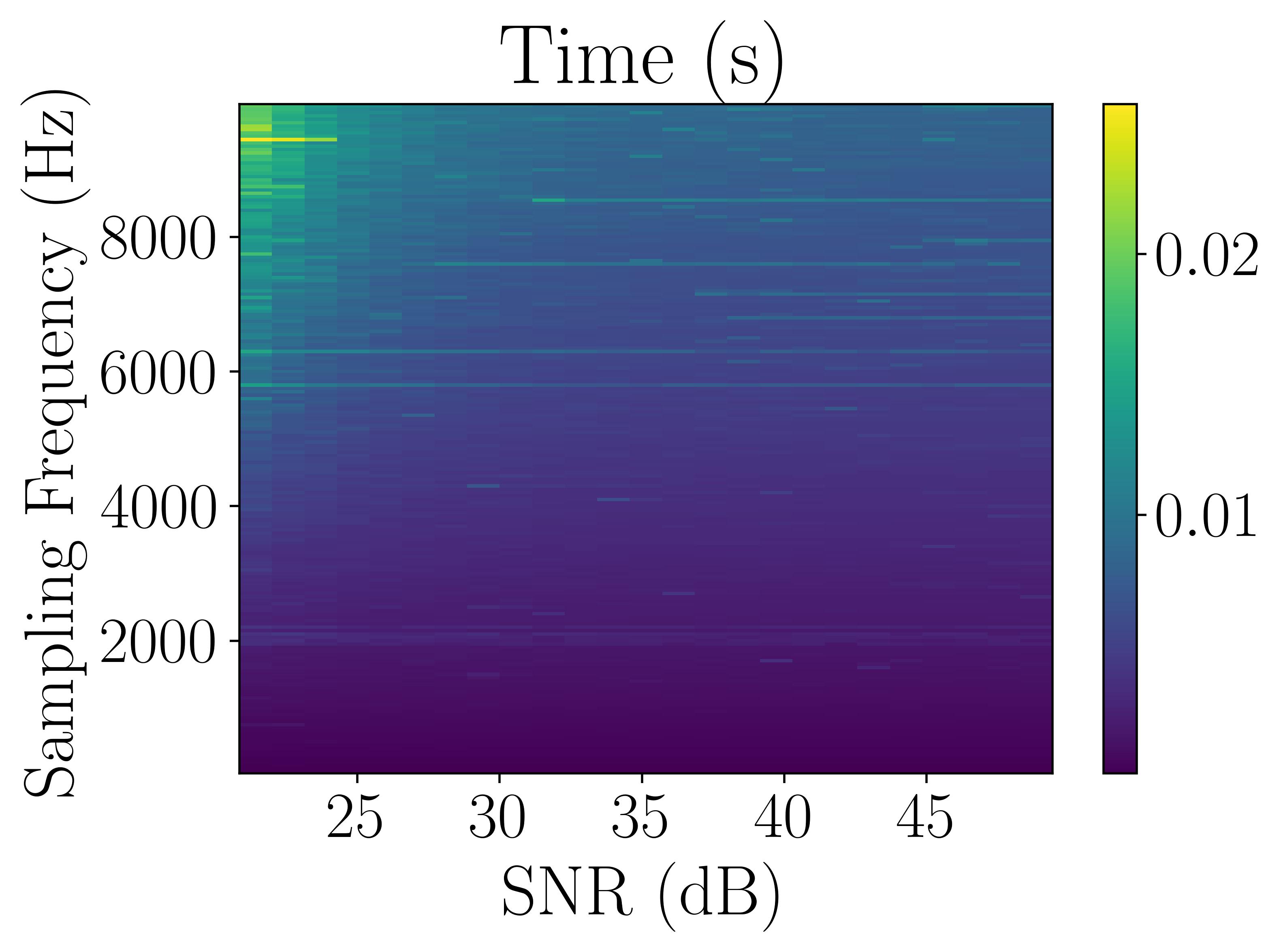}
      \caption{}
\end{subfigure}%
\begin{subfigure}{0.45\textwidth}
  \centering
  \includegraphics[width=1.\linewidth]{./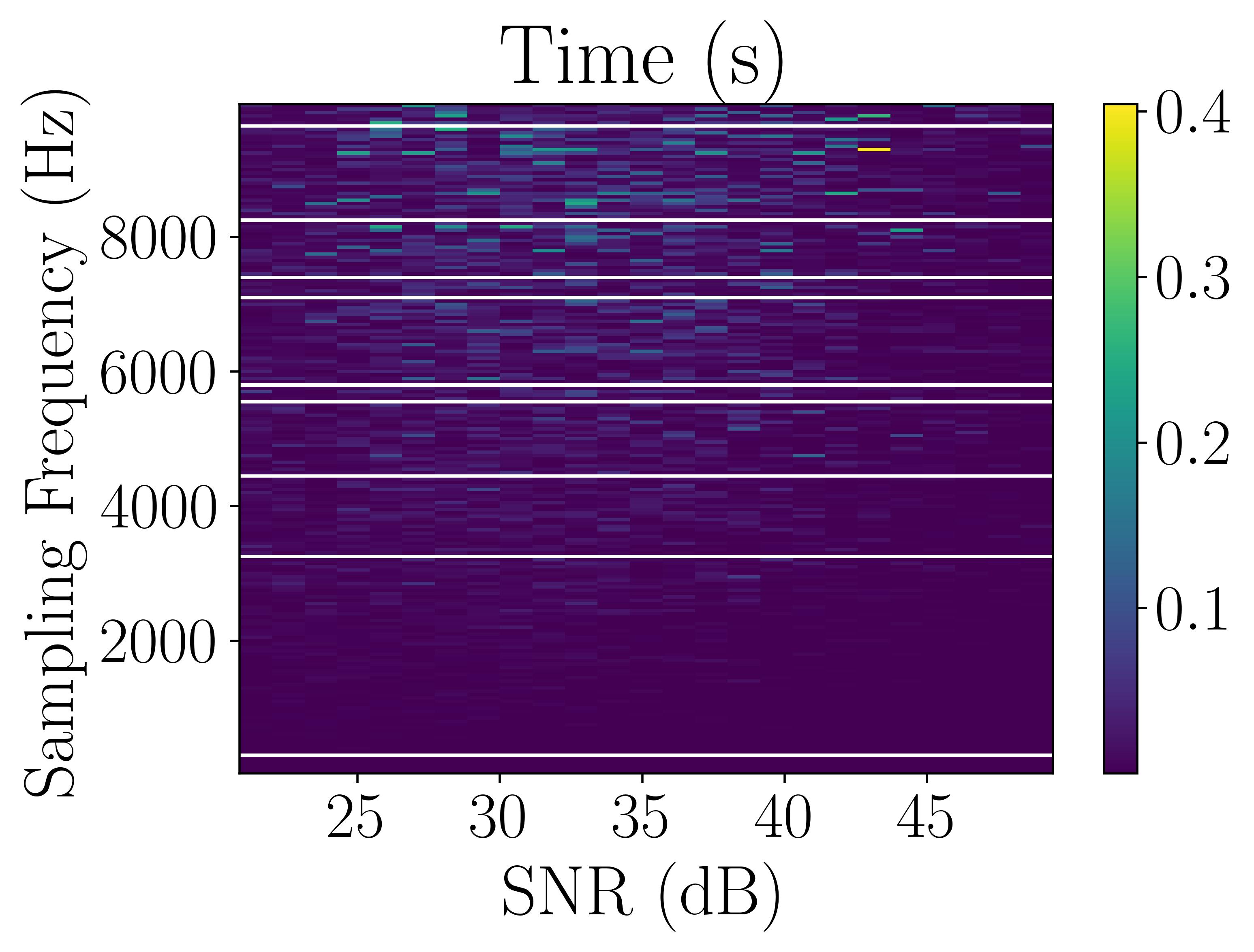}
      \caption{}
\end{subfigure}
\caption{Relative error and time taken for convergence by 0D Persistence (a, c) and Molinaro's algorithm (b, d) for $x_{3}$}
\label{fig:appD_fig2}
\end{figure}
\begin{figure}[!htbp]
\centering
\begin{subfigure}{0.45\textwidth}
  \centering
  \includegraphics[width=1.\linewidth]{./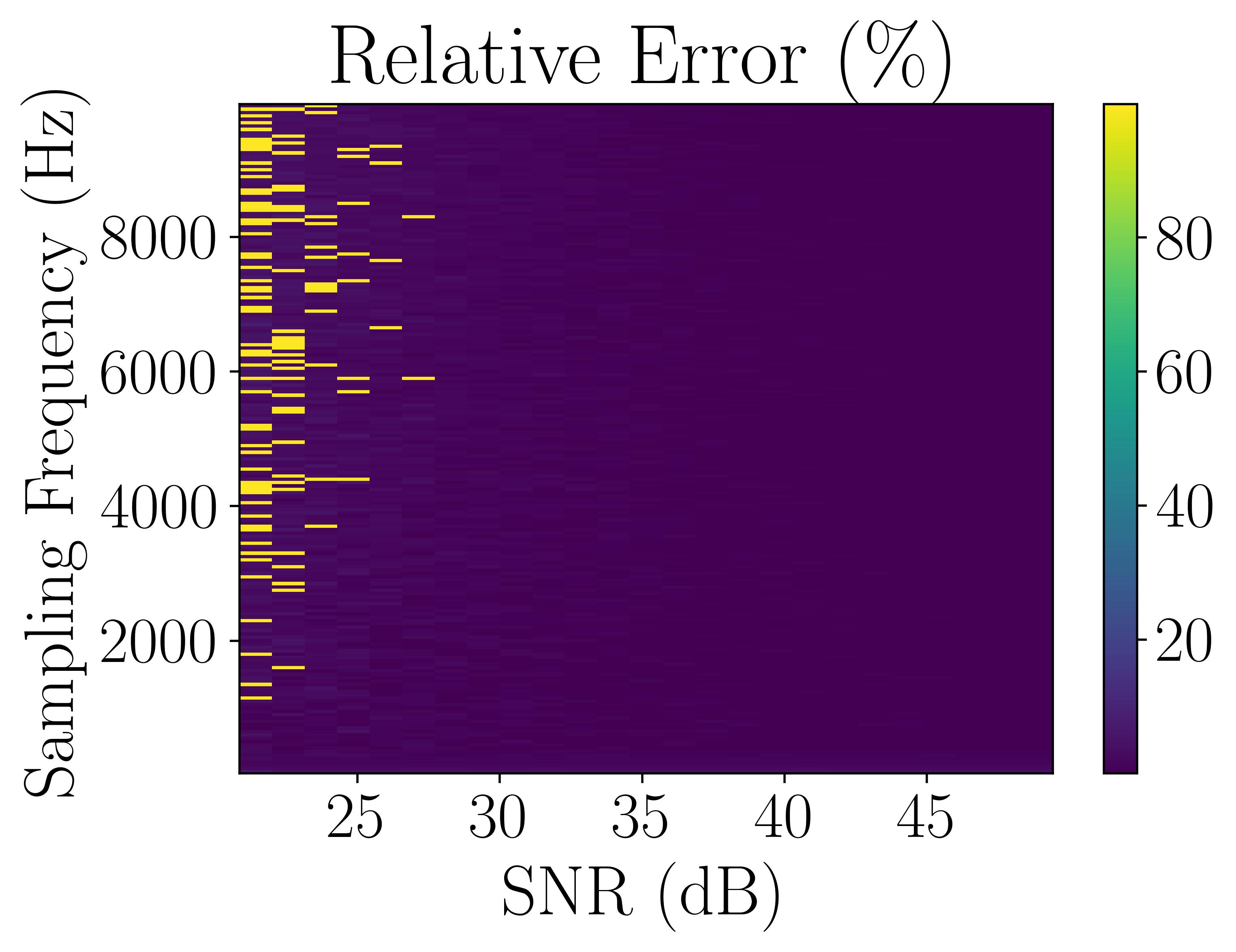}
    \caption{}
\end{subfigure}%
\begin{subfigure}{0.45\textwidth}
  \centering
  \includegraphics[width=1.\linewidth]{./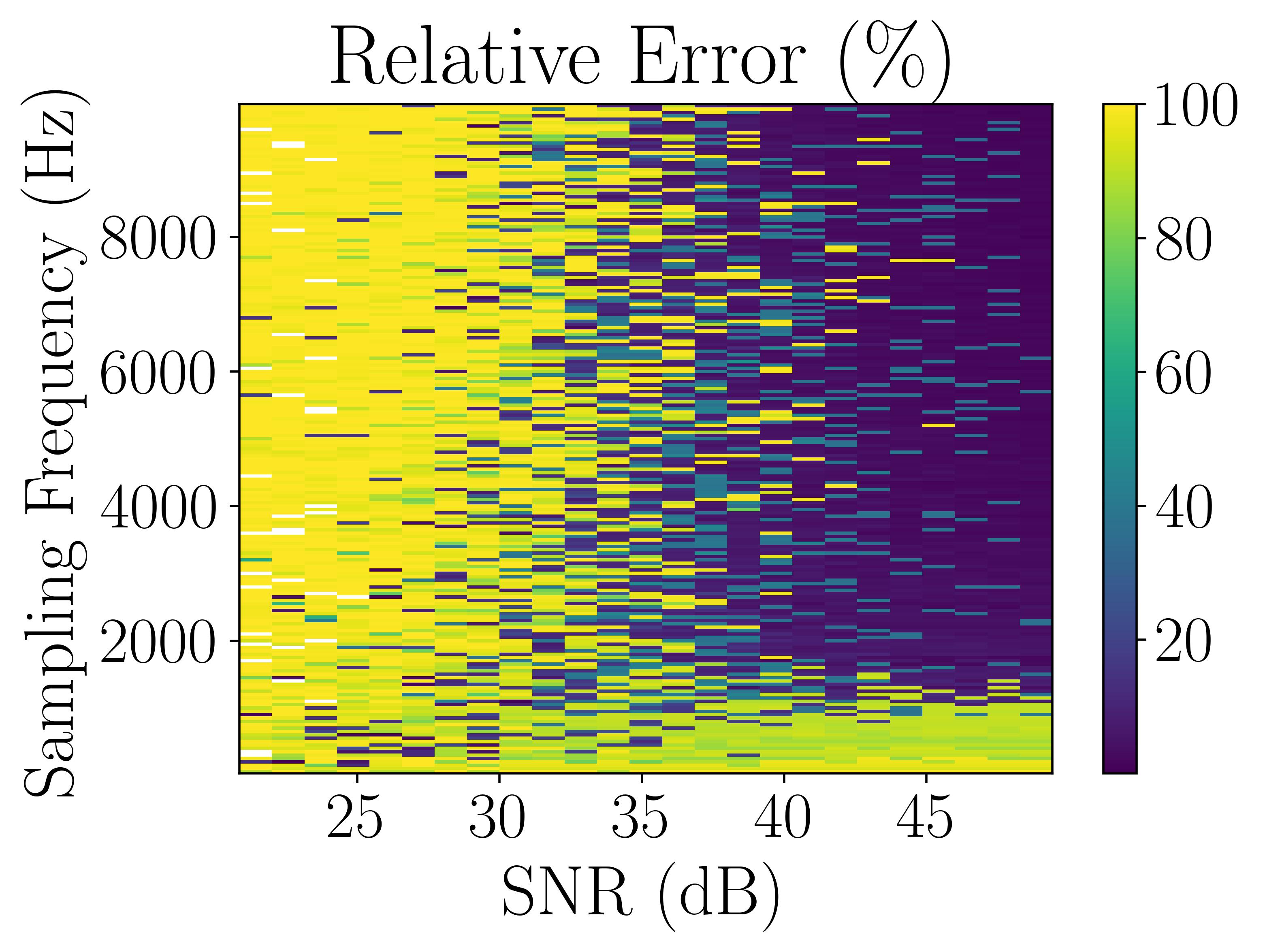}
      \caption{}
\end{subfigure}
\begin{subfigure}{0.45\textwidth}
  \centering
  \includegraphics[width=1.\linewidth]{./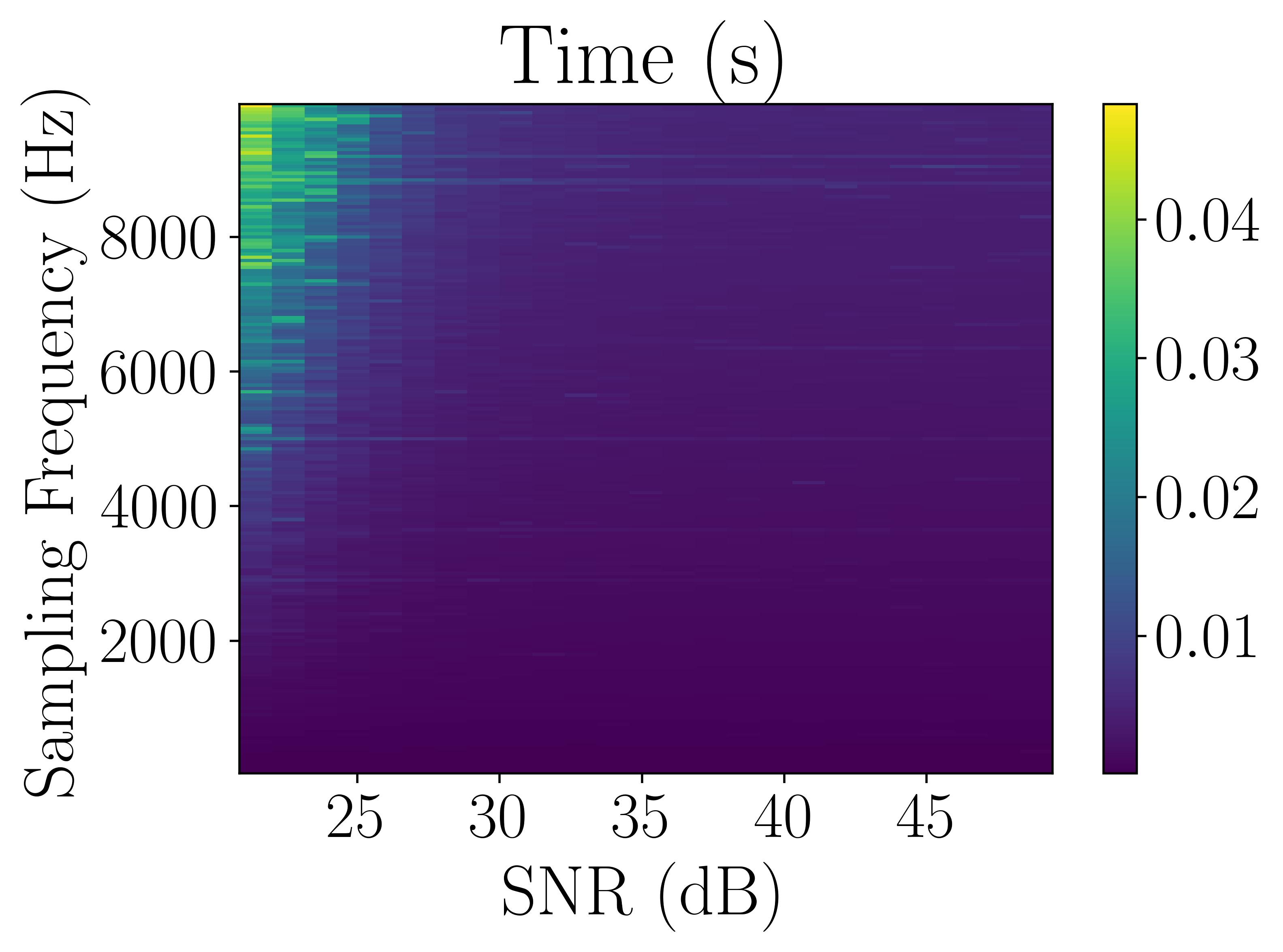}
      \caption{}
\end{subfigure}%
\begin{subfigure}{0.45\textwidth}
  \centering
  \includegraphics[width=1.\linewidth]{./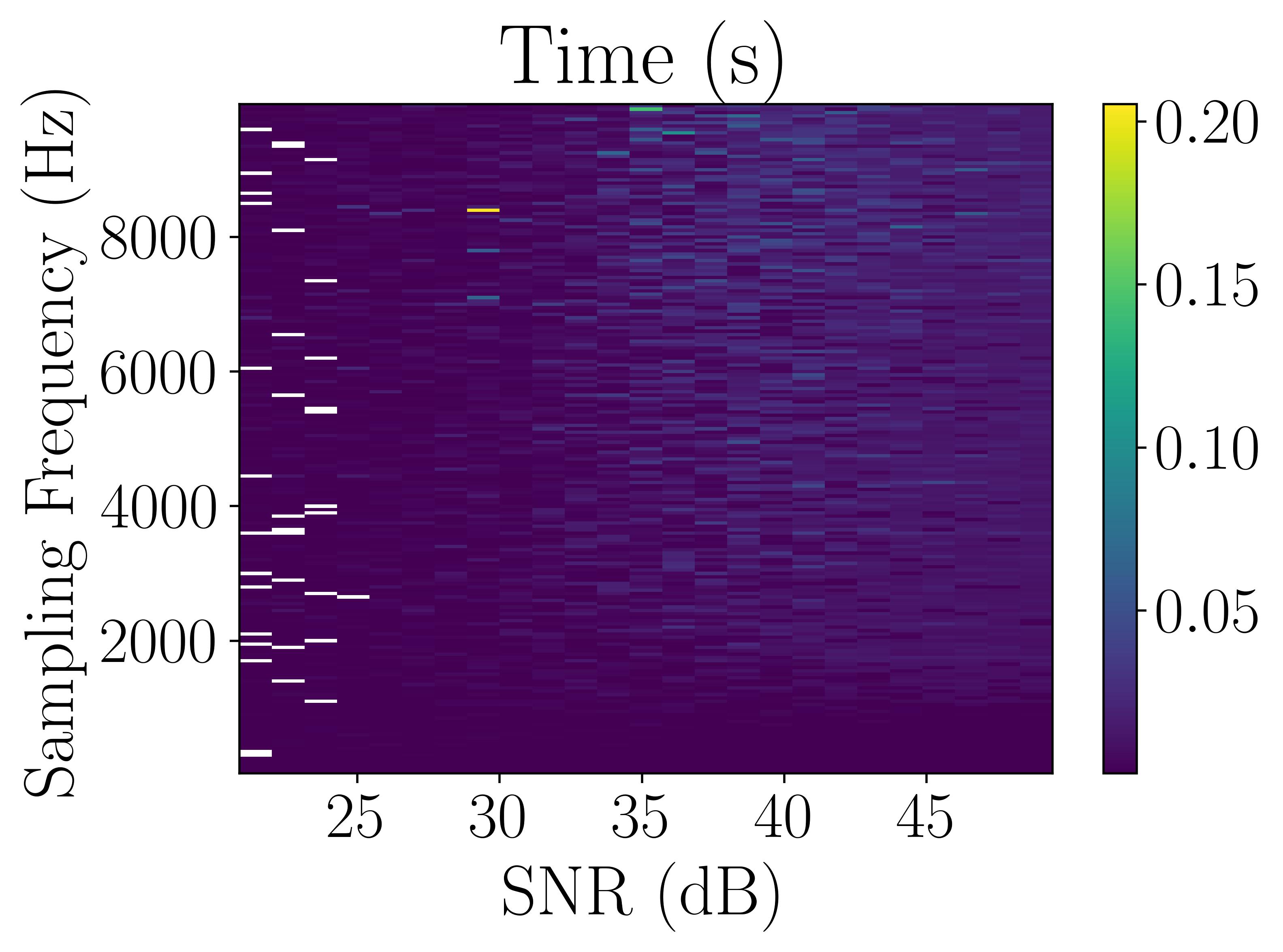}
      \caption{}
\end{subfigure}
\caption{Relative error and time taken for convergence by 0D Persistence (a, c) and Molinaro's algorithm (b, d) for $x_{4}$}
\label{fig:appD_fig3}
\end{figure}
\begin{figure}[!htbp]
\centering
\begin{subfigure}{0.45\textwidth}
  \centering
  \includegraphics[width=1.\linewidth]{./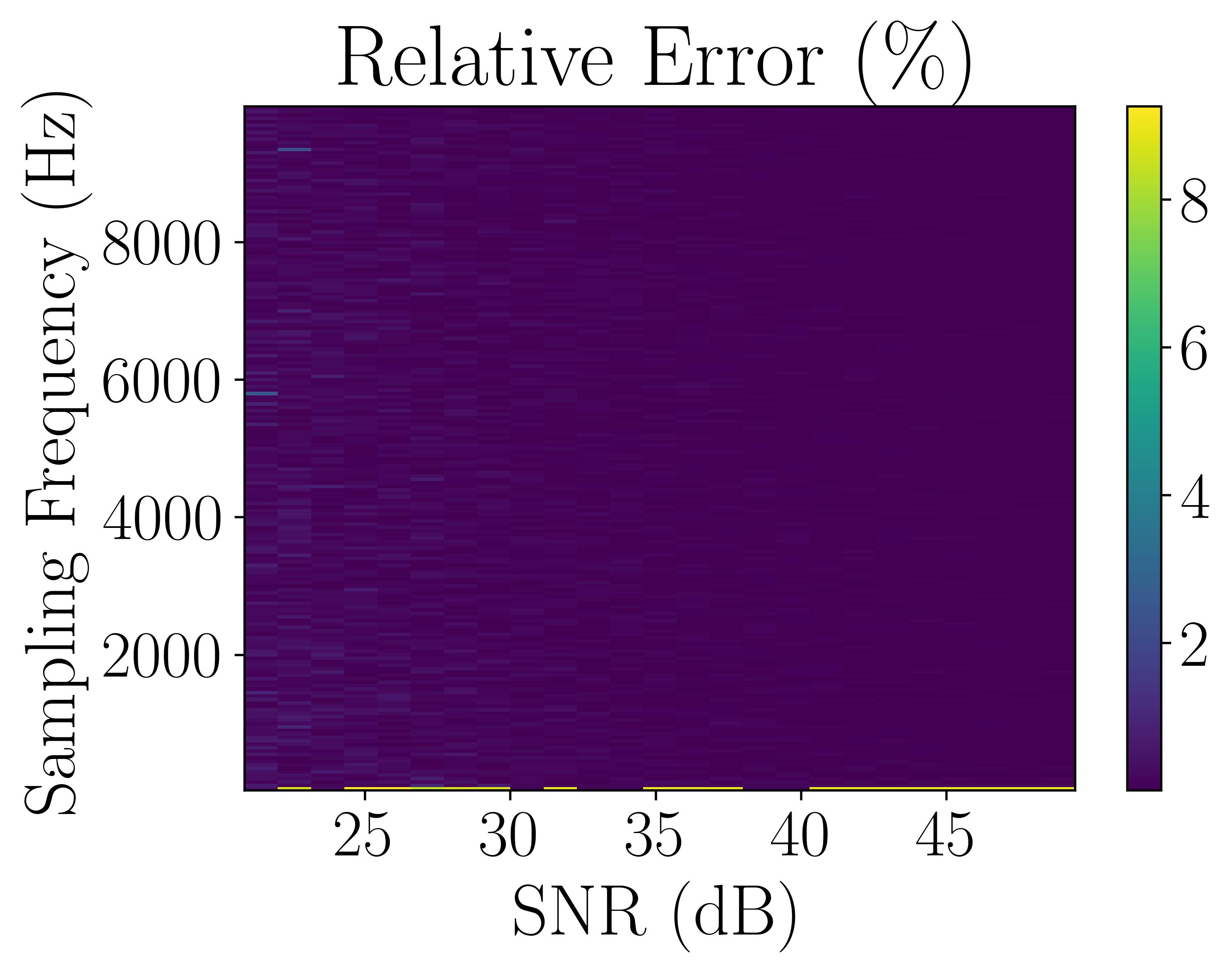}
    \caption{}
\end{subfigure}%
\begin{subfigure}{0.45\textwidth}
  \centering
  \includegraphics[width=1.\linewidth]{./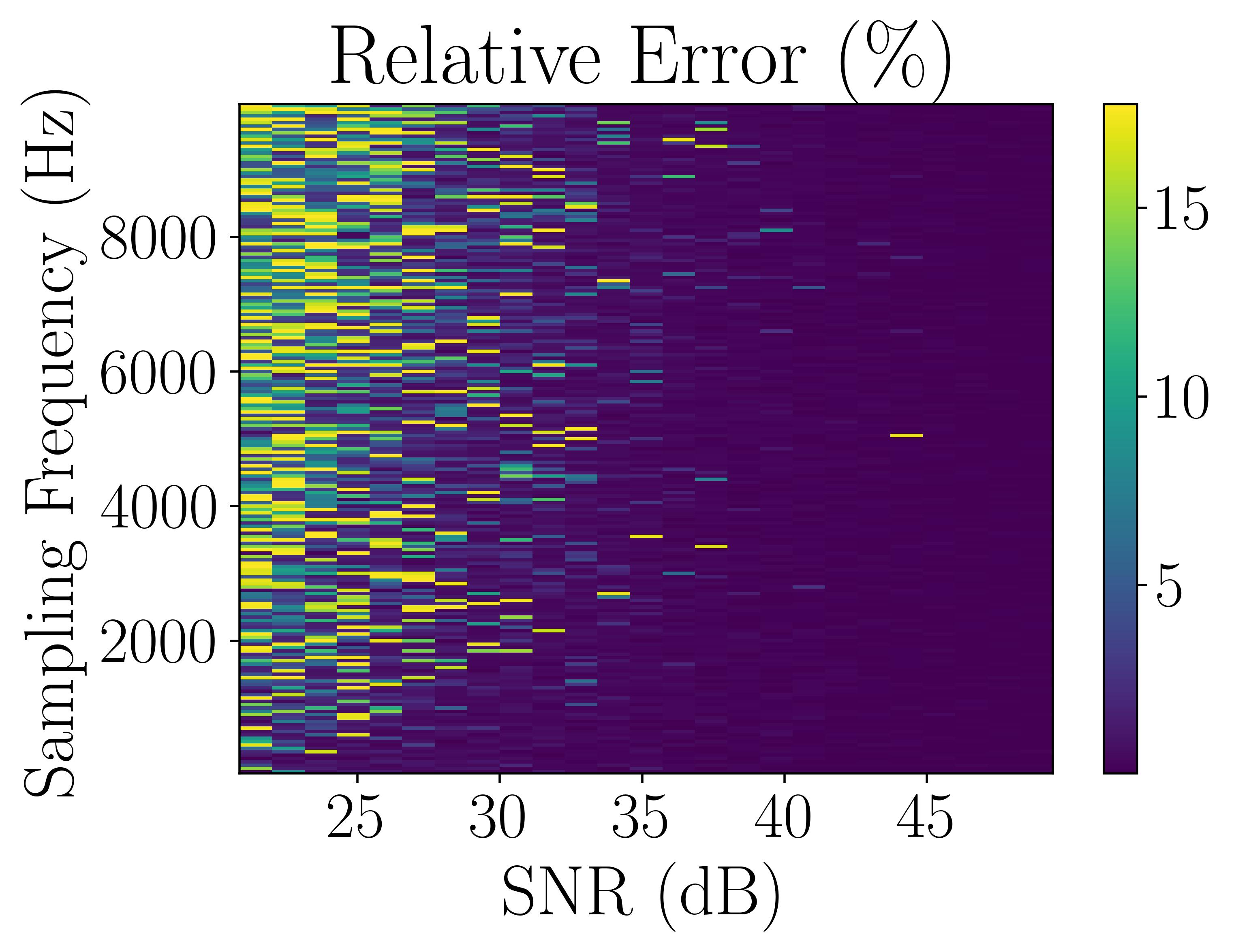}
      \caption{}
\end{subfigure}
\begin{subfigure}{0.45\textwidth}
  \centering
  \includegraphics[width=1.\linewidth]{./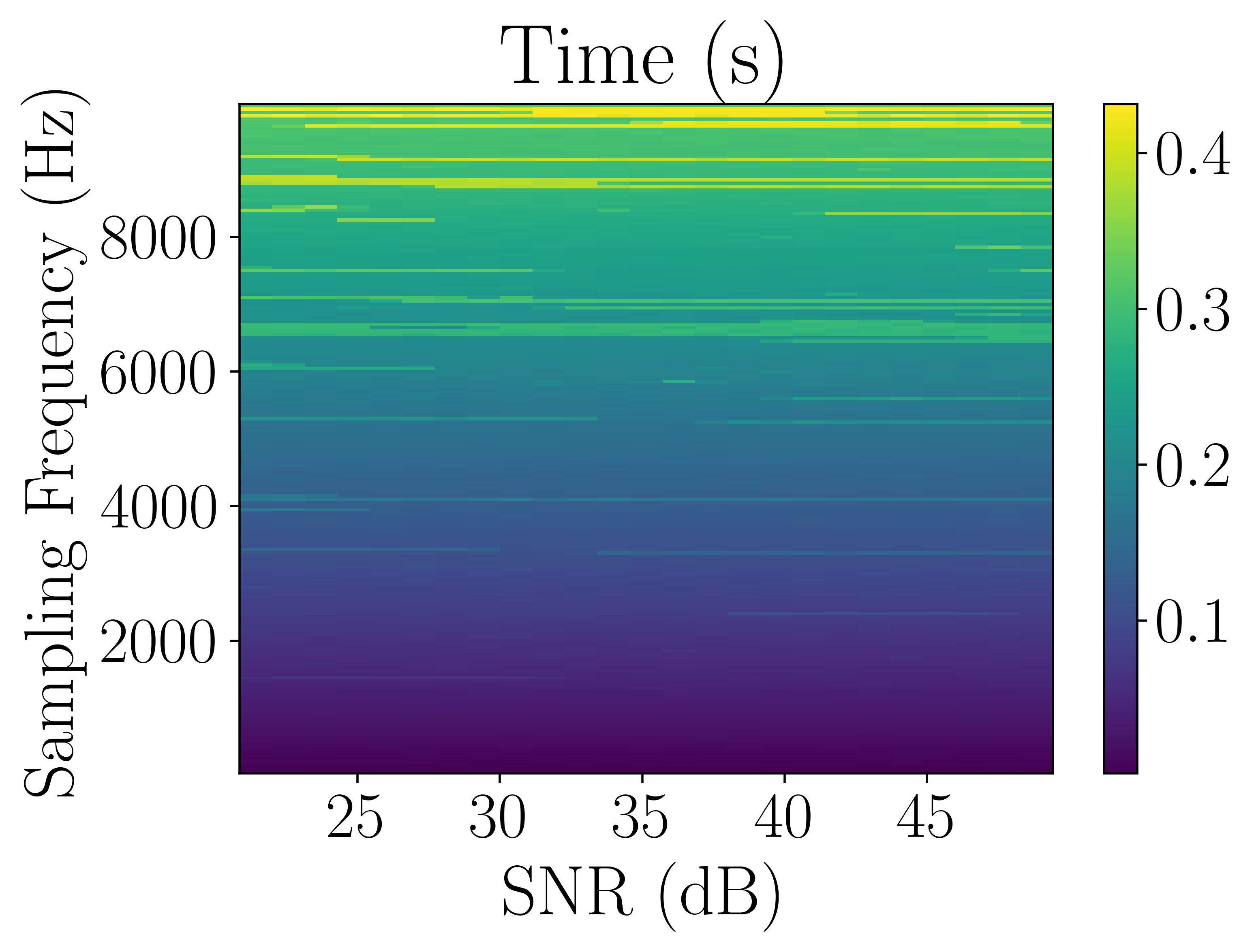}
      \caption{}
\end{subfigure}%
\begin{subfigure}{0.45\textwidth}
  \centering
  \includegraphics[width=1.\linewidth]{./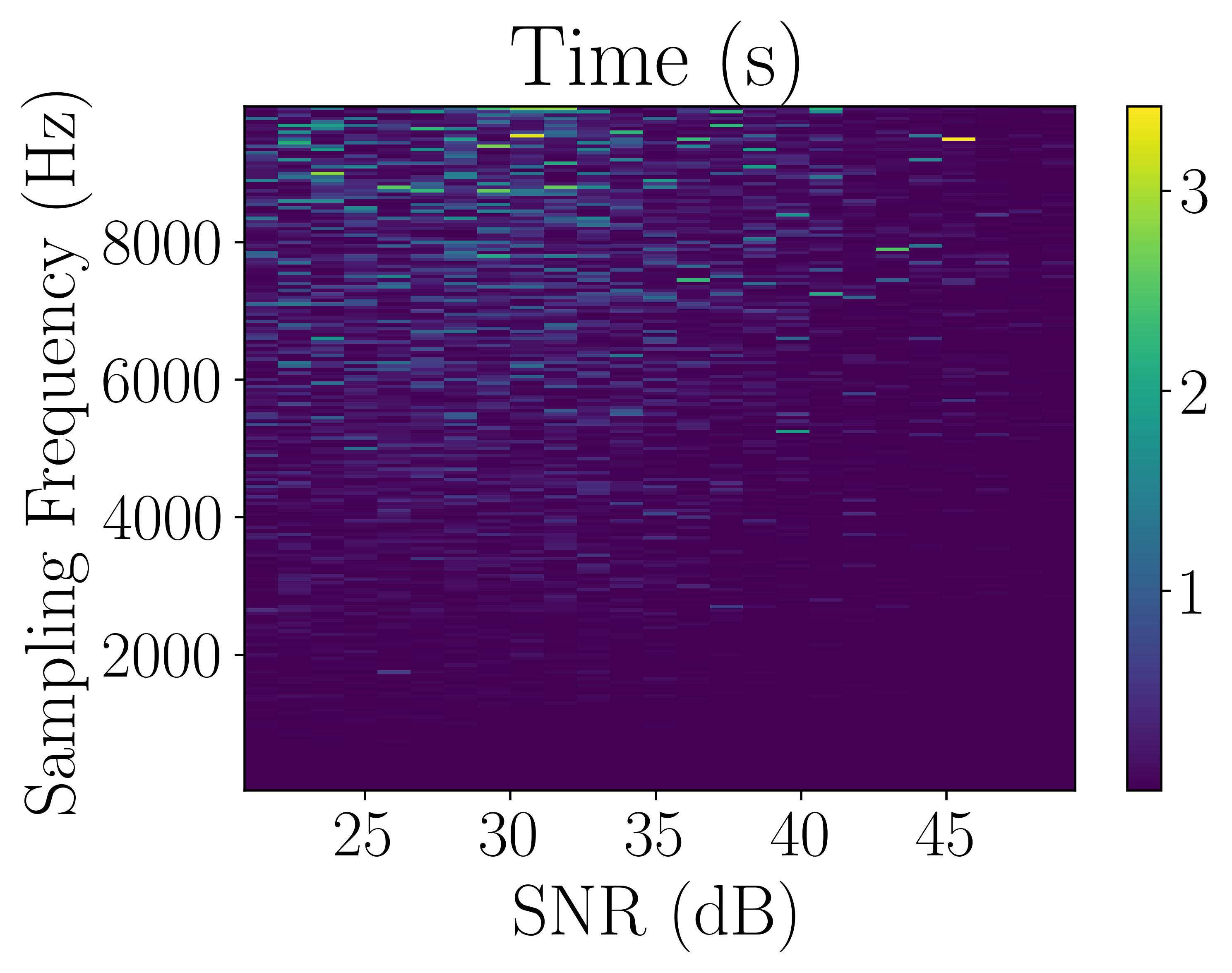}
      \caption{}
\end{subfigure}
\caption{Relative error and time taken for convergence by 0D Persistence (a, c) and Molinaro's algorithm (b, d) for $x_{5}$}
\label{fig:appD_fig4}
\end{figure}
\begin{figure}[!htbp]
\centering
\begin{subfigure}{0.45\textwidth}
  \centering
  \includegraphics[width=1.\linewidth]{./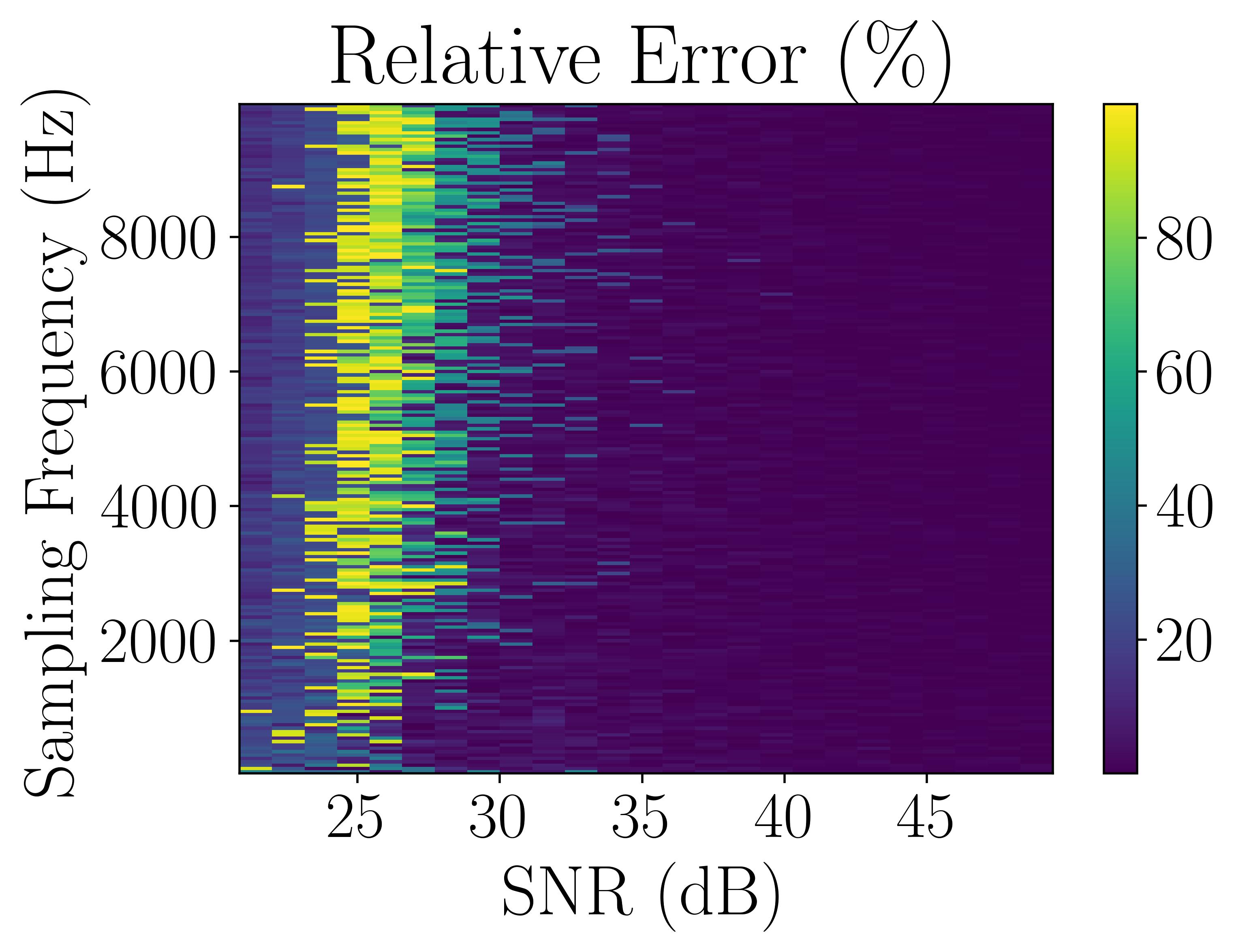}
    \caption{}
\end{subfigure}%
\begin{subfigure}{0.45\textwidth}
  \centering
  \includegraphics[width=1.\linewidth]{./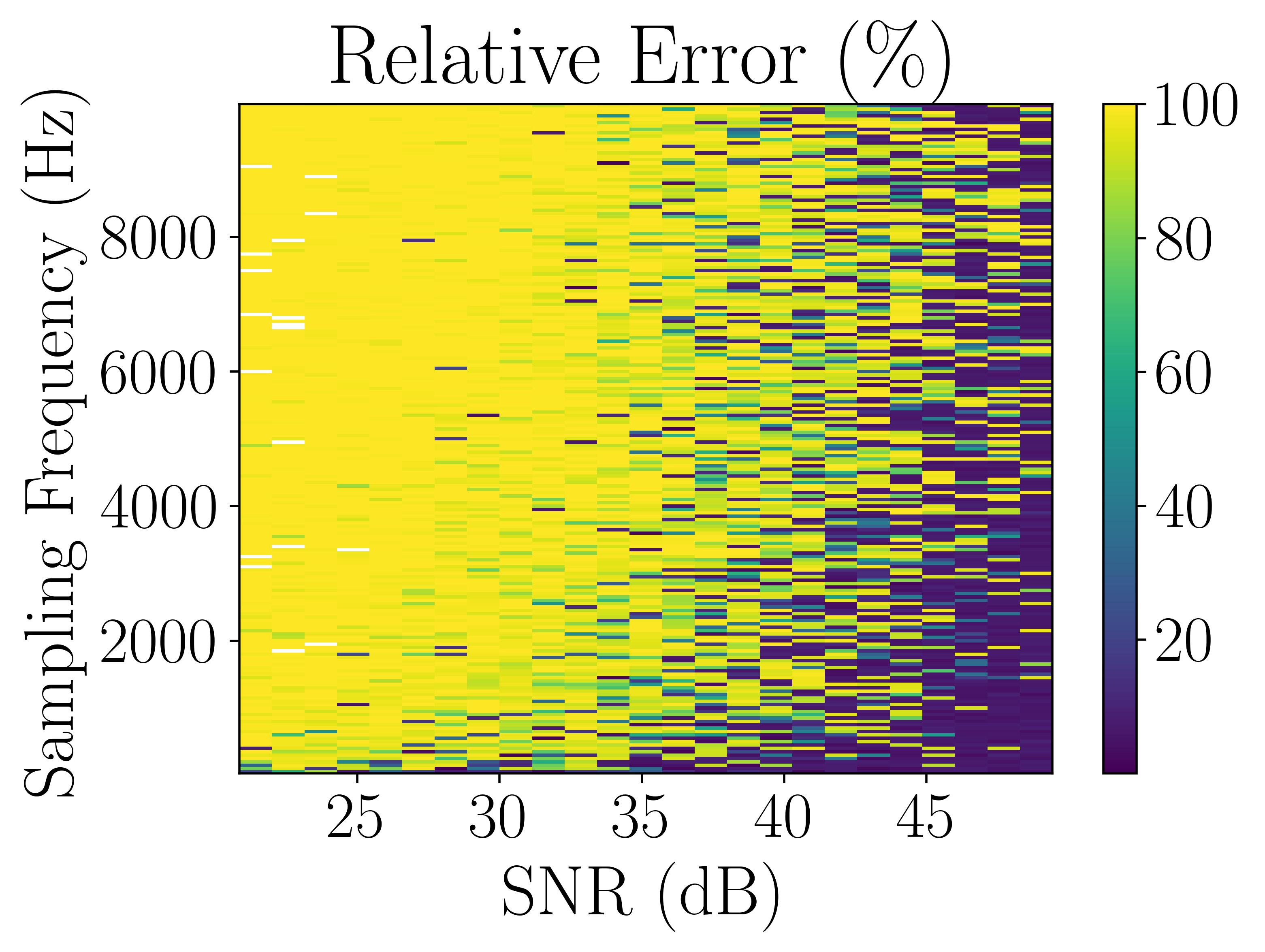}
      \caption{}
\end{subfigure}
\begin{subfigure}{0.45\textwidth}
  \centering
  \includegraphics[width=1.\linewidth]{./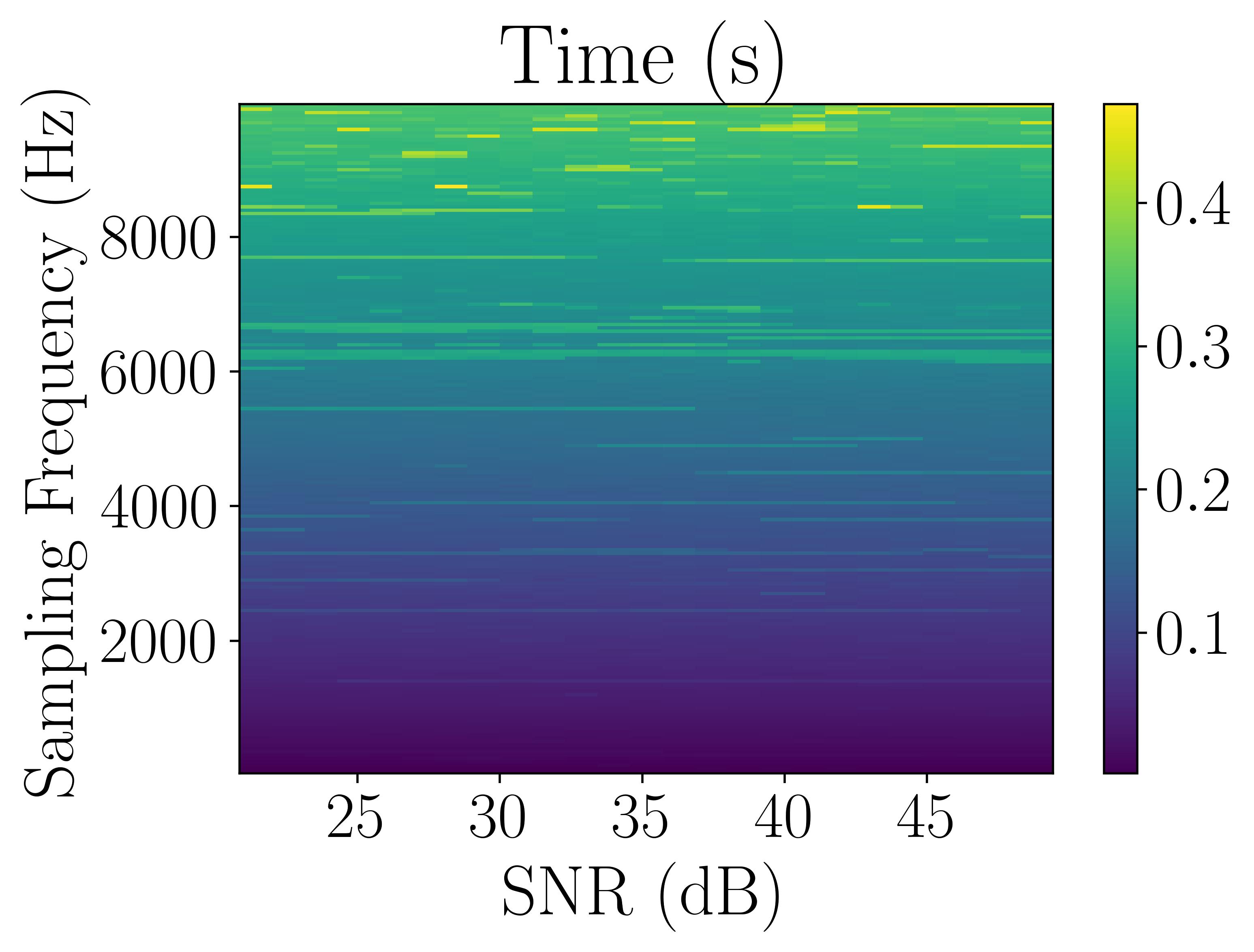}
      \caption{}
\end{subfigure}%
\begin{subfigure}{0.45\textwidth}
  \centering
  \includegraphics[width=1.\linewidth]{./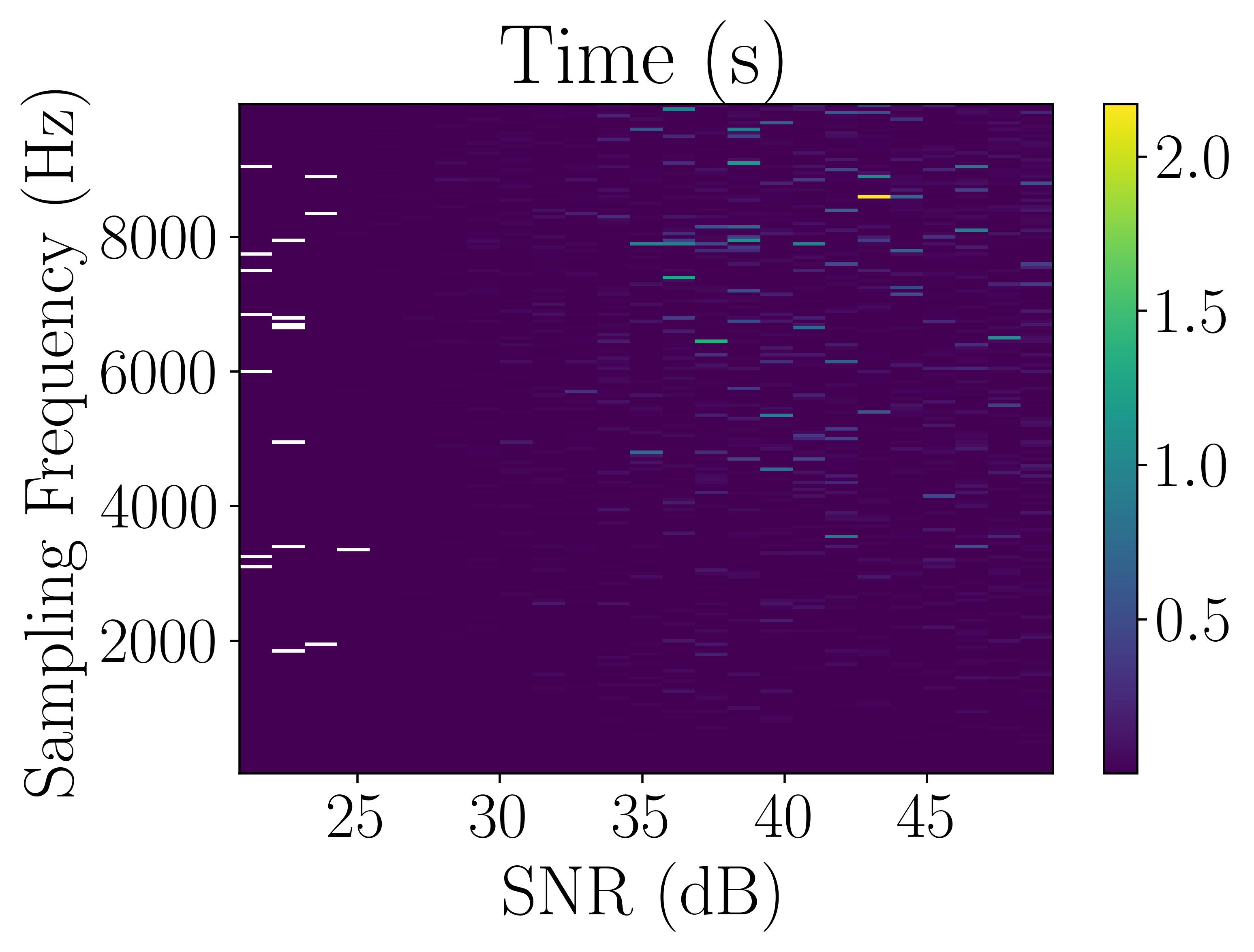}
      \caption{}
\end{subfigure}
\caption{Relative error and time taken for convergence by 0D Persistence (a, c) and Molinaro's algorithm (b, d) for $x_{6}$}
\label{fig:appD_fig5}
\end{figure}
\begin{figure}[!htbp]
\centering
\begin{subfigure}{0.45\textwidth}
  \centering
  \includegraphics[width=1.\linewidth]{./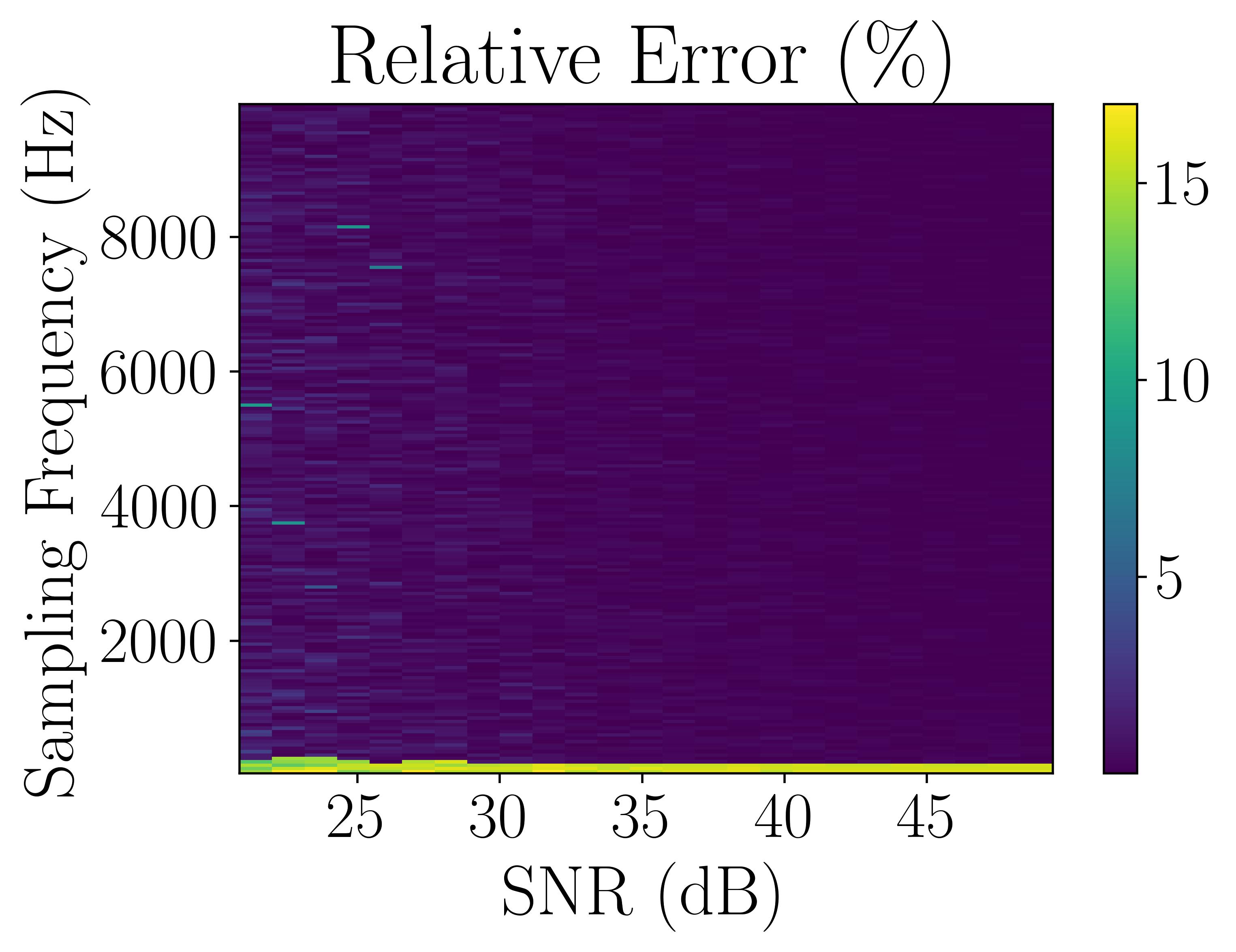}
    \caption{}
\end{subfigure}%
\begin{subfigure}{0.45\textwidth}
  \centering
  \includegraphics[width=1.\linewidth]{./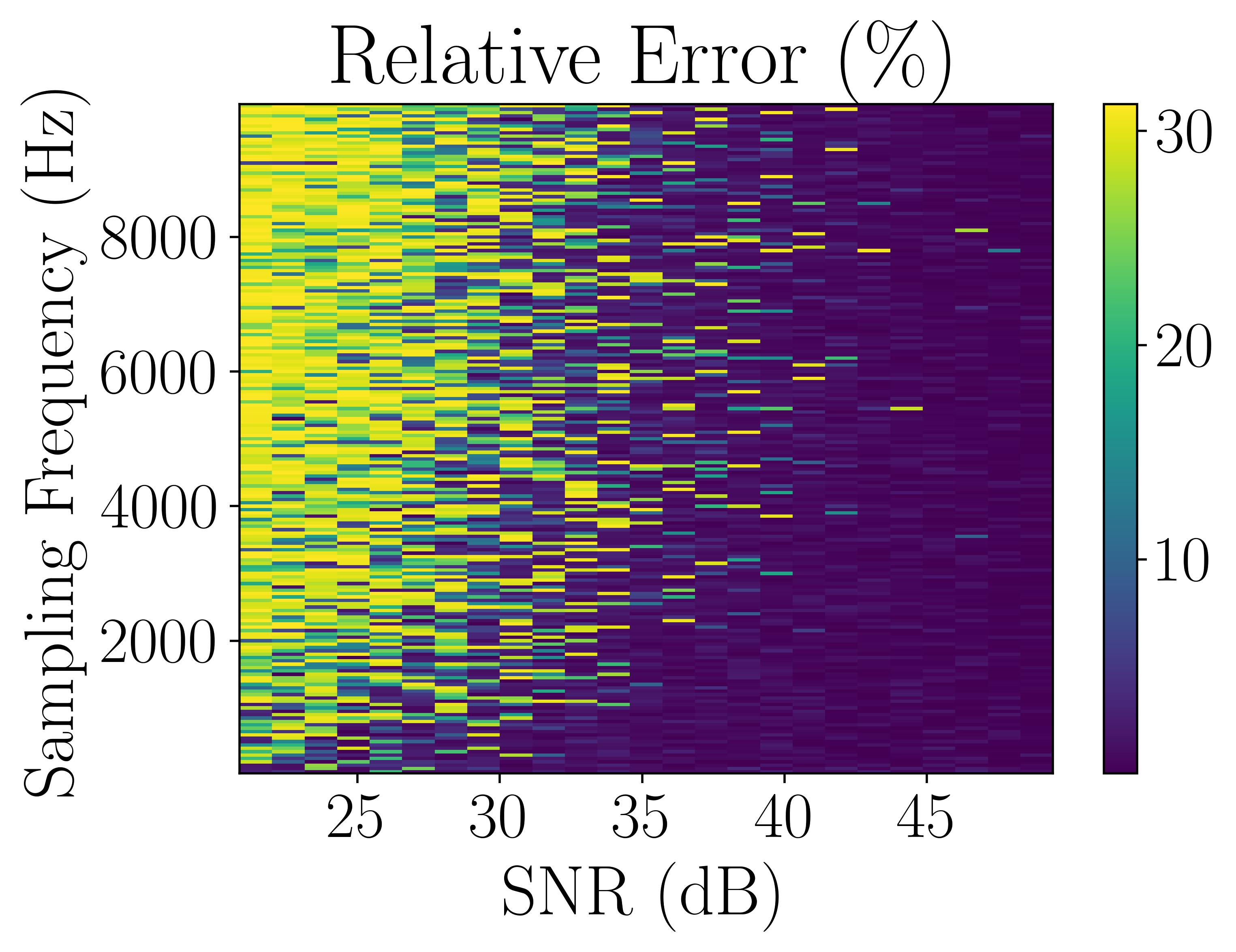}
      \caption{}
\end{subfigure}
\begin{subfigure}{0.45\textwidth}
  \centering
  \includegraphics[width=1.\linewidth]{./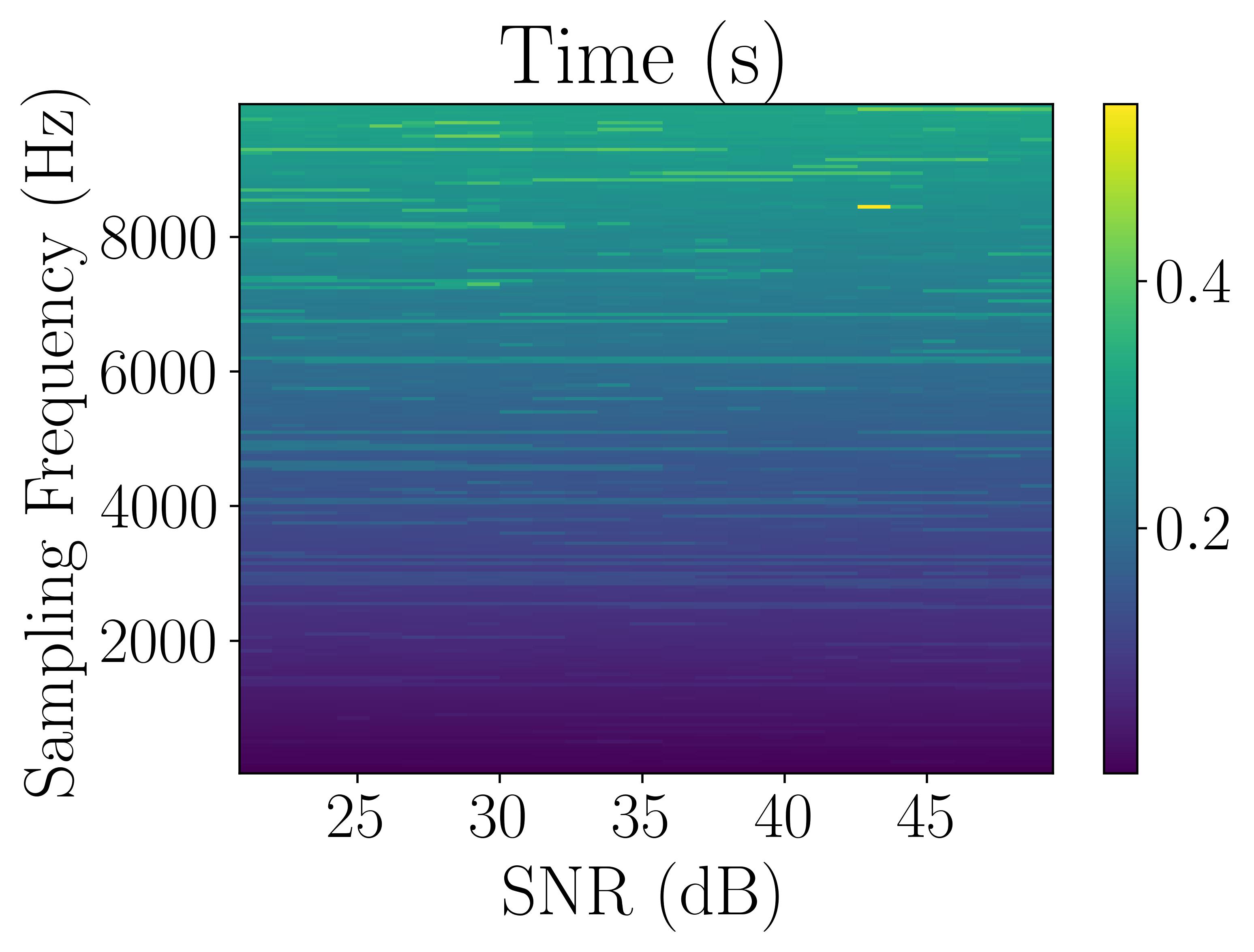}
      \caption{}
\end{subfigure}%
\begin{subfigure}{0.45\textwidth}
  \centering
  \includegraphics[width=1.\linewidth]{./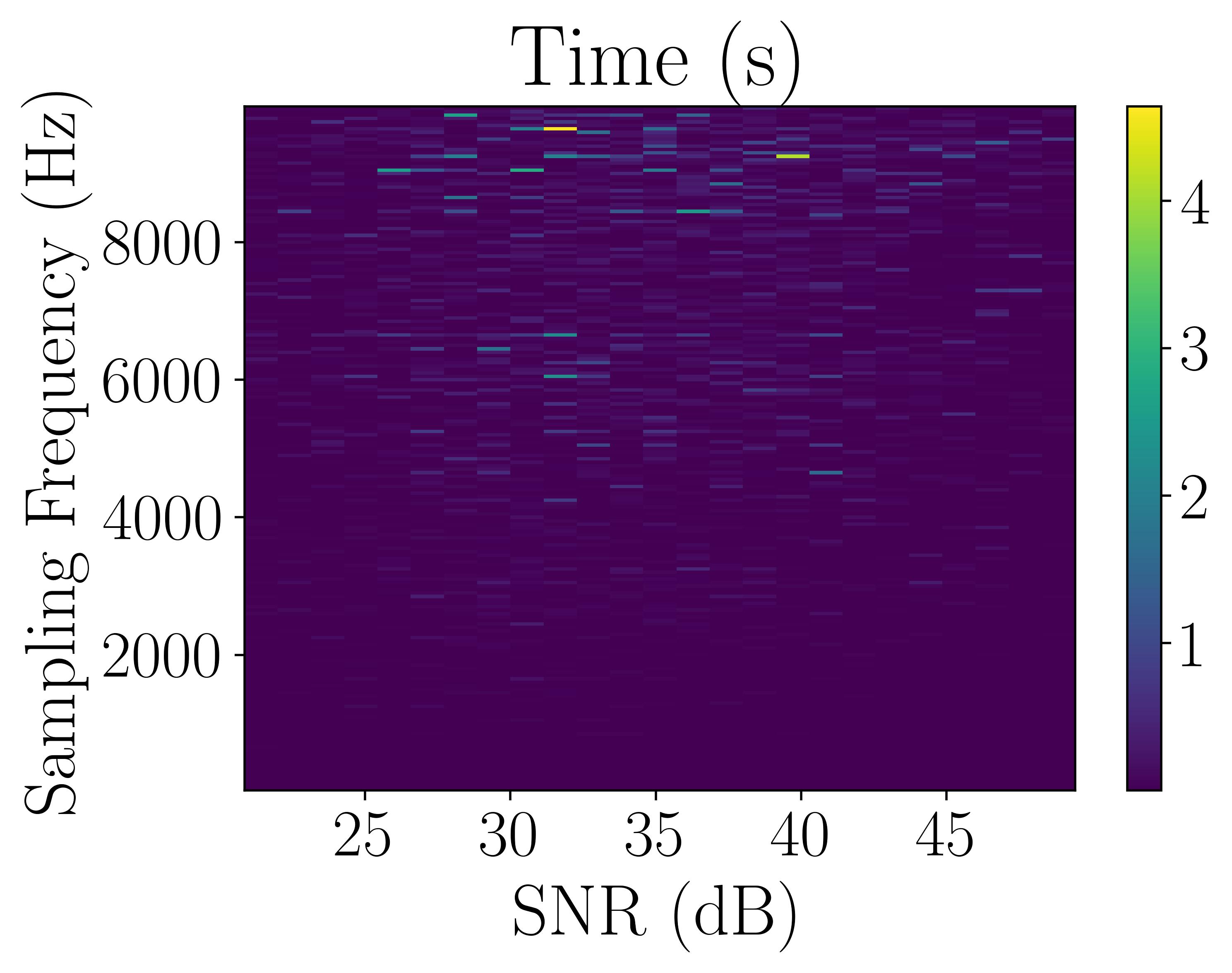}
      \caption{}
\end{subfigure}
\caption{Relative error and time taken for convergence by 0D Persistence (a, c) and Molinaro's algorithm (b, d) for $x_{7}$}
\label{fig:appD_fig6}
\end{figure}
\begin{figure}[!htbp]
\centering
\begin{subfigure}{0.45\textwidth}
  \centering
  \includegraphics[width=1.\linewidth]{./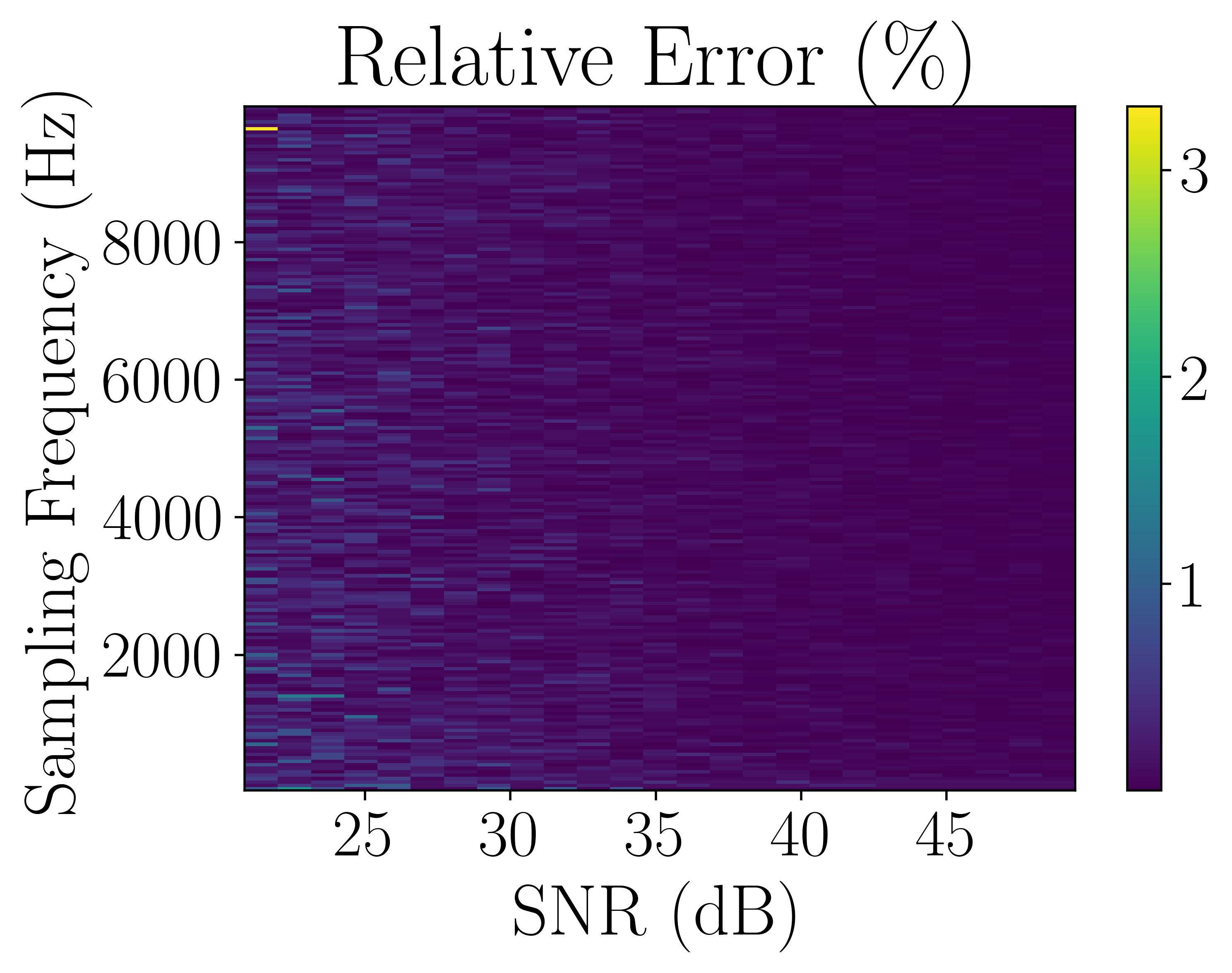}
    \caption{}
\end{subfigure}%
\begin{subfigure}{0.45\textwidth}
  \centering
  \includegraphics[width=1.\linewidth]{./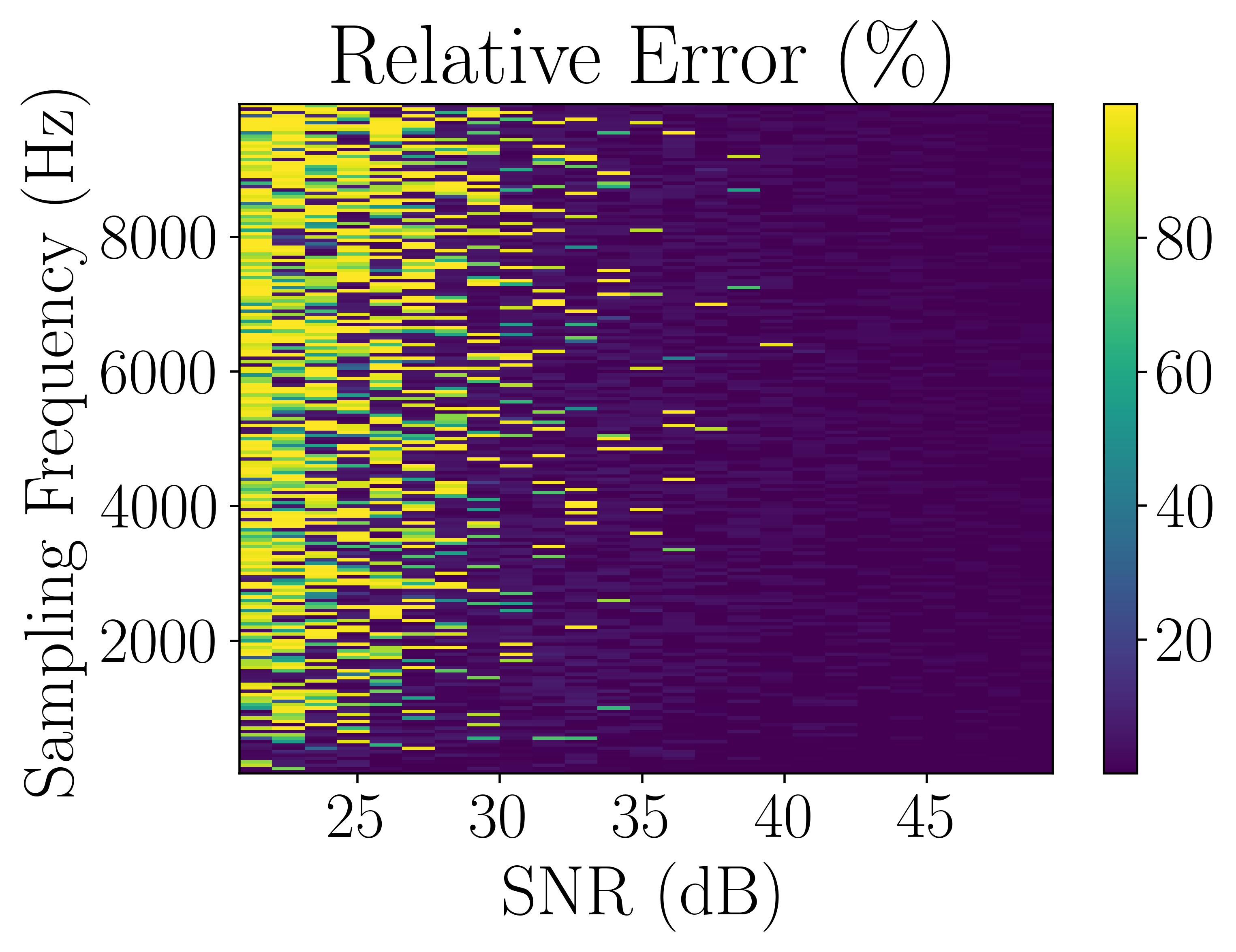}
      \caption{}
\end{subfigure}
\begin{subfigure}{0.45\textwidth}
  \centering
  \includegraphics[width=1.\linewidth]{./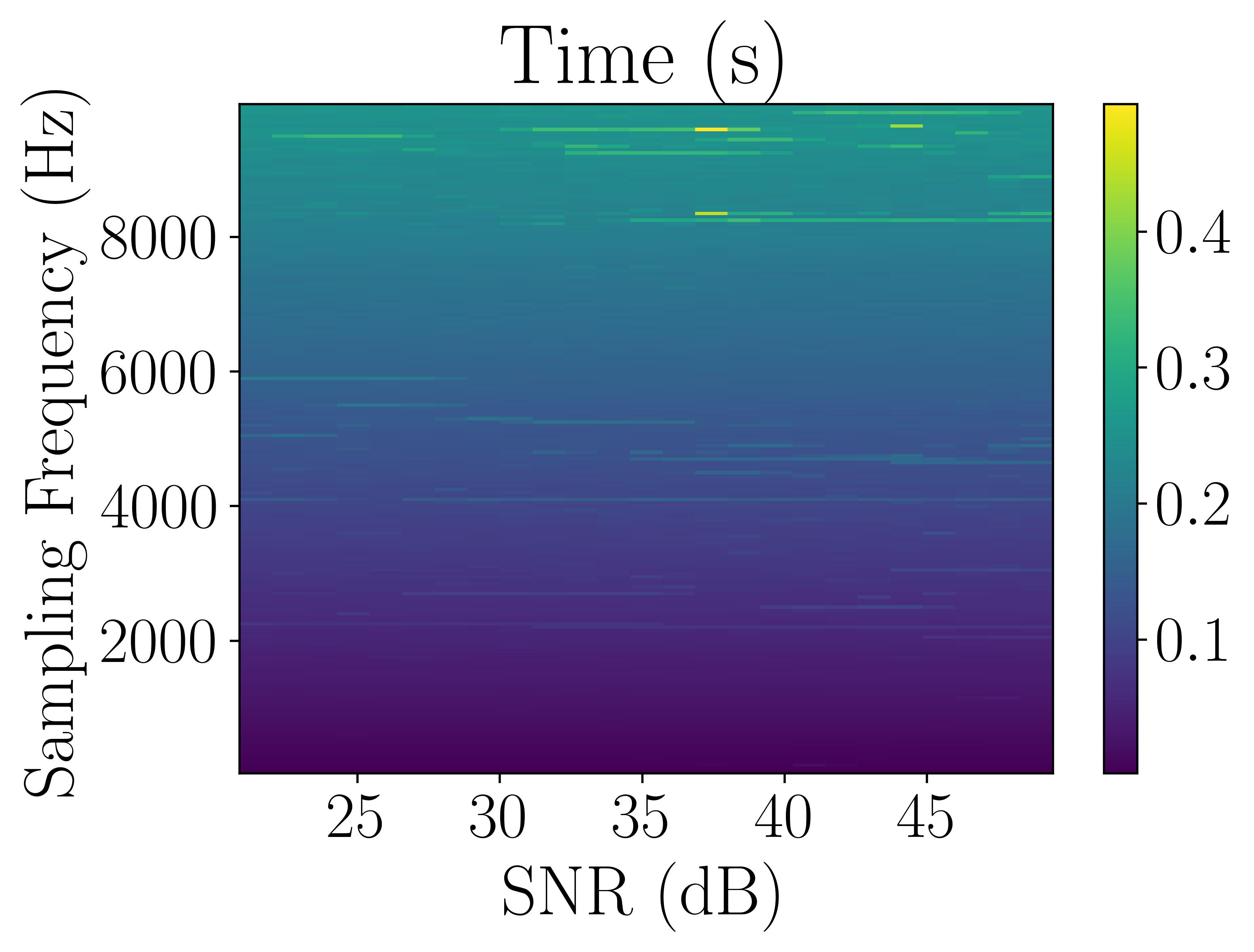}
      \caption{}
\end{subfigure}%
\begin{subfigure}{0.45\textwidth}
  \centering
  \includegraphics[width=1.\linewidth]{./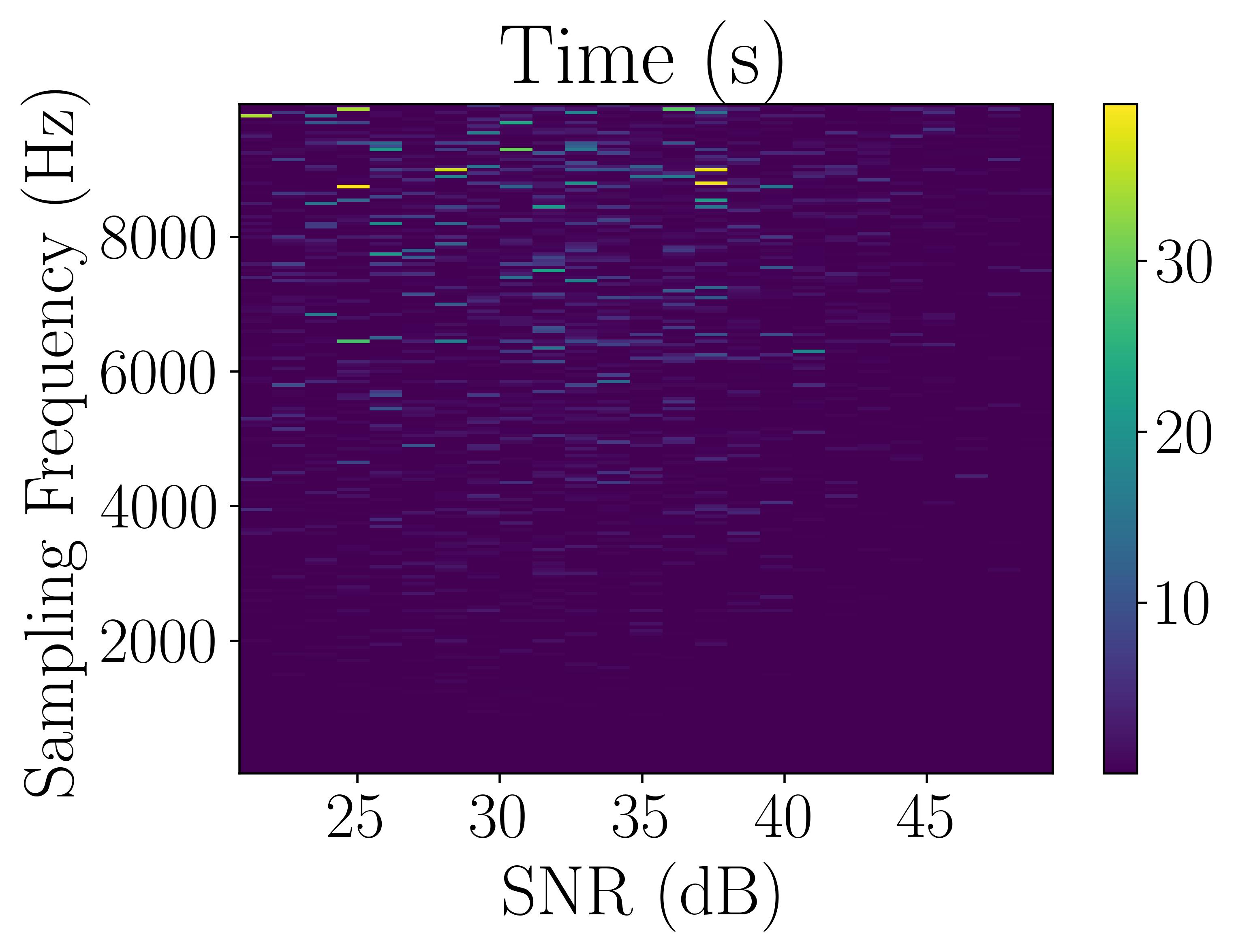}
      \caption{}
\end{subfigure}
\caption{Relative error and time taken for convergence by 0D Persistence (a, c) and Molinaro's algorithm (b, d) for $x_{8}$}
\label{fig:appD_fig7}
\end{figure}
\begin{figure}[!htbp]
\centering
\begin{subfigure}{0.45\textwidth}
  \centering
  \includegraphics[width=1.\linewidth]{./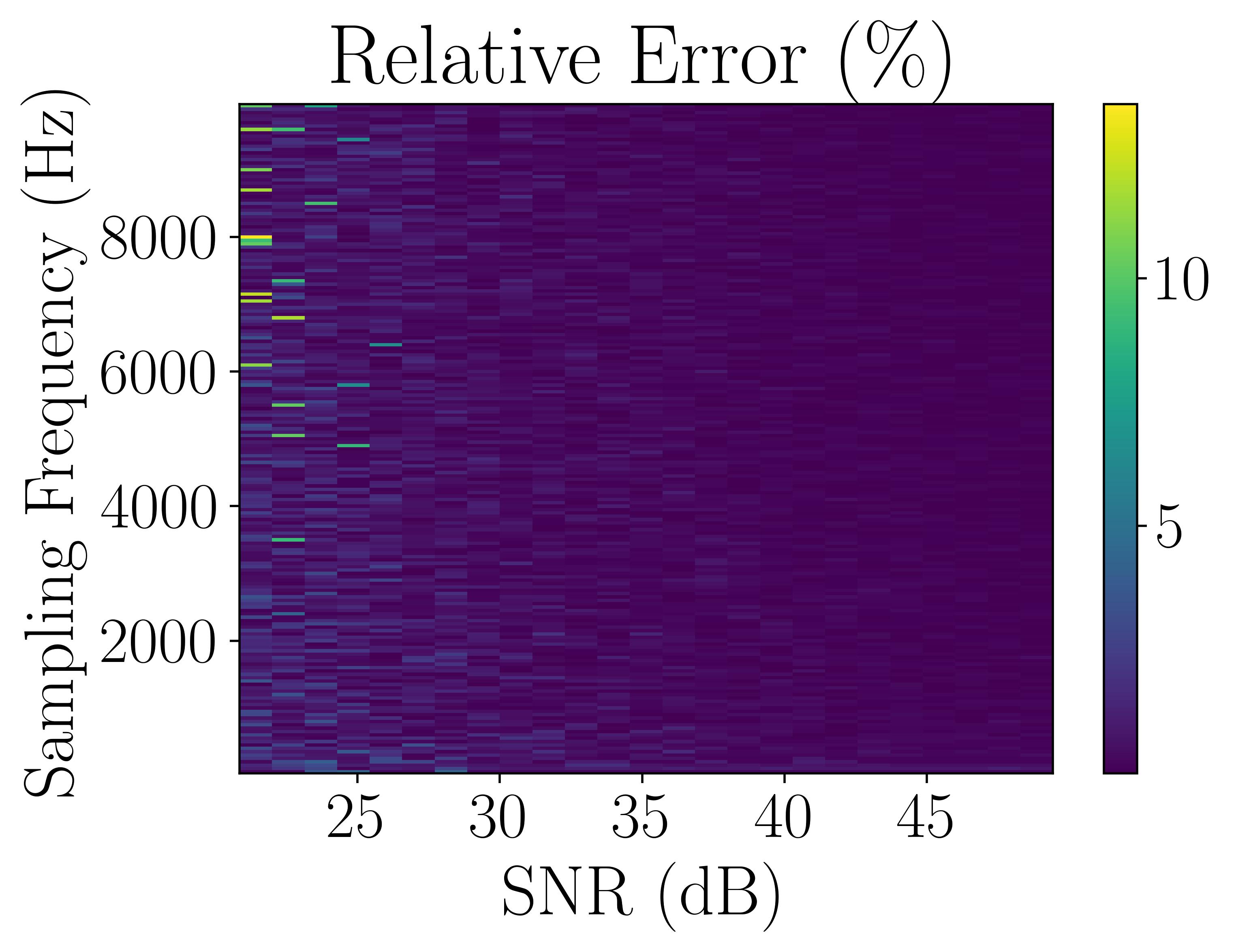}
    \caption{}
\end{subfigure}%
\begin{subfigure}{0.45\textwidth}
  \centering
  \includegraphics[width=1.\linewidth]{./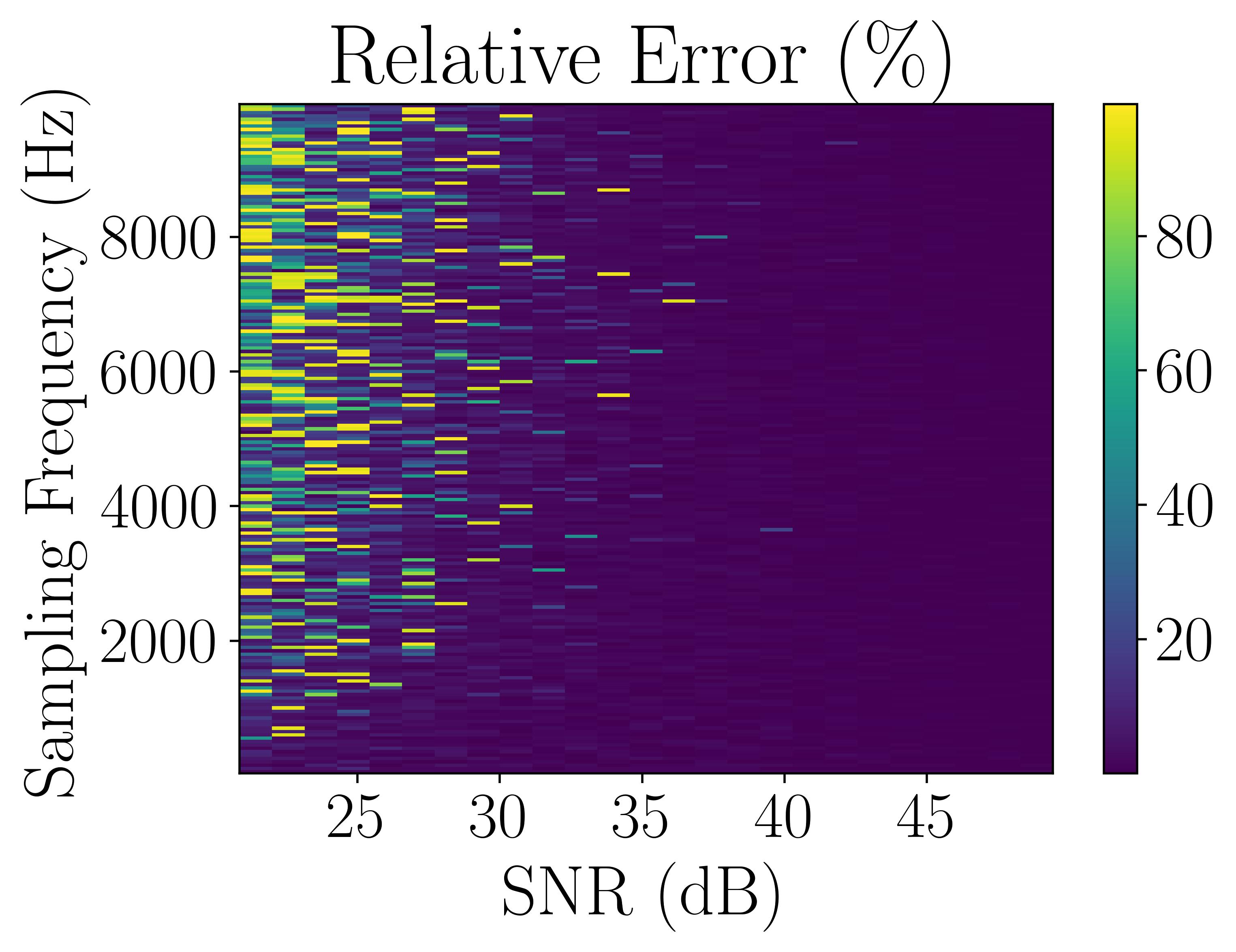}
      \caption{}
\end{subfigure}
\begin{subfigure}{0.45\textwidth}
  \centering
  \includegraphics[width=1.\linewidth]{./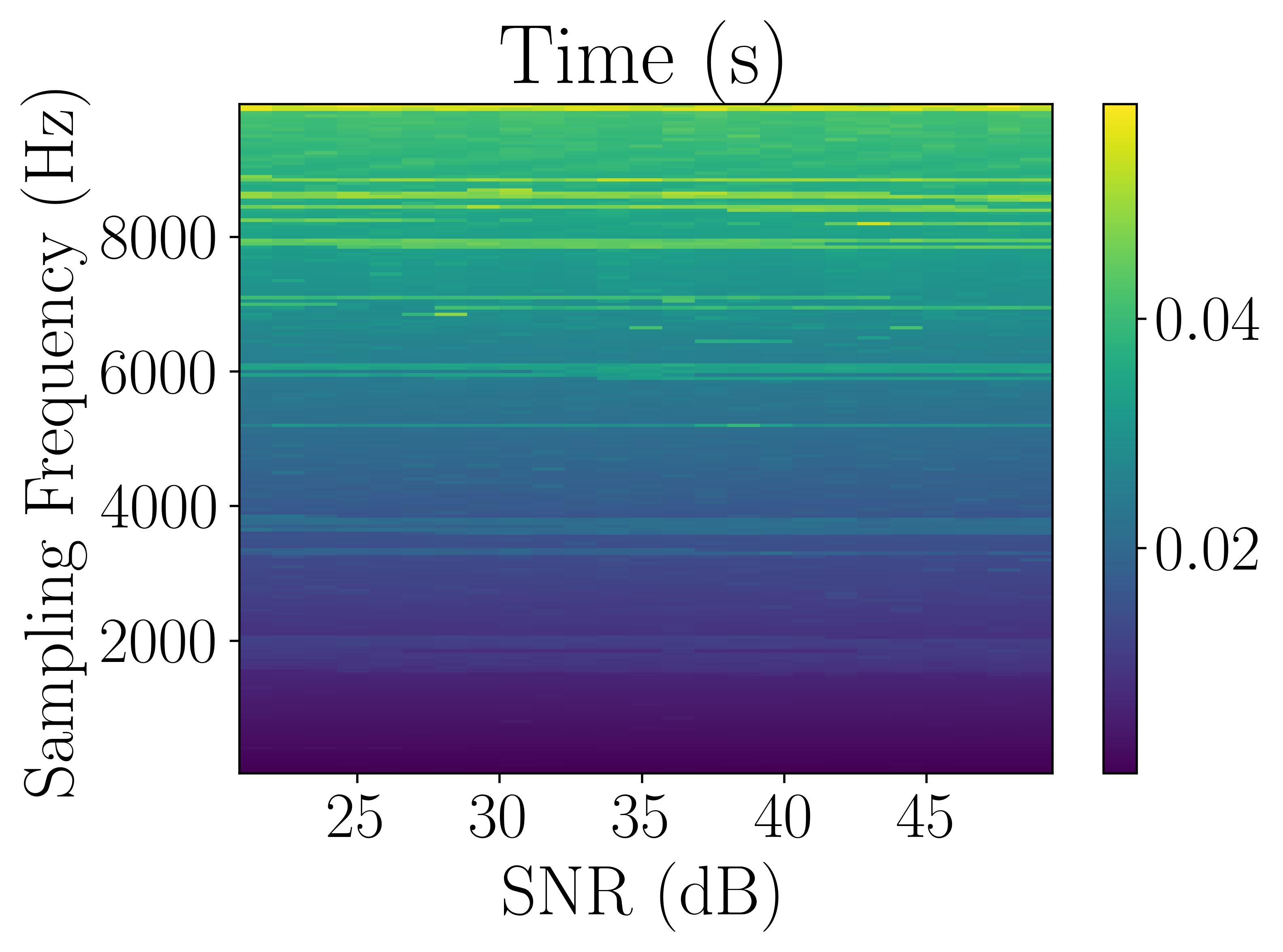}
      \caption{}
\end{subfigure}%
\begin{subfigure}{0.45\textwidth}
  \centering
  \includegraphics[width=1.\linewidth]{./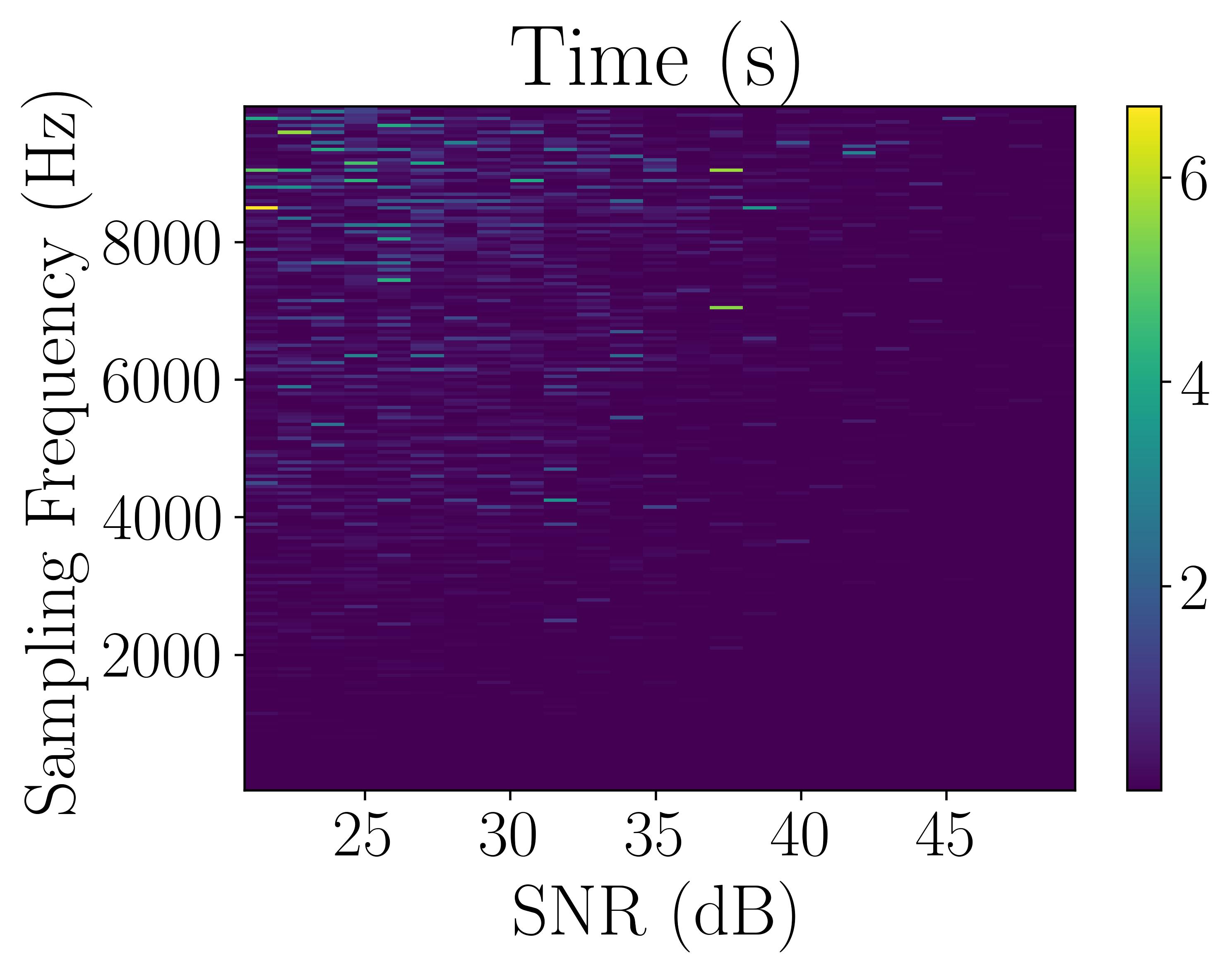}
      \caption{}
\end{subfigure}
\caption{Relative error and time taken for convergence by 0D Persistence (a, c) and Molinaro's algorithm (b, d) for $x_{9}$}
\label{fig:appD_fig8}
\end{figure}
\begin{figure}[!htbp]
\centering
\begin{subfigure}{0.45\textwidth}
  \centering
  \includegraphics[width=1.\linewidth]{./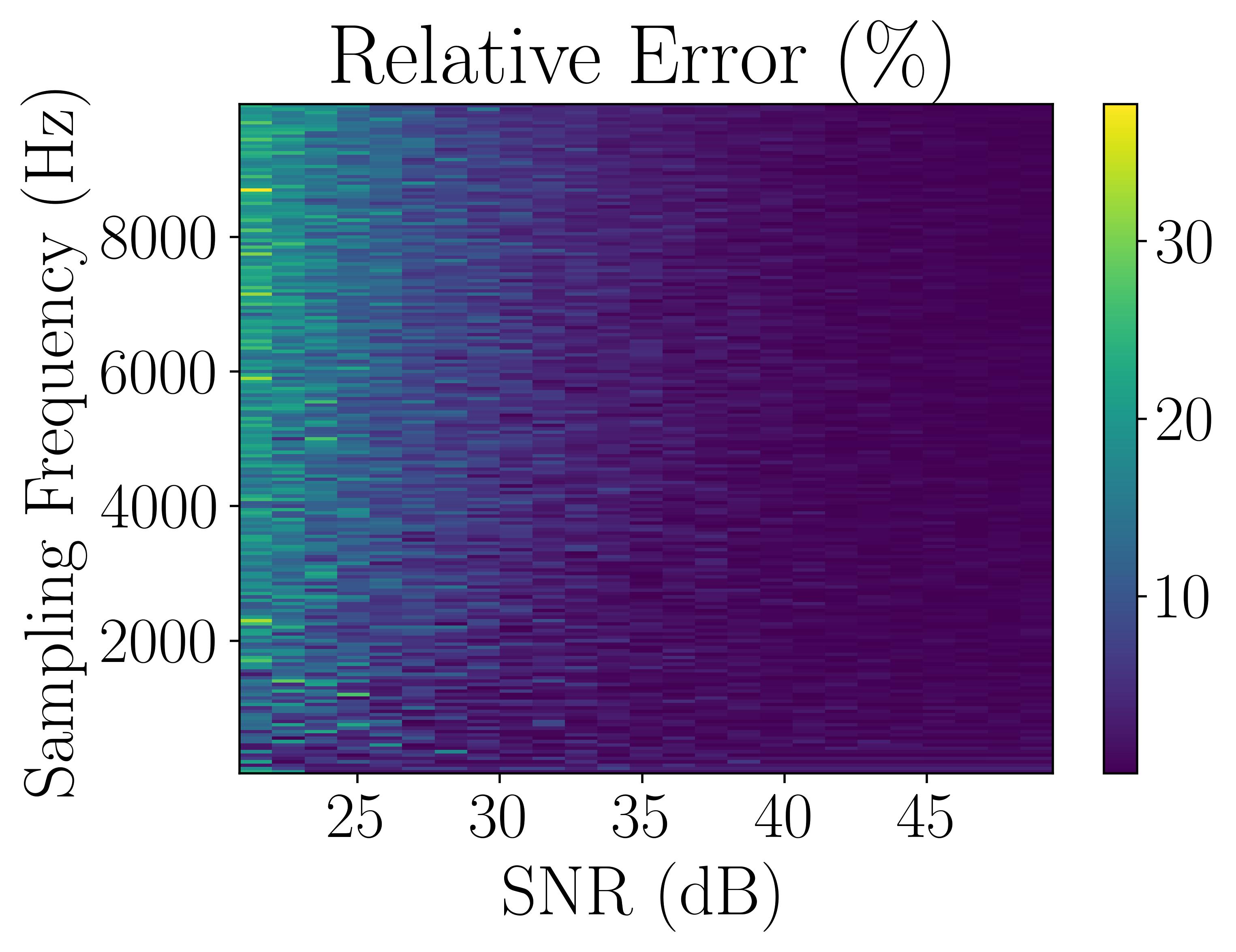}
    \caption{}
\end{subfigure}%
\begin{subfigure}{0.45\textwidth}
  \centering
  \includegraphics[width=1.\linewidth]{./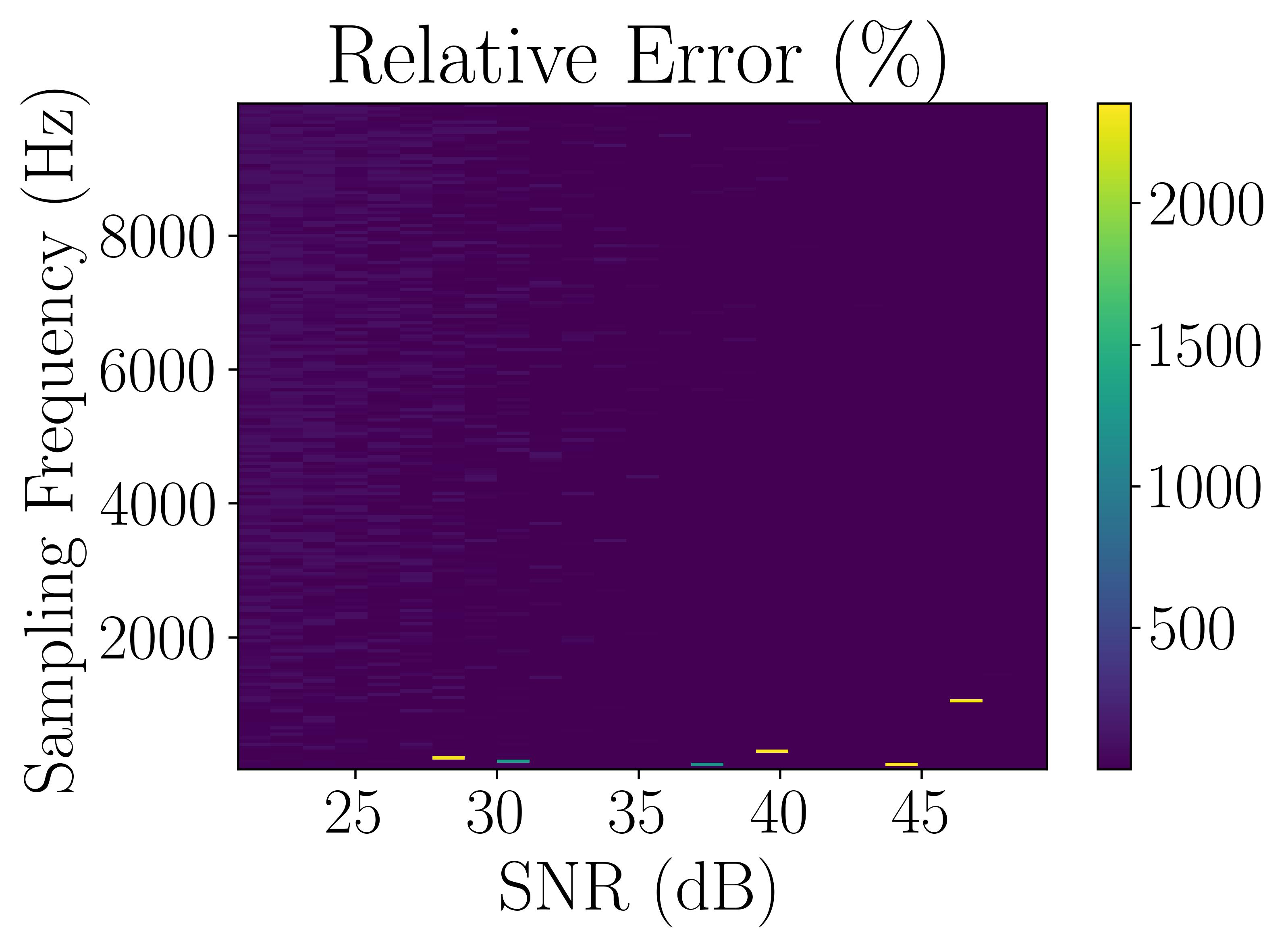}
      \caption{}
\end{subfigure}
\begin{subfigure}{0.45\textwidth}
  \centering
  \includegraphics[width=1.\linewidth]{./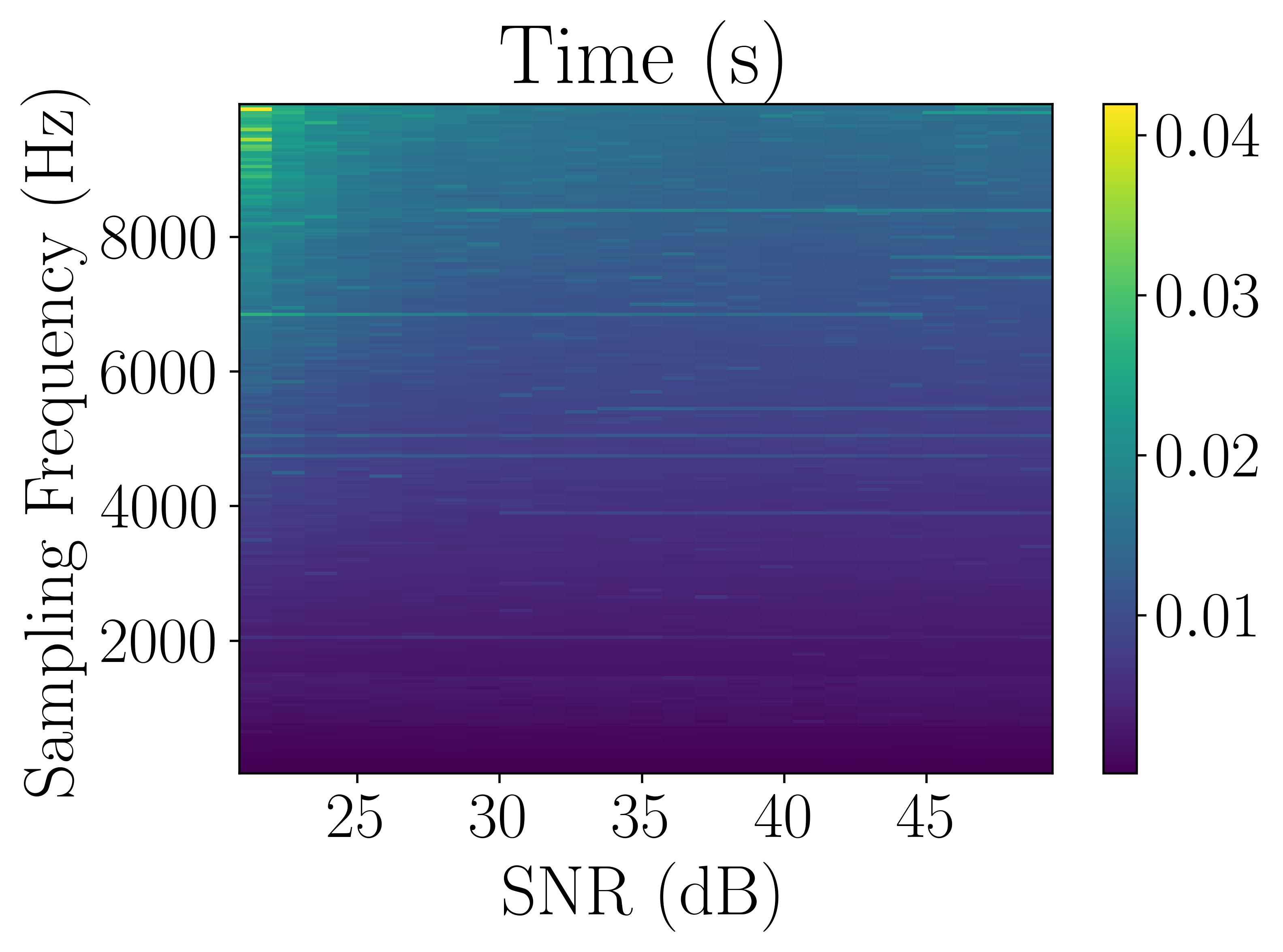}
      \caption{}
\end{subfigure}%
\begin{subfigure}{0.45\textwidth}
  \centering
  \includegraphics[width=1.\linewidth]{./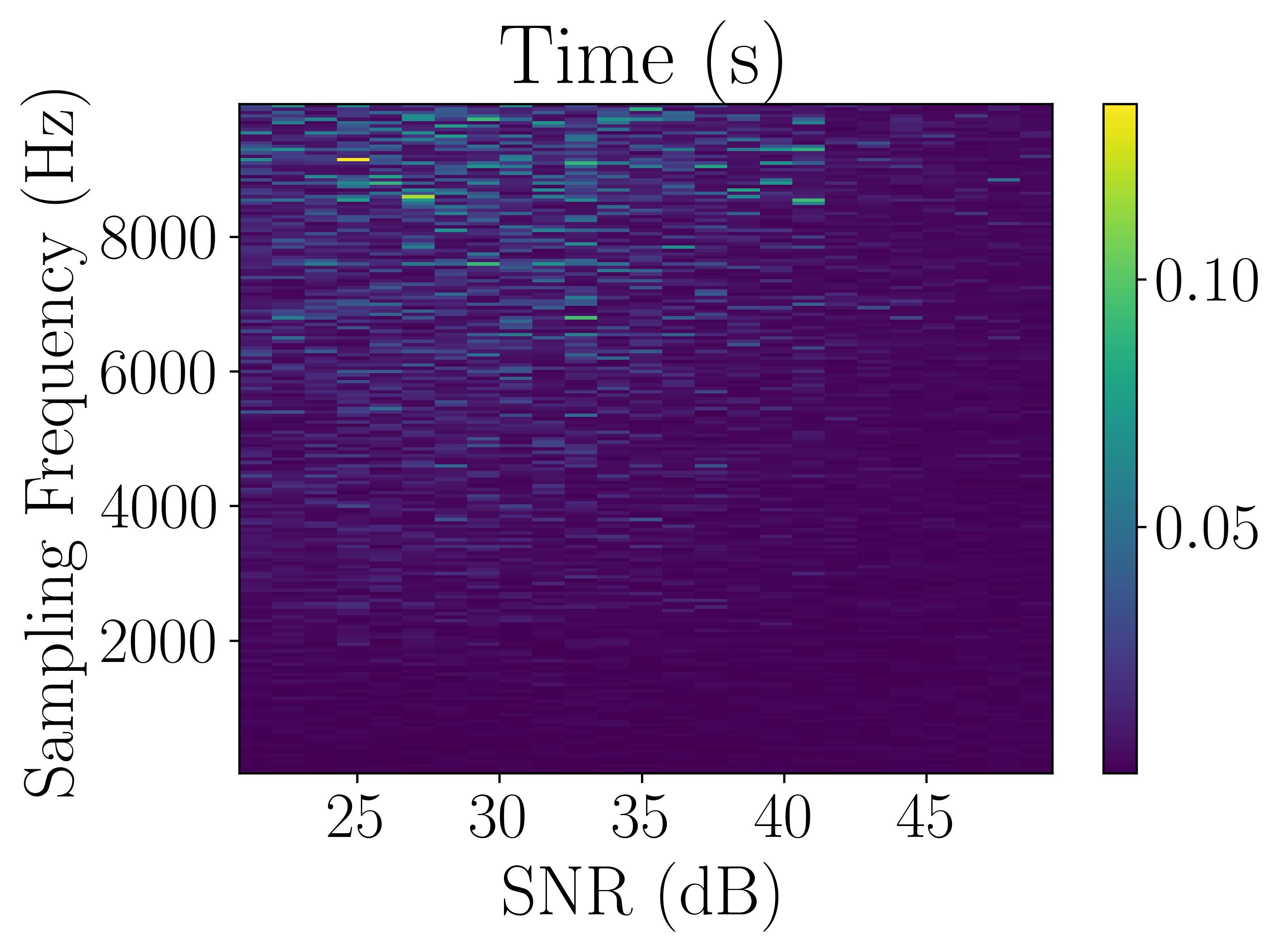}
      \caption{}
\end{subfigure}
\caption{Relative error and time taken for convergence by 0D Persistence (a, c) and Molinaro's algorithm (b, d) for $x_{10}$}
\label{fig:appD_fig9}
\end{figure}
\begin{figure}[!htbp]
\centering
\begin{subfigure}{0.45\textwidth}
  \centering
  \includegraphics[width=1.\linewidth]{./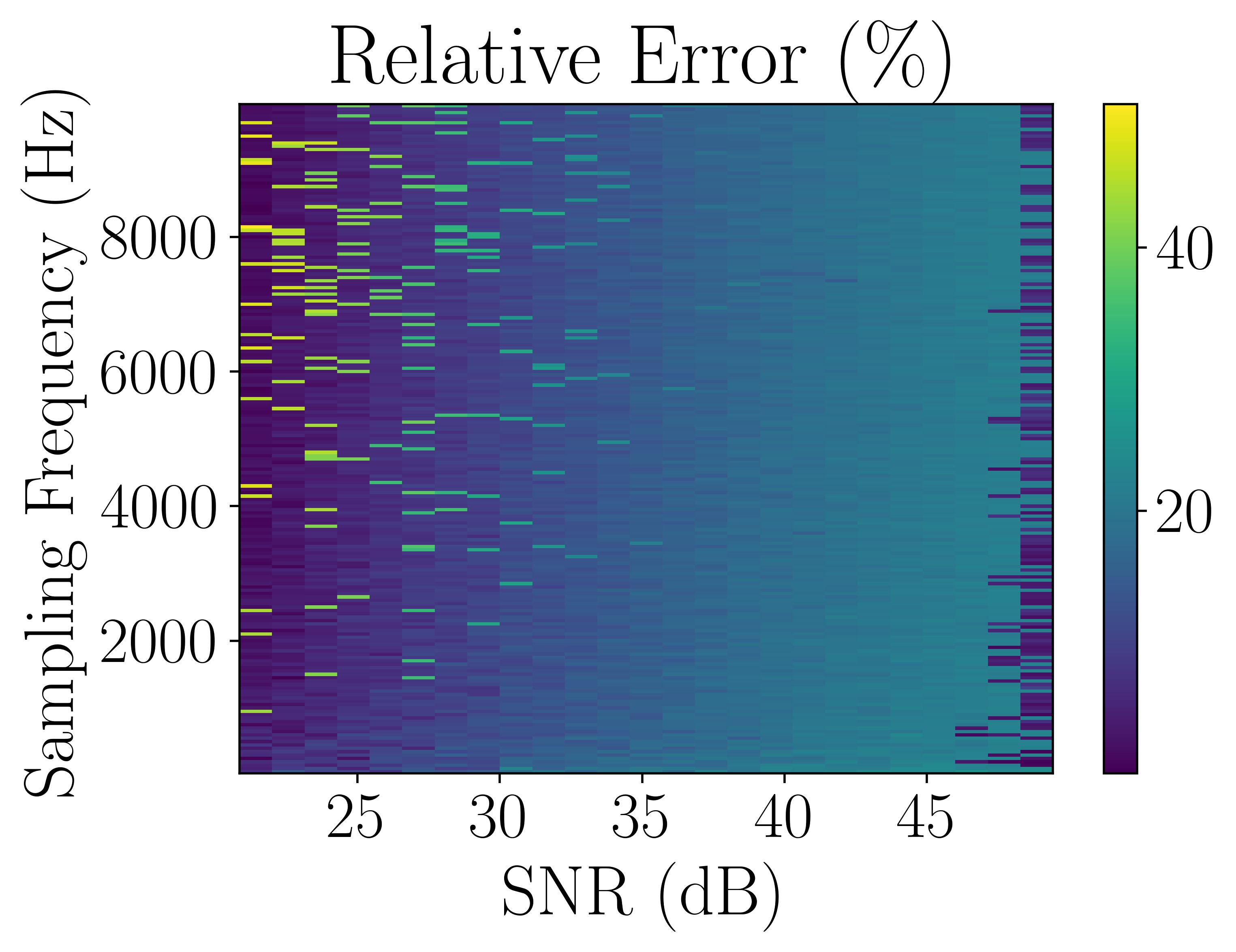}
    \caption{}
\end{subfigure}%
\begin{subfigure}{0.45\textwidth}
  \centering
  \includegraphics[width=1.\linewidth]{./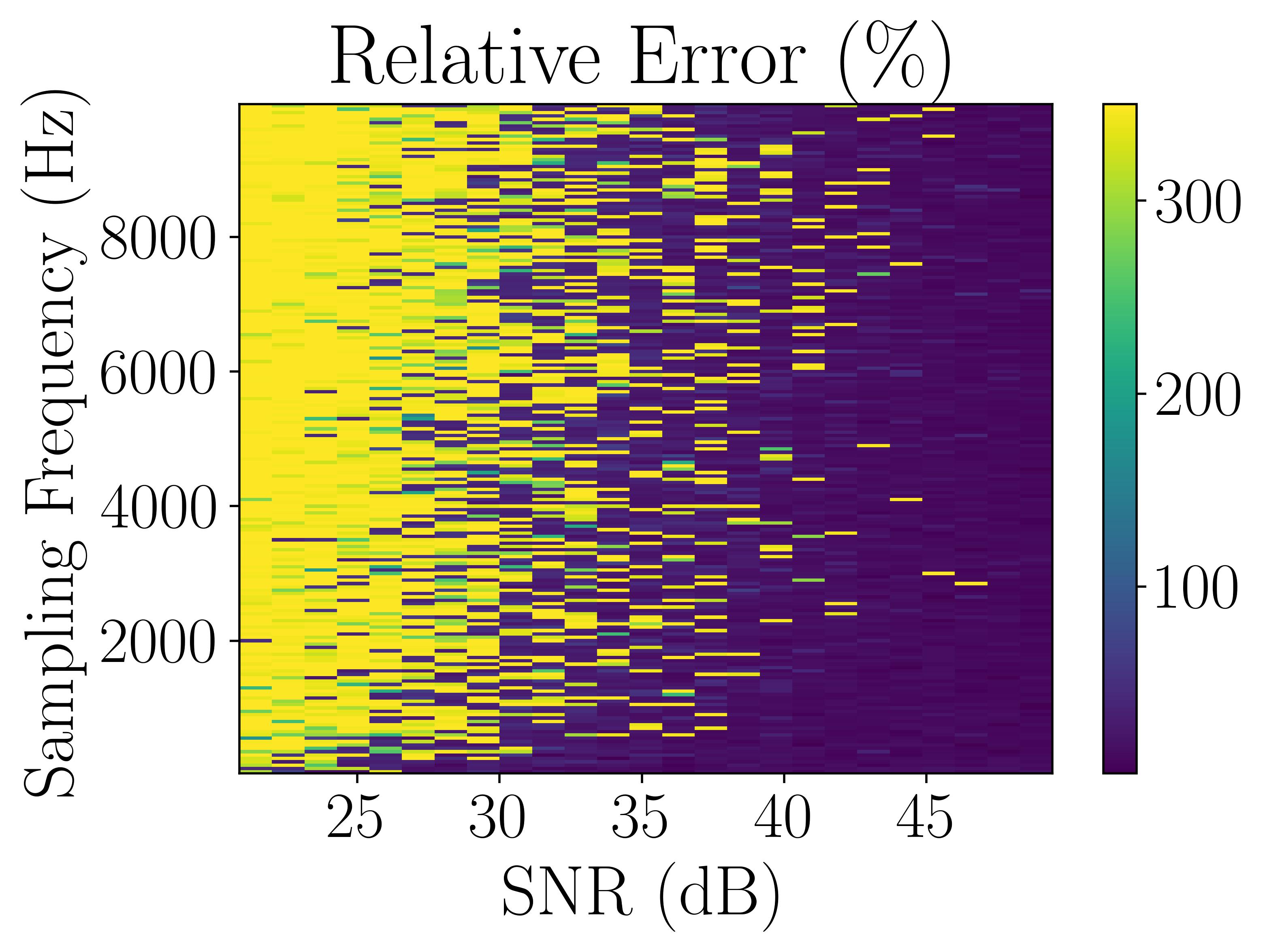}
      \caption{}
\end{subfigure}
\begin{subfigure}{0.45\textwidth}
  \centering
  \includegraphics[width=1.\linewidth]{./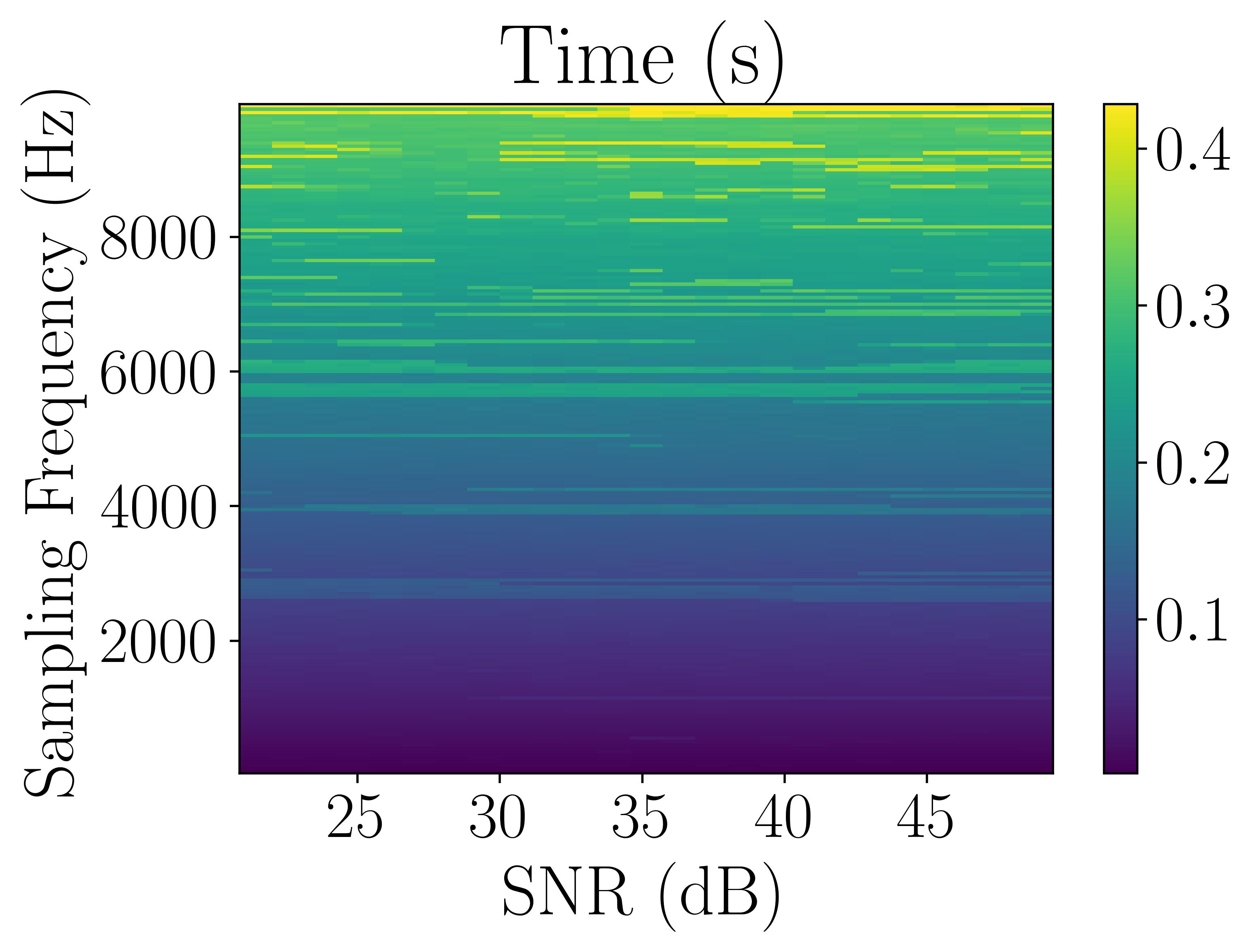}
      \caption{}
\end{subfigure}%
\begin{subfigure}{0.45\textwidth}
  \centering
  \includegraphics[width=1.\linewidth]{./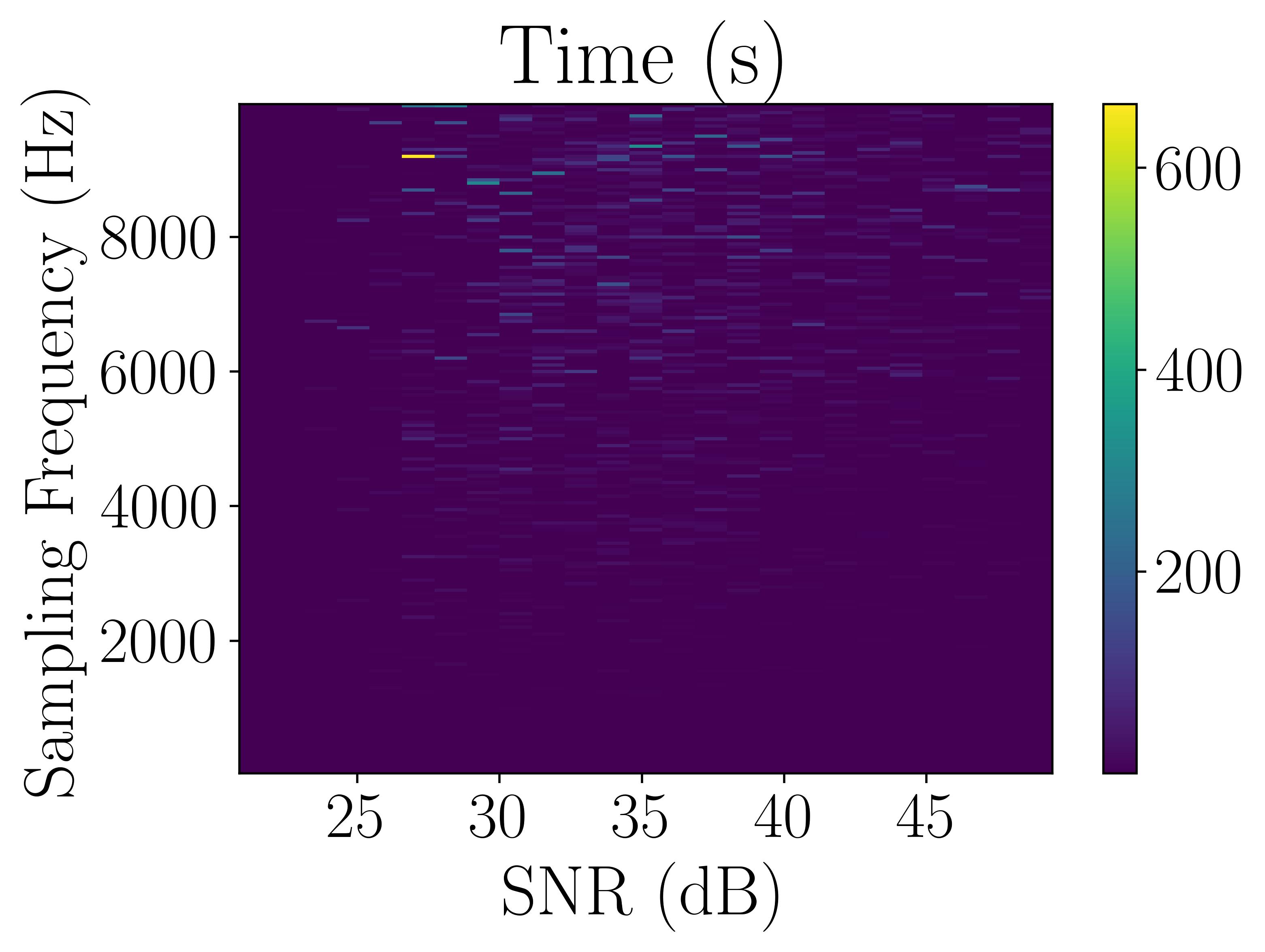}
      \caption{}
\end{subfigure}
\caption{Relative error and time taken for convergence by 0D Persistence (a, c) and Molinaro's algorithm (b, d) for $x_{11}$}
\label{fig:appD_fig10}
\end{figure}
\begin{figure}[!htbp]
\centering
\begin{subfigure}{0.45\textwidth}
  \centering
  \includegraphics[width=1.\linewidth]{./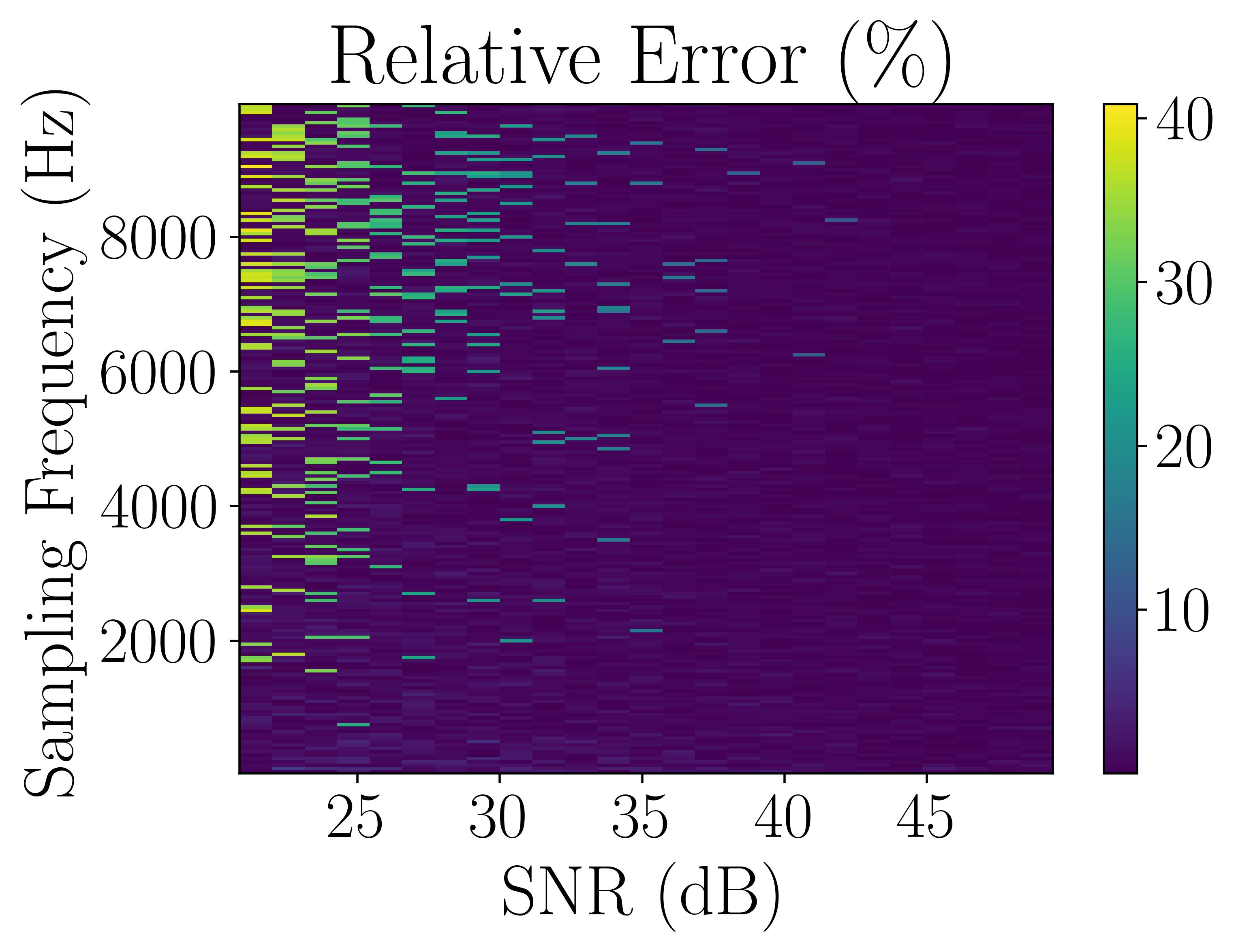}
    \caption{}
\end{subfigure}%
\begin{subfigure}{0.45\textwidth}
  \centering
  \includegraphics[width=1.\linewidth]{./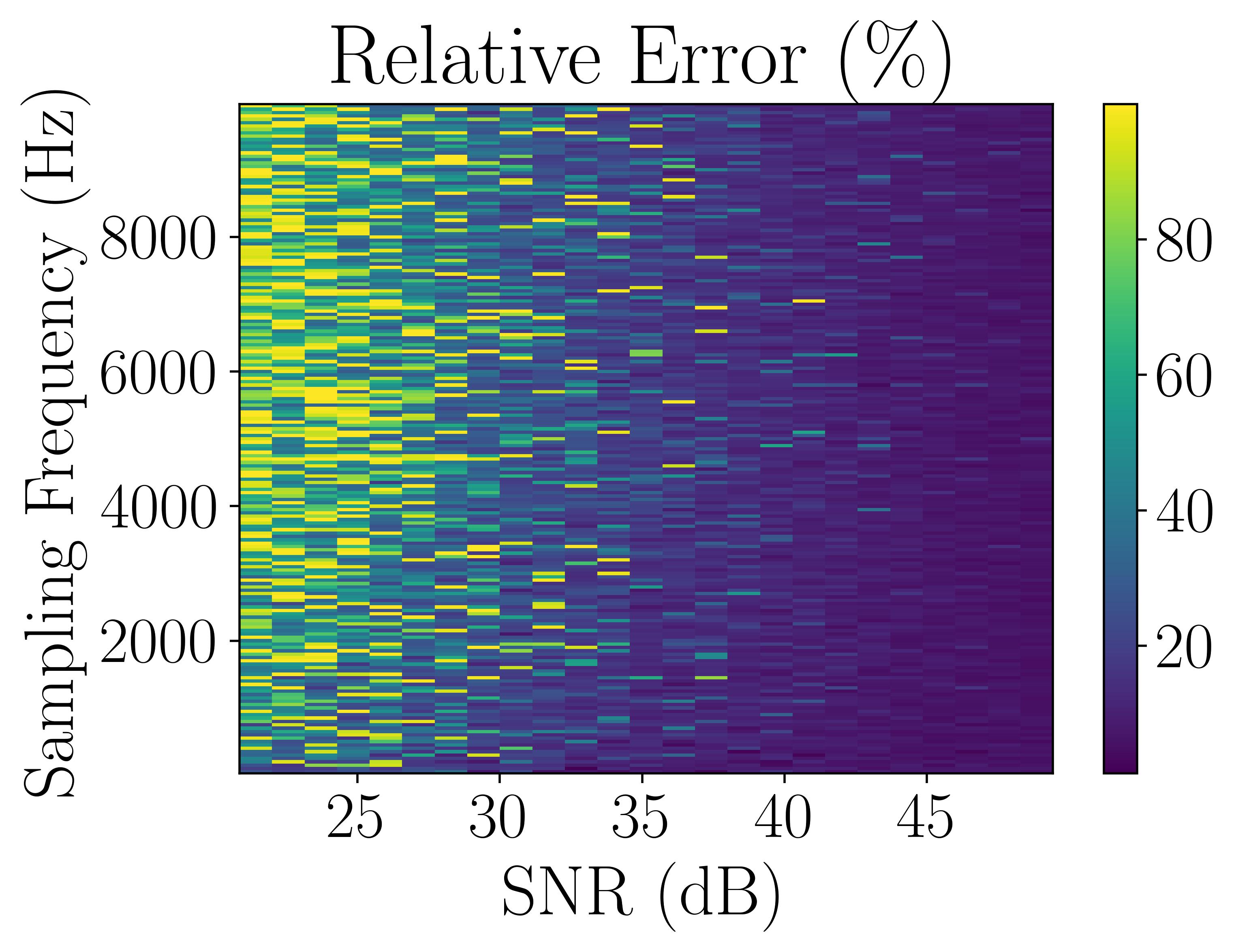}
      \caption{}
\end{subfigure}
\begin{subfigure}{0.45\textwidth}
  \centering
  \includegraphics[width=1.\linewidth]{./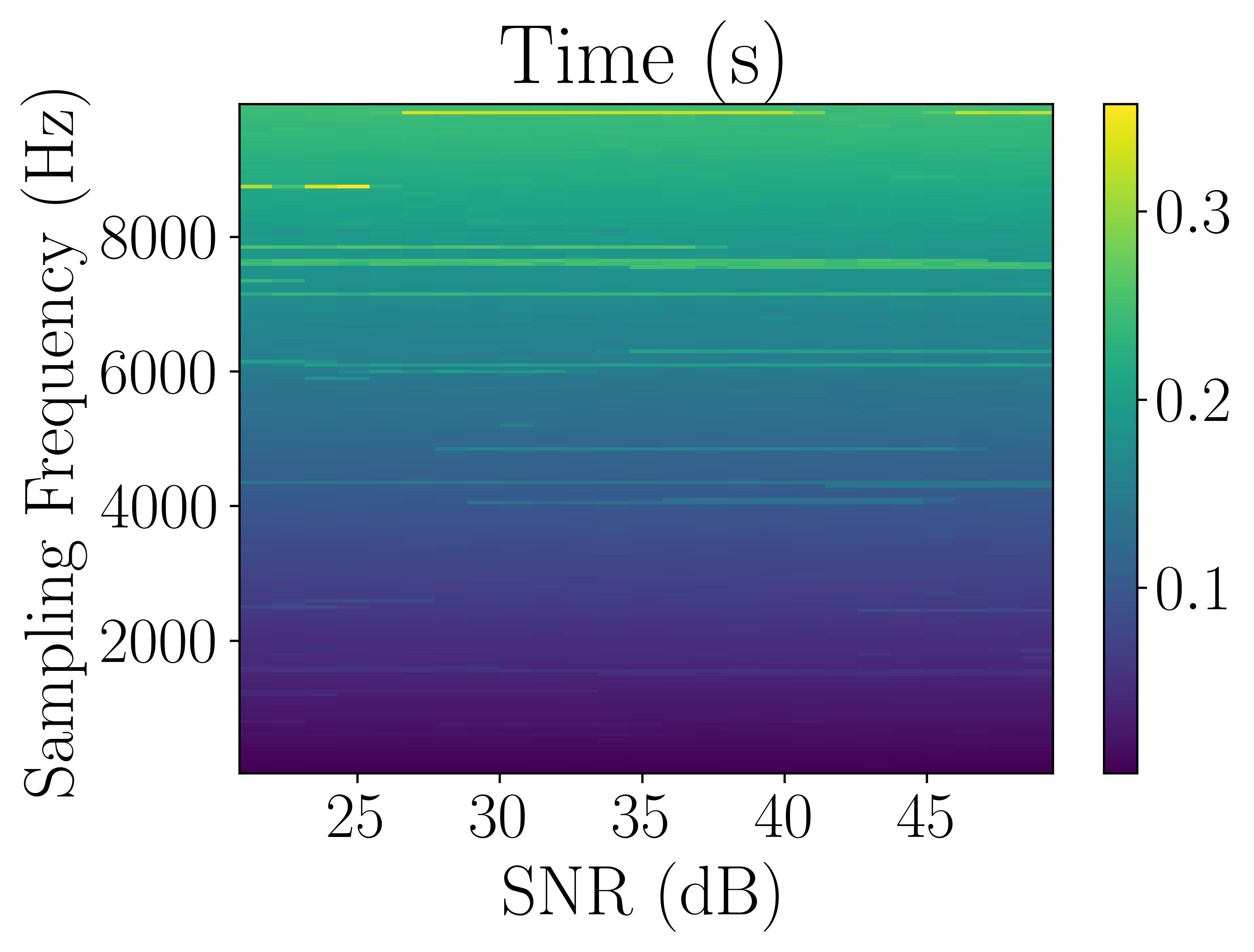}
      \caption{}
\end{subfigure}%
\begin{subfigure}{0.45\textwidth}
  \centering
  \includegraphics[width=1.\linewidth]{./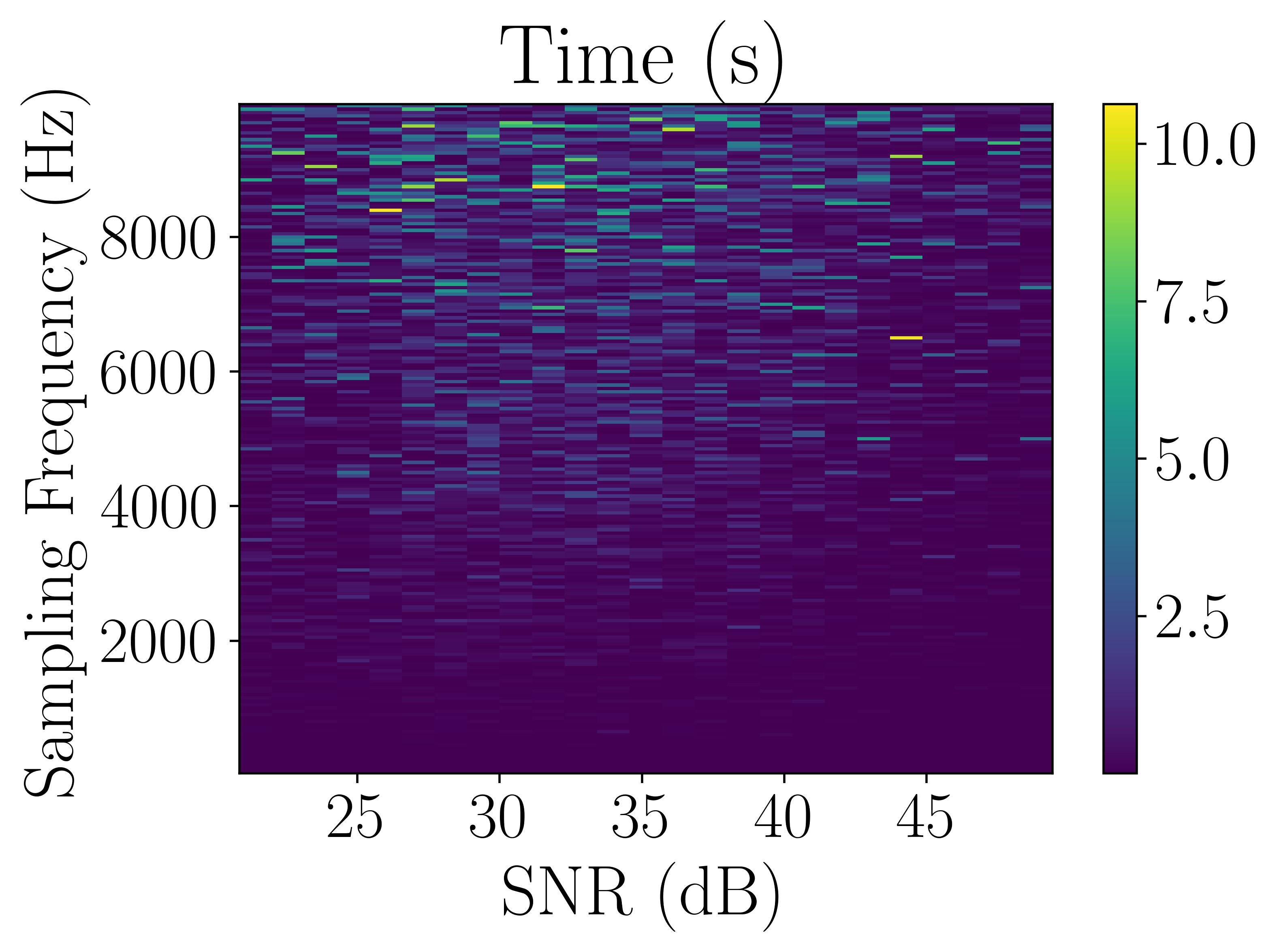}
      \caption{}
\end{subfigure}
\caption{Relative error and time taken for convergence by 0D Persistence (a, c) and Molinaro's algorithm (b, d) for $x_{12}$}
\label{fig:appD_fig11}
\end{figure}
\begin{figure}[!htbp]
\centering
\begin{subfigure}{0.45\textwidth}
  \centering
  \includegraphics[width=1.\linewidth]{././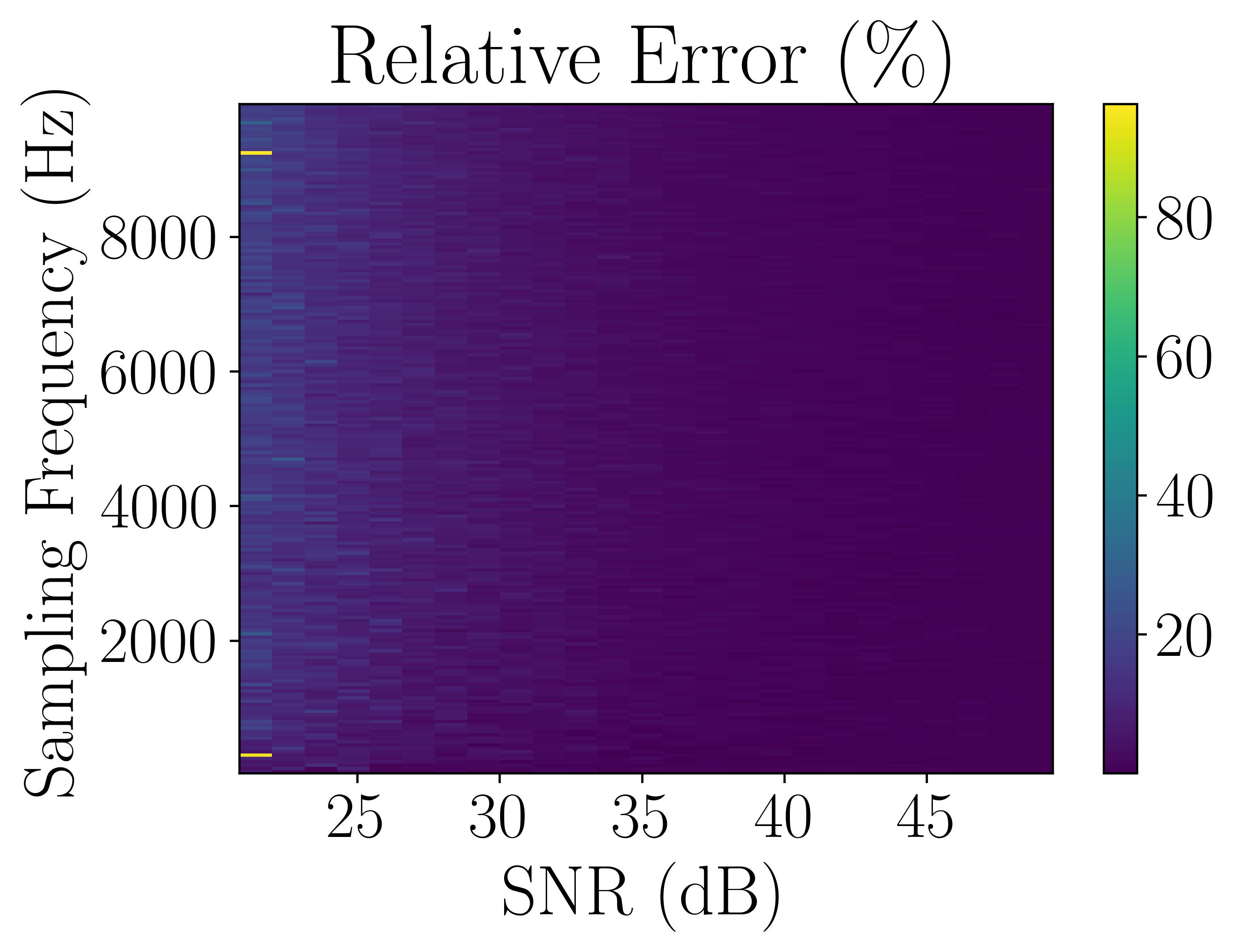}
    \caption{}
\end{subfigure}%
\begin{subfigure}{0.45\textwidth}
  \centering
  \includegraphics[width=1.\linewidth]{././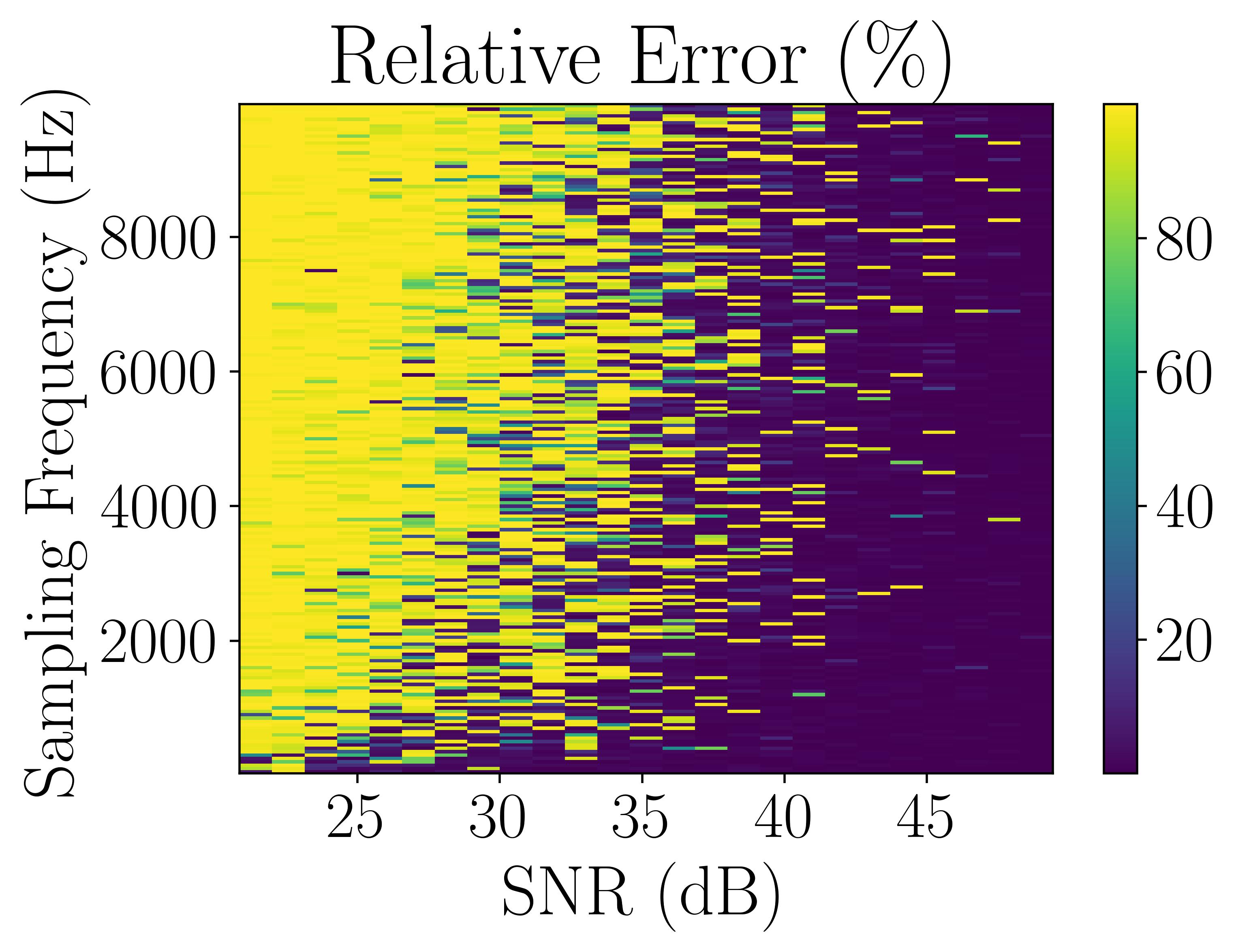}
      \caption{}
\end{subfigure}
\begin{subfigure}{0.45\textwidth}
  \centering
  \includegraphics[width=1.\linewidth]{././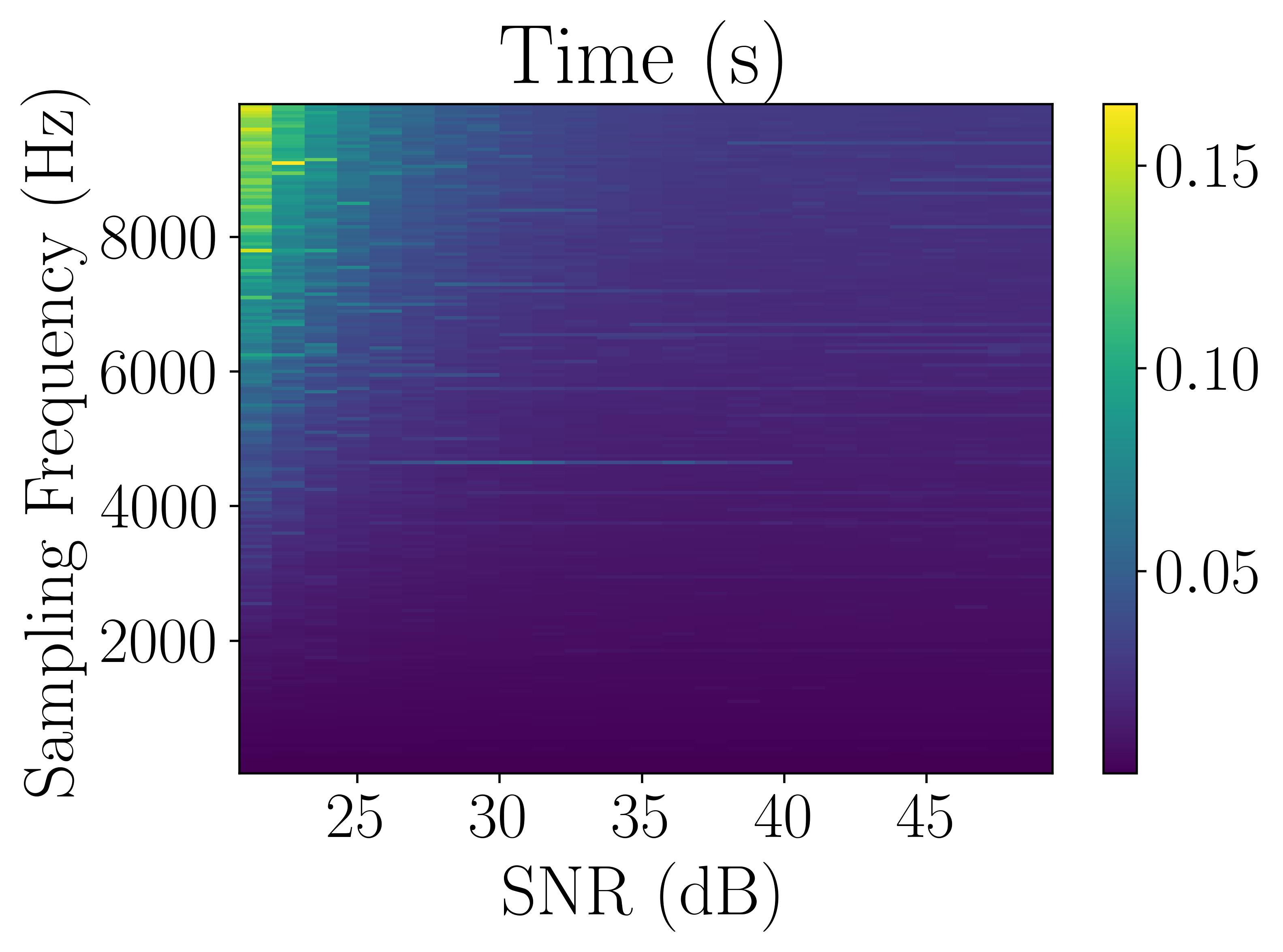}
      \caption{}
\end{subfigure}%
\begin{subfigure}{0.45\textwidth}
  \centering
  \includegraphics[width=1.\linewidth]{././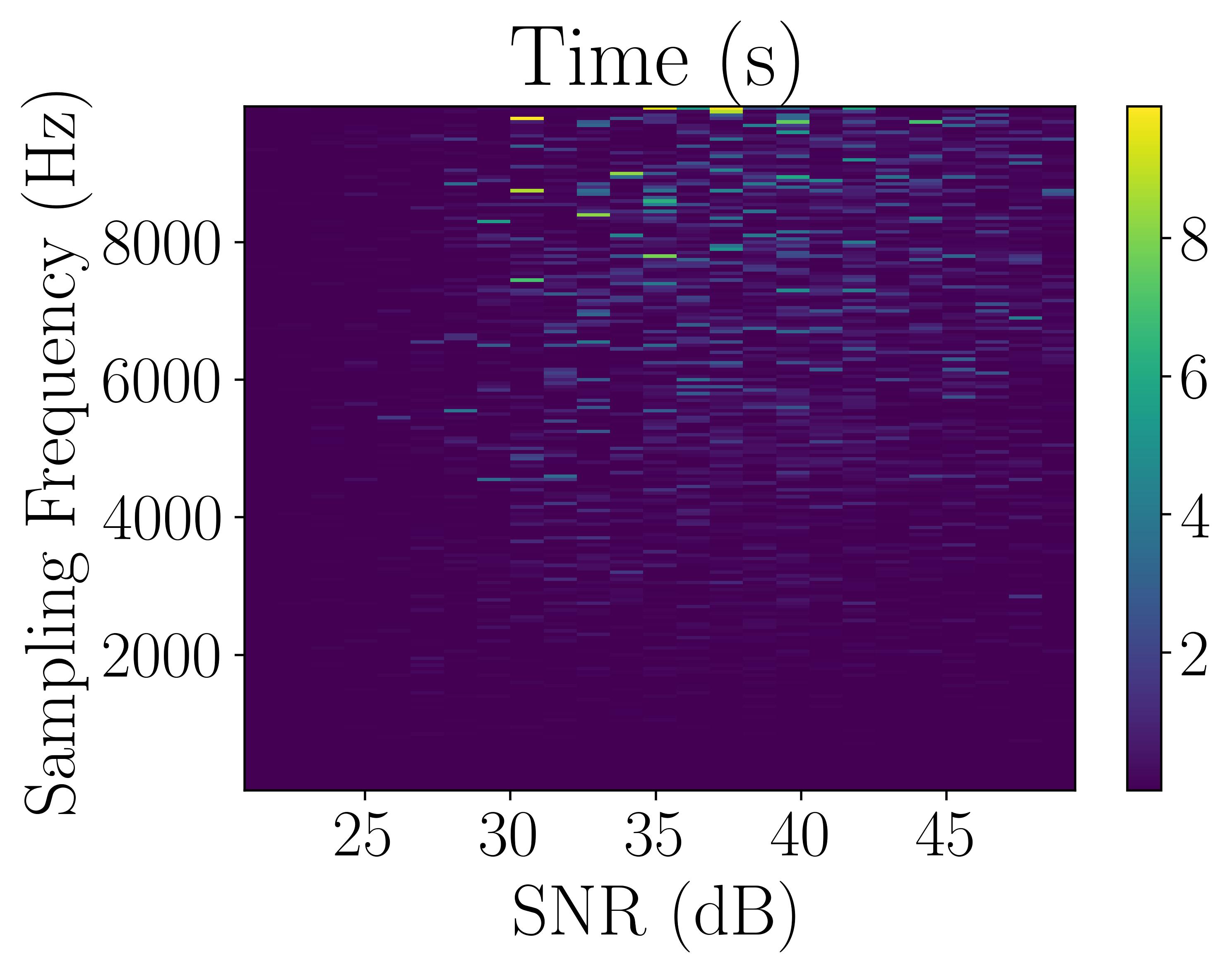}
      \caption{}
\end{subfigure}
\caption{Relative error and time taken for convergence by 0D Persistence (a, c) and Molinaro's algorithm (b, d) for $x_{13}$}
\label{fig:appD_fig12}
\end{figure}
\begin{figure}[!htbp]
\centering
\begin{subfigure}{0.45\textwidth}
  \centering
  \includegraphics[width=1.\linewidth]{./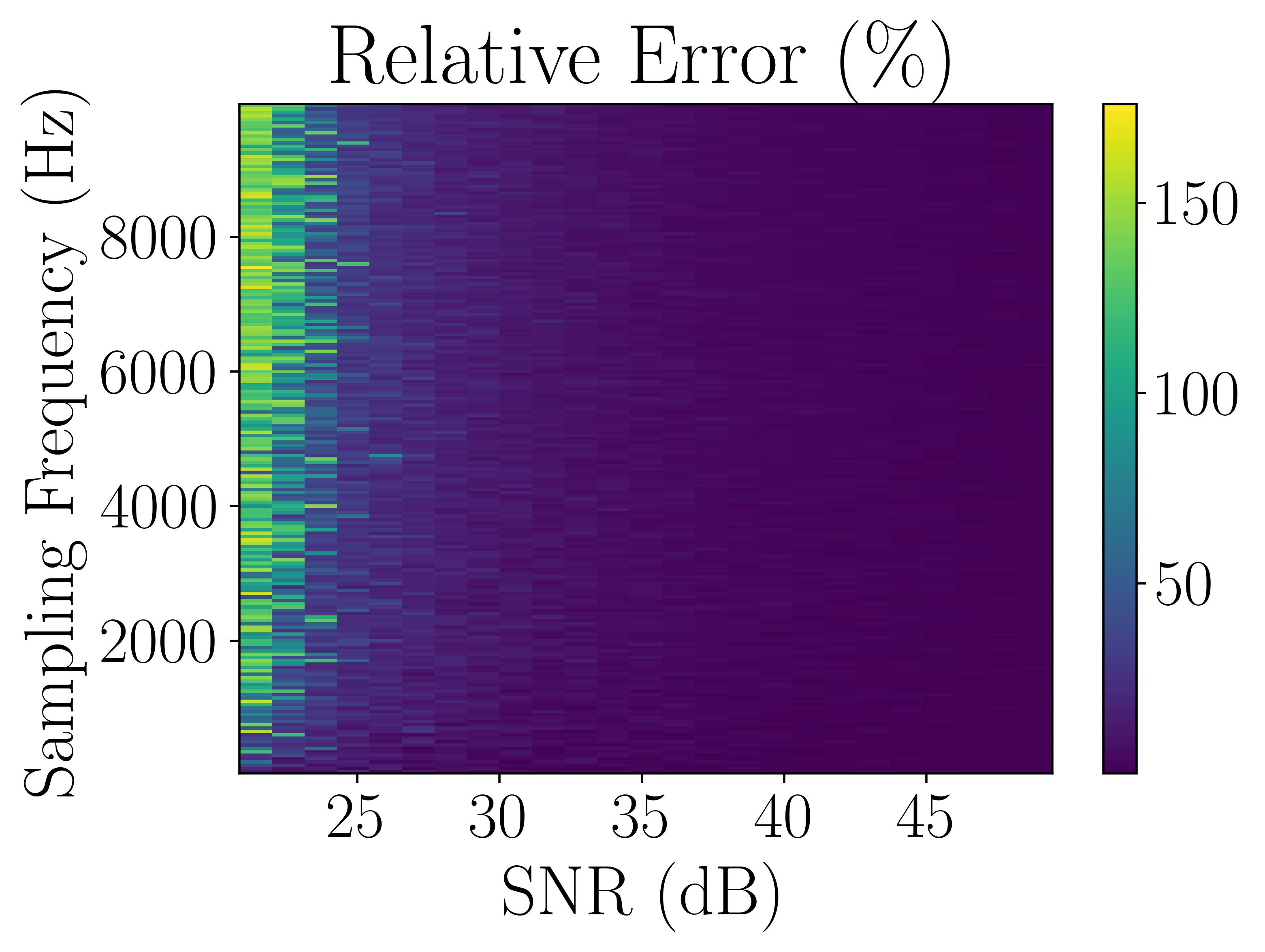}
    \caption{}
\end{subfigure}%
\begin{subfigure}{0.45\textwidth}
  \centering
  \includegraphics[width=1.\linewidth]{./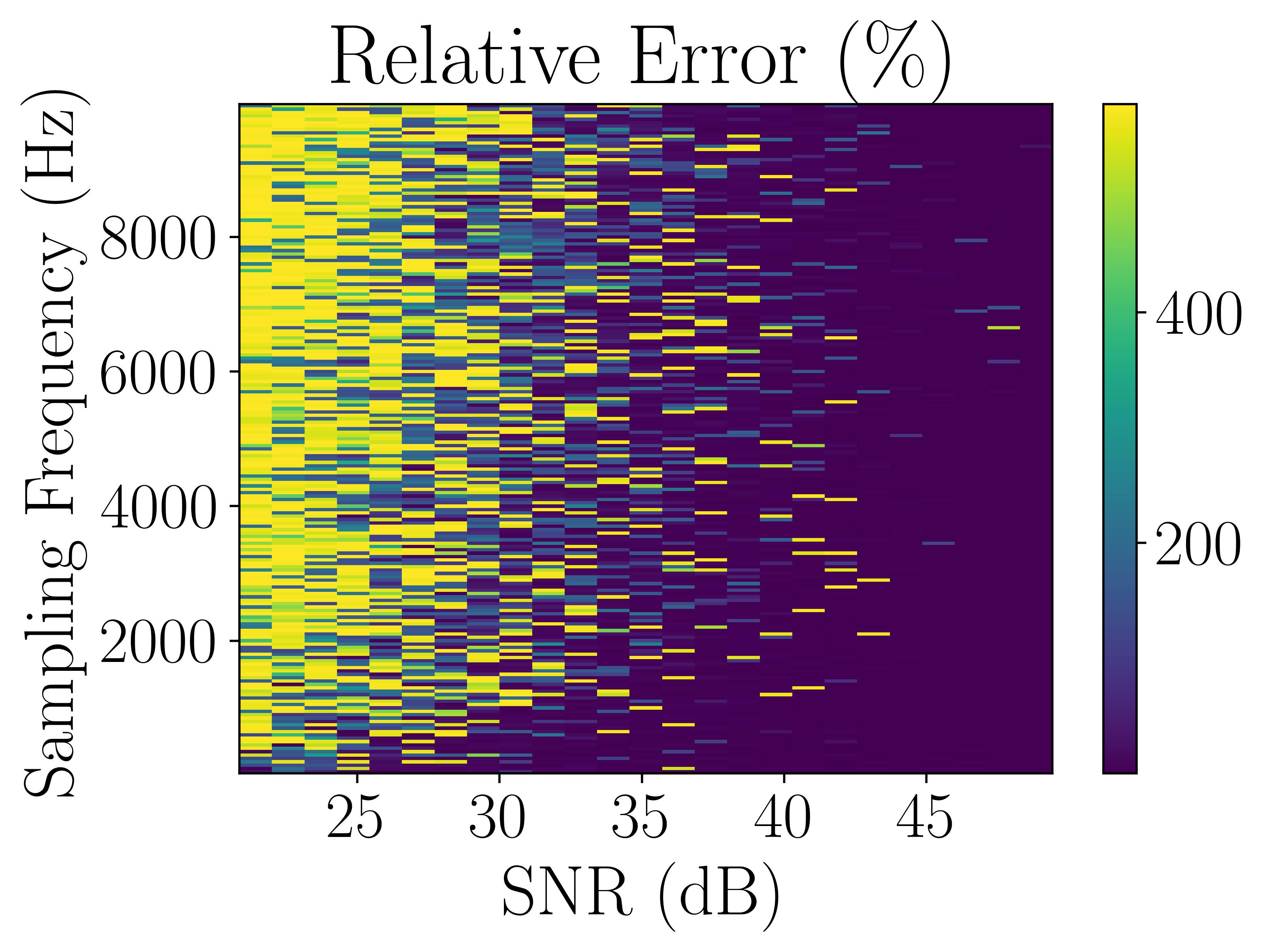}
      \caption{}
\end{subfigure}
\begin{subfigure}{0.45\textwidth}
  \centering
  \includegraphics[width=1.\linewidth]{./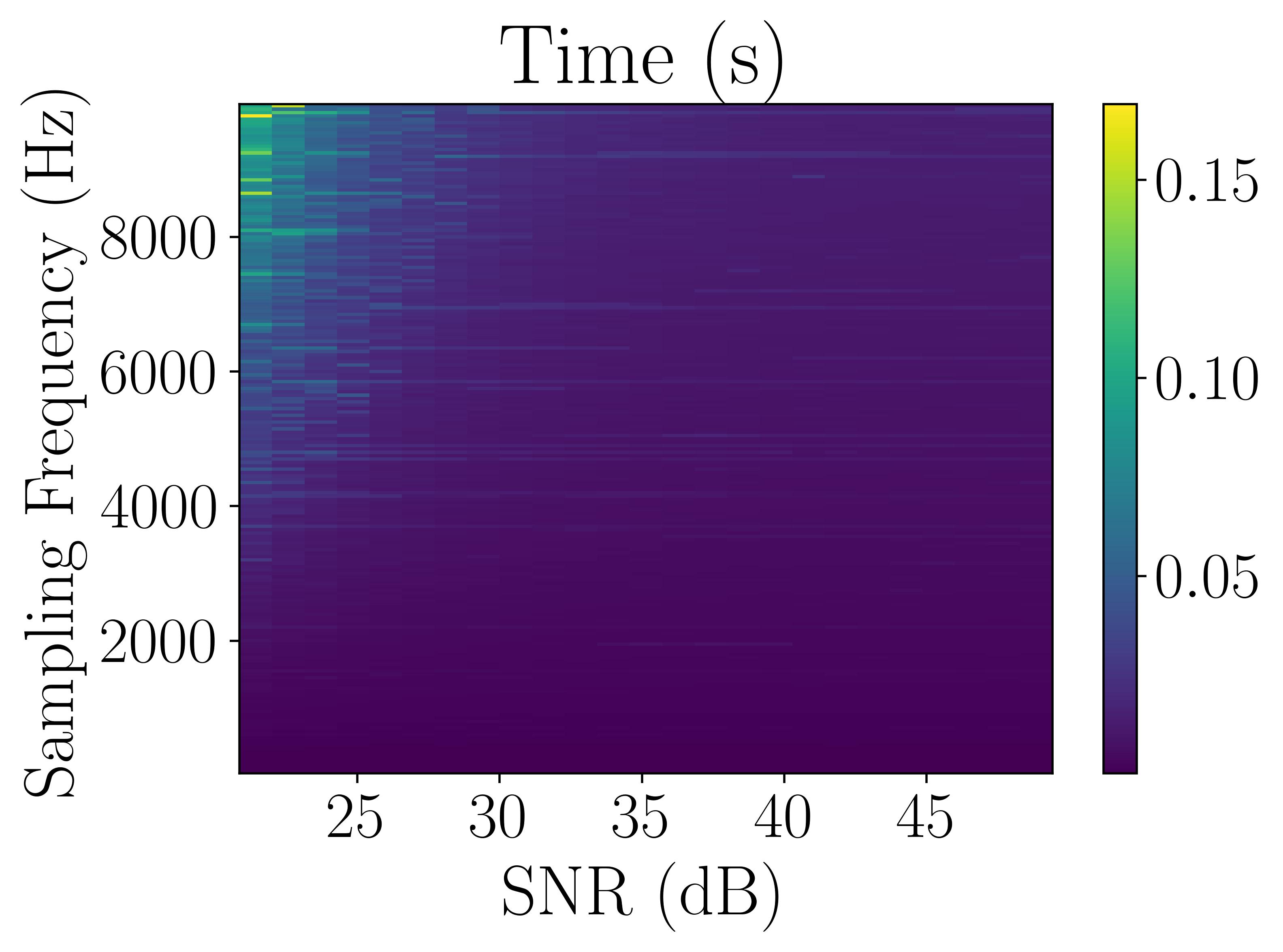}
      \caption{}
\end{subfigure}%
\begin{subfigure}{0.45\textwidth}
  \centering
  \includegraphics[width=1.\linewidth]{./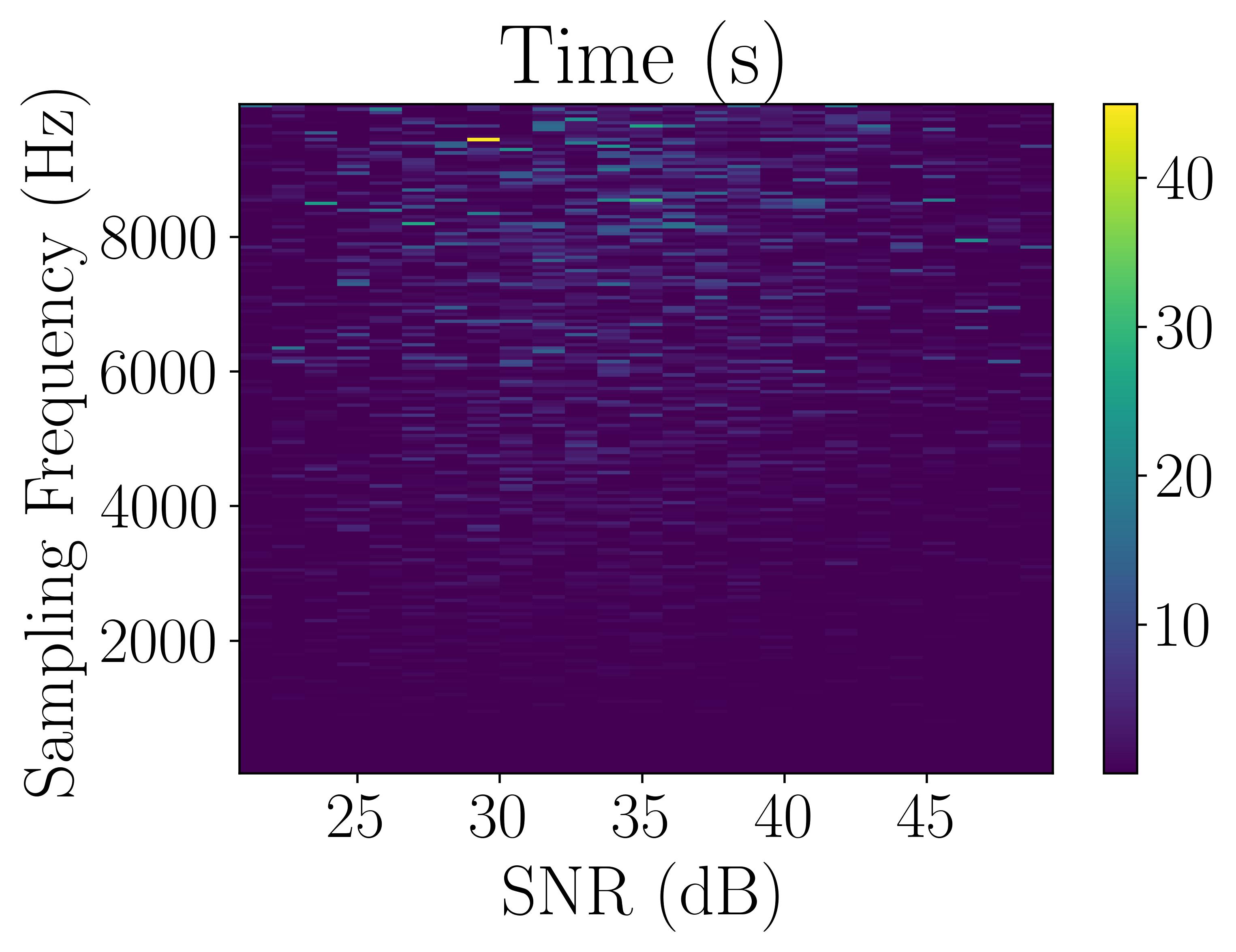}
      \caption{}
\end{subfigure}
\caption{Relative error and time taken for convergence by 0D Persistence (a, c) and Molinaro's algorithm (b, d) for $x_{14}$}
\label{fig:appD_fig13}
\end{figure}
\end{appendices}

\end{document}